\title{Surface Theory: the Classical, the Quantum, and the Holographic}
\author[1,2]{Netta Engelhardt}
\author[3]{and Sebastian Fischetti}
\affiliation[1]{Department of Physics, Princeton University, Princeton, NJ 08544, USA}
\affiliation[2]{Gravity Initiative, Princeton University, Princeton NJ 08544, USA}
\affiliation[3]{Department of Physics, McGill University, Montr\'eal, QC, H3A 2T8, Canada}
\emailAdd{nengelhardt@princeton.edu}
\emailAdd{fischetti@physics.mcgill.ca}
\abstract{
Motivated by the power of subregion/subregion duality for constraining the bulk geometry in gauge/gravity duality, we pursue a comprehensive and systematic approach to the behavior of extremal surfaces under perturbations.  Specifically, we consider modifications to their boundary conditions, to the bulk metric, and to bulk quantum matter fields.  We present a unified framework for treating such perturbations for classical extremal surfaces, classify some of their stability properties, and develop new technology to extend our treatment to quantum extremal surfaces, culminating in an ``equation of quantum extremal deviation''.  Part of the power of this formalism is due to its ability to map geometric statements into the language of elliptic operators; to illustrate, we show that various \textit{a priori} disparate bulk constraints all follow from basic consistency of subregion/subregion duality.  These include familiar properties such as (smeared) versions of the quantum focusing conjecture and the generalized second law, as well as new constraints on \textit{(i)} metric and matter perturbations in spacetimes close to vacuum and \textit{(ii)} the bulk stress tensor in generic (not necessary close to vacuum) spacetimes. This latter constraint is highly reminiscent of a quantum energy inequality.
}
\begin{document}

\maketitle

\section{Introduction}
\label{sec:intro}

To what do basic aspects of QFT map in an emergent semiclassical bulk in the context of gauge/gravity duality~\cite{Mal97, Wit98a, GubKle98}?  Guided by this question, here we focus on constraints imposed by the consistency of subregion/subregion duality, which translates tautological aspects of algebraic QFT, such as the inclusion of operator algebras of nested causal diamonds, into novel statements about bulk geometry.  Since such properties are so basic as to be axiomatic, it is generally expected that the resulting bulk constraints are equally as fundamental. 


Let us briefly remind the reader that subregion/subregion duality states that for any (globally hyperbolic) subregion~$R$ of the boundary, the operator algebra of~$R$ in the boundary CFT is dual to the so-called entanglement wedge~$W_E[R]$ in the bulk (subregion/subregion duality has technically only been proven within the code subspace~\cite{DonHar16,FauLew17}, i.e.~for quantum bulk fields on a fixed background spacetime, but here we assume it holds for the geometry as well).  The entanglement wedge~$W_E[R]$, in turn, is defined from the entanglement structure of the region~$R$.  First, recall that the HRT formula~\cite{RyuTak06,HubRan07} and its quantum generalization~\cite{FauLew13,EngWal14} states that the von Neumann entropy~$S_\mathrm{vN}[R]$ of~$R$ is given by
\be
S_\mathrm{vN}[R] = S_\mathrm{gen}[X_R],
\ee
where~$X_R$ is a surface homologous to~$R$ and a stationary point of the generalized entropy functional~$S_\mathrm{gen}$~\cite{Bek72}
\be
S_\mathrm{gen}[\Sigma] = \frac{\Area[\Sigma]}{4G_N \hbar} + S_\mathrm{out}[\Sigma],
\ee
with~$S_\mathrm{out}[\Sigma]$ the entropy of any quantum fields ``outside''~$\Sigma$ (if more than one such surface exists,~$X_R$ is the one with smallest generalized entropy).  Note that since in the classical~$\hbar \to 0$ limit~$S_\mathrm{gen}$ is dominated by the area functional, in this limit~$X_R$ is just a classical extremal surface.  For this reason, for surfaces which are stationary points of~$S_\mathrm{gen}$ are referred to as \textit{quantum} extremal surfaces.  Finally, the homology constraint requires the existence of an achronal hypersurface~$H_R$ with boundary~$\partial H_R = X_R \cup R$, from which the entanglement wedge is defined as the domain of dependence of~$H_R$:~$W_E[R] = D[H_R]$.

Because subregion/subregion duality relies so heavily on the HRT surface~$X_R$, the first half of this paper is devoted to studying perturbations of such surfaces; that such perturbations must behave in a way which is consistent with subregion/subregion duality imposes nontrivial constraints on the bulk. As a derivation of such constraints involves highly nonlocal and delicate control over the behavior of surfaces, any such analysis requires a sophisticated set of geometric tools for making progress. We therefore provide a toolkit for systematically studying the variations of objects defined on surfaces under small perturbations of said surfaces; these objects include, for instance, geometric tensors like extrinsic curvatures, but also nonlocal constructs such as the generalized entropy and variations thereof.  Consequently, if we are interested in surfaces defined by some ``equations of motion'' -- of particular interest are quantum extremal surfaces -- this formalism provides a way of analyzing how such surfaces vary under modifications (including changes in boundary conditions, in the ambient geometry, and in the state of bulk matter fields).

To that end, we begin in Section~\ref{sec:classical} with a comprehensive, unified review focused on \textit{classical} extremal surfaces that combines assorted aspects of minimal surface theory~\cite{ColMin}, cosmic branes and strings~\cite{LarFro93,Guv93,VisPar96,Car92,Car92b,Car93,BatCar95,BatCar00}, and classical extremal surfaces in AdS/CFT~\cite{Mos17,GhoMis17,LewPar18}.  We focus on non-null surfaces of arbitrary dimension and signature, though we ultimately specialize to codimension-two spacelike extremal surfaces in Lorentzian geometries.  The utility of this presentation stems partly from the provision of a link between the geometric problem of perturbations of surfaces and elliptic equations, which are well-studied; it is this connection that powers many of our later results.  In Section~\ref{sec:quantum} we then derive equations governing the behavior of \textit{quantum} extremal surfaces under perturbations of the state and of their boundary conditions.  To do this, we develop a covariant treatment of distribution-valued tensor functionals on surfaces, including both functional covariant and Lie derivatives.  Finally, using the aforementioned connection between extremal surface perturbations and elliptic equations, we complete our analysis in Section~\ref{sec:stability} by describing different notions of stability of extremal surfaces.  This permits, for instance, the prescription of a rigorous mathematical criterion for the existence of extremal surfaces under spacetime perturbations.

We then proceed to use this formalism towards its described purpose: deriving bulk constraints from subregion/subregion duality.  This duality manifests itself in two key ways.  The first follows from the observation that the causal wedge~$W_C[R]$ of~$R$ -- defined as the intersection of the past and future of~$R$ in the bulk -- is dual to one-point functions on~$R$ (since it can be recovered by essentially ``integrating in'' the equations of motion from the boundary).  Because these one-point functions are a proper subset of the full information accessible from the state and operator algebra of~$R$, subregion/subregion duality implies that~$W_C[R]$ must be contained within~$W_E[R]$; we refer to this as \textit{causal wedge inclusion} (CWI).  More formally, if~$\Acal_R$ is the operator algebra of~$R$, then the set of operators that compute one-point functions in~$R$ is a proper subset of~$\Acal_R$.  It follows, in particular, that the causal surface~$C_R$, which is essentially the ``rim'' of~$W_C[R]$, must be achronally-separated from the HRT surface~$X_R$, as shown in Figure~\ref{subfig:CWI}.

\begin{figure}[t]
\centering
\subfigure[]{
\includegraphics[page=1,width=0.3\textwidth]{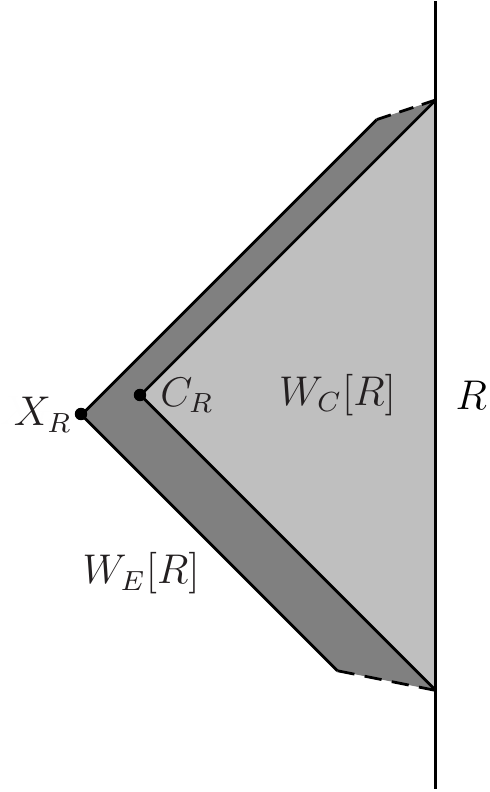}
\label{subfig:CWI}
}
\hspace{2cm}
\subfigure[]{
\includegraphics[page=2,width=0.3\textwidth]{Figures-pics}
\label{subfig:EWN}
}
\caption{\subref{subfig:CWI}: CWI requires the causal wedge~$W_C[R]$ to lie inside the entanglement wedge~$W_E[R]$, and therefore the causal surface~$C_R$ must be achronally separated and to the outside of the HRT surface~$X_R$.  \subref{subfig:EWN}: EWN requires the entanglement wedge~$W_E[R]$ to shrink into itself as the boundary region~$R$ is shrunk; this implies that the HRT surface~$X_R$ must move in an achronal direction towards~$R$.  (The dashed lines indicate caustics and intersections of generators of~$\partial W_E[R]$.)}
\label{figs:nesting}
\end{figure}

The second key manifestation of subregion/subregion duality consistency  is \textit{entanglement wedge nesting} (EWN). This follows from the inclusion of operator algebras of nested causal diamonds on the boundary: if $D[R_1] \subset D[R_2]$, then the algebras nest as well, i.e.~${\cal A}_{R_1}\subset {\cal A}_{R_2}$.  Since subregion/subregion duality requires that the bulk duals to the algebras~$\Acal_{R_1}$ and~$\Acal_{R_2}$ be the entanglement wedges~$W_E[R_1]$ and~$W_E[R_2]$, the nesting of~$\Acal_{R_1}$ and~$\Acal_{R_2}$ implies the nesting of the entanglement wedges:~$W_E[R_1] \subset W_E[R_2]$.  This property is sketched in Figure~\ref{subfig:EWN}. (In fact, under appropriate implicit regularity assumptions~\cite{AkeKoe16} showed that EWN implies CWI; here we treat the two as separate constraints both for pedagogical and computational clarity and also to minimize our assumptions on the bulk.) {\color{white} Entanglement wedges nest like vampires.}

Because CWI and EWN are constraints on how the bulk surfaces~$X_R$ and~$C_R$ must behave, our formalism is precisely the necessary tool for a systematic investigation of constraints that they impose on the bulk geometry.  Specifically, in Section~\ref{sec:causal} we show that, for spacetimes which are a linear perturbation away from pure AdS, classical CWI enforces a highly nontrivial condition on the metric perturbation~$\delta g_{ab}$.  This condition is essentially a refined version of our boundary causality condition (BCC)~\cite{EngFis16}, which constrained the averaged ``tilting'' of a light cone along a complete null geodesic due to the perturbation~$\delta g_{ab}$.  The refined condition that we obtain here instead relates the averaged tilting of light cones along incomplete (null) generators of Rindler horizons to the perturbation of their bifurcation surface.  While we do not give a physical interpretation of this constraint, we note that~\cite{AfkHar17} found that the BCC is intimately related to the chaos bound; conceivably our new constraint may be related to a refined version thereof.  Next, by treating quantum fields in a \textit{fixed} pure AdS spacetime, we also show that CWI enforces a smeared generalized second law (GSL) -- the increase of $S_{gen}$ along slices of a causal horizon -- along Poincar\'e horizons of the bulk.  This latter result is quite pleasantly consistent with the fact that the GSL enforces CWI in this context~\cite{EngWal14}; hence we find that an ``averaged'' version of the converse is true.

Finally, in Section~\ref{sec:EWN} we use maximum principles in elliptic operator theory to deduce more constraints on the bulk from EWN.  First we illustrate the power of the formalism by rederiving the known result that the NEC implies EWN (at leading order in~$1/G_N \hbar$) in a novel method that requires fewer assumptions than the proof of~\cite{EngWal13} and different (incomparable) assumptions from those used in~\cite{Wal12}.  We then derive a general constraint on any classical bulk spacetime: that on any HRT surface~$X_R$ with a null normal~$k^a$, the quantity~$\sigma_k^2 + R_{ab} k^a k^b$ cannot be everywhere-negative, where~$\sigma_k^2$ is the shear of the null congruence generated by~$k^a$ and~$R_{ab}$ is the Ricci tensor.  This combination of terms is what causes nearby null geodesics to ``focus'', and is therefore what makes gravity ``attractive''; it is in this very heuristic sense that we may interpret our result as an energy inequality.  This is strongly reminiscent of spatial quantum energy inequalities, which require negative local energy densities to be accompanied by compensating positive energies elsewhere.  Here we find that ( assuming the Einstein equation), an HRT surface cannot sustain a region of negative $T_{ab}k^{a}k^{b}+\sigma_{k}^{2}/(8\pi G_N)$ without this quantity being positive elsewhere on it.  We emphasize that this result is spacetime-independent, in the sense that we make no assumptions about our spacetime being perturbatively away from the vacuum or weakly curved.  However, when we do restrict to perturbations of the vacuum, we can get a more quantitative constraint: we find that for classical perturbations of pure AdS, EWN imposes that~$R_{ab} k^a k^b$ must be non-negative when integrated over the HRT surface of any ball-shaped region of the boundary (for an appropriate choice of~$k^a$); this is closely related to positive energy theorems obtained from entropic inequalities~\cite{LasRab14,LasLin16,NeuSar18}.  In this perturbative context, we can moreover include quantum corrections to show that EWN enforces a smeared version of the quantum focusing conjecture~\cite{BouFis15} -- which requires appropriate second functional derivatives of~$S_\mathrm{gen}$ in a null direction to be non-positive -- on these HRT surfaces; this last result can be obtained (in a non-holographic context) under a different set of assumptions from the quantum null energy condition on Killing horizons~\cite{BouFis15b}.

Let us make some brief comments.  First, the applications we present in Section~\ref{sec:EWN} are just an example of how the formalism that we present, which relates perturbations of extremal surfaces to elliptic operators, can be used to deduce new information about the bulk; they are far from an exhaustive study of the applications of elliptic operator theory in this context.  For instance, the first application (that we are aware of) of elliptic operator theory to classical extremal surfaces via subregion/subregion duality may be found in~\cite{EngWal17b}, which gave a holographic account of dynamical black hole entropy; a more recent application of elliptic operator theory via the equation of \textit{classical} extremal deviation to bulk reconstruction may be found in~\cite{BaoCao19}. Since this article is the first presentation of the equation of \textit{quantum} extremal deviation, the results discussed in Section~\ref{sec:EWN} constitute the first applications of elliptic operator theory to subregion/subregion duality in the semiclassical regime.  Second, it is worth remarking on the inverse investigation, which assumes consistency of subregion/subregion duality in the bulk and derives constraints on the boundary theory (see e.g.~\cite{KoeLei15,AkeKoe16,KoeLei17,AkeCha17}). This has been used to derive, for instance, the boundary quantum null energy condition~\cite{KoeLei15} before it was broadly derived for quantum field theories in flat space~\cite{BalFau17}. While our formalism is developed with a view towards constraining the bulk, we see no reason why it could not also be used to further the investigation of the boundary physics as well.

\subsection{Surface Theory: A User's Manual}
\label{subsec:manual}

Here we give a streamlined survey of the results that we review and develop in Sections~\ref{sec:classical},~\ref{sec:quantum}, and~\ref{sec:stability}, which essentially answer the following questions: how does an extremal surface -- either classical or quantum -- behave under perturbations to its boundary conditions (if it has a boundary) or to the geometry in which it is embedded?  Under what conditions is this question well-defined?  And under what conditions can a classical extremal surface sensibly be said to be ``minimal'' or ``maximal''?

Although our presentation in the first half of this paper is completely coordinate-independent, here let us begin with a coordinate-based description of surfaces, which we expect is more familiar to most readers.  Consider some~$n$-dimensional surface~$\Sigma$ in an ambient~$d$-dimensional geometry~$(M,g_{ab})$, and introduce a coordinate system~$\{y^\alpha\}$,~$\alpha = 1, \ldots, n$ on~$\Sigma$ and a coordinate system~$\{x^\mu\}$,~$\mu = 1, \ldots, d$ on~$M$.  The surface~$\Sigma$ is given explicitly by specifying~$d$ embedding functions~$X^\mu(y)$ of the coordinates~$y^\alpha$.  Perturbations of~$\Sigma$ are then made precise by introducing a continuous one-parameter family of surfaces~$\Sigma(\lambda)$ with~$\Sigma(\lambda = 0) = \Sigma$, given by a one-parameter family~$X^\mu(\lambda; y)$ of embedding functions that are continuous in~$\lambda$.  The corresponding ``infinitesimal perturbation'' of~$\Sigma$ is captured by the objects~$dX^\mu/d\lambda|_{\lambda = 0}$, which are the components~$\eta^\mu$ (in the coordinate system~$\{x^\mu\}$) of a \textit{deviation vector field}~$\eta^a$ on~$\Sigma$.   Explicitly,
\be
\label{eq:embeddingvariation}
X^\mu(\lambda; y) = X^\mu(y) + \lambda \, \eta^\mu(y) + \Ocal(\lambda^2).
\ee
Understanding the behavior of small perturbations of~$\Sigma$ is therefore tantamount to understanding the behavor of the deviation vector~$\eta^a$.  Of course, if the family~$\Sigma(\lambda)$ is completely arbitrary, then the deviation vector~$\eta^a$ is as well.  But if each surface in the family~$\Sigma(\lambda)$ is constrained somehow -- say, if they are all required to be extremal -- then~$\eta^a$ must also be constrained.

For a familiar example of such a constraint, consider the case where the surfaces~$\Sigma(\lambda)$ are all required to be geodesics; then the component~$\eta^a_\perp$ of~$\eta^a$ normal to~$\Sigma$ obeys the equation of geodesic deviation
\be
\label{eq:geodeviation}
u^c \grad_c(u^b\grad_b \eta^a_\perp) + {R_{bcd}}^a u^b u^d \eta^c_\perp = 0,
\ee
where~$u^a$ is an affinely-parametrized tangent to~$\Sigma$ and~$R_{abcd}$ is the Riemann tensor of~$g_{ab}$.  This equation can be interpreted as either governing the relative acceleration of nearby geodesics due to tidal forces, or alternatively as describing how a particular geodesic deforms in response to a small deformation of its boundary conditions, as shown in Figure~\ref{fig:geodeviation}.  It is this latter interpretation that we will adopt here, though of course the two are completely equivalent.

\begin{figure}[t]
\centering
\includegraphics[page=3,width=0.25\textwidth]{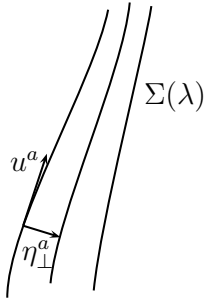}
\caption{The equation of geodesic deviation, which constrains the deviation vector~$\eta^a$ along a one-parameter family~$\Sigma(\lambda)$ of geodesics, can be interpreted as describing the relative acceleration of nearby geodesics in a congruence or alternatively as the perturbation of a geodesic as its boundary conditions are changed.}
\label{fig:geodeviation}
\end{figure}

We would like a generalization of~\eqref{eq:geodeviation} to higher-dimensional (classical or quantum) extremal surfaces\footnote{The terminology ``extremal surface'' is a misnomer, as it refers to a surface that is merely a stationary point of the area functional (or, in the quantum case, of~$S_\mathrm{gen}$), rather than a local extremum of it.  Thus such surfaces should more correctly be called ``stationary'', as argued in~\cite{Bou18}.  Unfortunately, the terminology has stuck, which as far as we can tell originated with the statement of~\cite{HubRan07} that spacelike geodesics in Lorentzian spacetimes extremize proper length.  This is incorrect: for instance, the proper length of a spacelike geodesic in Minkowski space can be either increased or decreased by ``wiggling'' it in a spacelike or timelike direction, respectively.}, and moreover we would like a generalization that includes not just the response of an extremal surface to a perturbation of its boundary conditions, but also of the ambient geometry~$g_{ab}$ (and entropy functional~$S_\mathrm{out}[\Sigma]$, in the case of quantum extremal surfaces).  Noting that the induced metric on a (non-null) geodesic with tangent~$u^a$ is~$h^{ab} = u^a u^b/u^2$, the two terms in~\eqref{eq:geodeviation} can be interpreted as a Laplacian~$D^2 \eta^a_\perp = h^{bc} D_b D_c \eta^a_\perp$ (which will be defined precisely in Section~\ref{subsec:submanifold}) and~$h^{bd} {R_{bcd}}^a \eta_{\perp}^c$, so we might guess that these terms should appear in the generalization of~\eqref{eq:geodeviation} to higher-dimensional surfaces.  This expectation is correct: for a non-null extremal surface~$\Sigma$ with induced metric~$h_{ab}$, the deviation vector field~$\eta^a$ describing a deformation through a family of extremal surfaces must obey what we call here the \textit{equation of extremal deviation} (sometimes also called the Jacobi equation in the literature)
\be
\label{eq:extremaldeviation}
J (\eta_\perp)^a \equiv -D^2 \eta^a_\perp - {S^a}_b \eta^b_\perp - {R_{ced}}^b {P^a}_b h^{cd} \eta^e_\perp = 0,
\ee
where~${P_a}^b \equiv {\delta_a}^b - {h_a}^b$ is the orthogonal projector to~$\Sigma$;~$S^{ab}$ is Simons' tensor\footnote{We issue a disclaimer that this name has nothing to do with our funding sources.}, given explicitly in terms of the extrinsic curvature~${K^a}_{bc} \equiv -{h_b}^d {h_c}^e \grad_d {h_e}^a$ as~$S_{ab} \equiv {K^a}_{cd} K^{bcd}$; and the differential operator~$D^2$ (which we will define more explicitly below) is the Laplacian on the normal bundle of~$\Sigma$.  Equation~\eqref{eq:extremaldeviation} is a homogeneous PDE that governs the behavior of an extremal surface under deformations of its boundary conditions; if we are additionally interested in the behavior of~$\Sigma$ under a perturbation~$\delta g_{ab}$ to the ambient geometry, then~\eqref{eq:extremaldeviation} is sourced by the perturbation~$\delta g_{ab}$:
\be
\label{eq:sourceextremaldeviation}
J \eta_\perp^a = K^{abc} \delta g_{bc} + {P^a}_b h^{cd} \, \delta{\Gamma^b}_{cd},
\ee
where~$\delta{\Gamma^a}_{bc}$ is the perturbation of the Christoffel symbols due to the metric perturbation~$\delta g_{ab}$.  We briefly note that~$S_{ab}$ does not appear in~\eqref{eq:geodeviation} because geodesics have vanishing extrinsic curvature; on the other hand, higher-dimensional extremal surfaces only have vanishing \textit{mean} curvature~$K^a \equiv h^{bc} {K^a}_{bc}$.  Likewise, note that the last term of~\eqref{eq:geodeviation} is normal to~$\Sigma$ due to the symmetries of the Riemann tensor, so it agrees with the last term of~\eqref{eq:extremaldeviation}.

To exploit these equations, it is convenient to decompose the operator~$J$ in a basis of the vectors normal to $\Sigma$.  For simplicity, here let us give the expressions for the case in which we are most interested: namely, when~$\Sigma$ is a codimension-two spacelike surface in a Lorentzian geometry (though we emphasize that this formalism applies to non-null surfaces of arbitrary dimension, codimension, and signature).  Then we may introduce a null basis~$\{k^a, \ell^a\}$ of its normal bundle satisfying~$k \cdot \ell = 1$, and hence decomposing~$\eta^a_\perp = \alpha k^a + \beta \ell^a$, we find that the components of~$J \eta_\perp^a$ are
\begin{subequations}
\label{eqs:Jdecomp}
\begin{align}
k_a J \eta_\perp^a &= -D^2 \beta + 2\chi^a D_a \beta - \left(|\chi|^2 - D_a \chi^a + Q_{k\ell} \right)\beta - Q_{kk} \alpha, \\
\ell_a J \eta_\perp^a &= -D^2 \alpha - 2\chi^a D_a \alpha - \left(|\chi|^2 + D_a \chi^a + Q_{k\ell} \right)\alpha - Q_{\ell\ell} \beta,
\end{align}
\end{subequations}
where~$D^2$ is now the usual scalar Laplacian on~$\Sigma$ and we defined~$\chi^a \equiv \ell^b h^{ac} \grad_c k_b$ and~$Q_{ab} \equiv S_{ab} + h^{cd} {P_a}^e {P_b}^f R_{cedf}$.  Roughly speaking, $\chi^{a}$ is related to frame dragging,~$Q_{kk}$ corresponds to focusing of light rays, and $Q_{ab}k^{a}\ell^{b}$ is related to ``cross-focusing'' of light rays.  The equation of extremal deviation~$J \eta_\perp^a = 0$ is consequently an elliptic system of PDEs.

More generally,~$J$ is an elliptic operator whenever the induced metric~$h_{ab}$ on~$\Sigma$ has fixed sign.  This implies, in particular, that the spectrum of~$J$ is bounded; we use this feature in Section~\ref{sec:stability} to classify two notions of stability of extremal surfaces.  First, what we term \textit{strong stability} is the requirement that an extremal surface be a \textit{bona fide} local extremum (i.e.~maximum or minimum) of the area functional; this notion of stability only makes sense when~$P_{ab}$ also has definite sign, in which case it imposes that the spectrum of~$J$ be bounded by zero.  Second, what we term \textit{weak stability} is the requirement that small perturbations of either the boundary conditions of~$\Sigma$ or of its ambient geometry must correspondingly induce small perturbations of~$\Sigma$; this imposes that the spectrum of~$J$ not contain zero.

In AdS/CFT, the extremal surfaces in which we are most interested are spacelike and codimension-two, since these compute the leading-order (in~$1/N^2 \sim G_N \hbar$) contribution to the entanglement entropy of the dual CFT.  To compute subleading corrections, we must make use of the quantum extremal surfaces defined above, which can be thought of as ``quantum-corrected'' versions of classical extremal surfaces.  We therefore desire a generalization of the equation of extremal deviation~\eqref{eq:sourceextremaldeviation} to include these quantum corrections.  This task is nontrivial due to the fact that unlike the area functional,~$S_\mathrm{gen}$ cannot be expressed as an integral over~$\Sigma$ of local quantities, and therefore variations of~$S_\mathrm{gen}$ will be nonlocal.  To deal with this issue, in Section~\ref{sec:quantum} we develop a \textit{covariant functional derivative}~$\Dcal/\Dcal \Sigma^a$ which computes the variation in tensorial multi-local functionals on a surface under small deformations.  In terms of this operator, quantum extremal surfaces are defined by the condition~$\Dcal S_\mathrm{gen}/\Dcal \Sigma^a = 0$.  The full quantum-corrected version of the sourced equation of extremal deviation~\eqref{eq:sourceextremaldeviation} is presented in equation~\eqref{eq:quantumextremaldeviation}, though here let us focus on two special cases.  First, the quantum analogue of the \textit{unsourced} equation~\eqref{eq:extremaldeviation} is
\be
J \eta_\perp^a(p) + 4 G_N \hbar \int_\Sigma P^{ab}(p) \frac{\Dcal^2 S_\mathrm{out}}{\Dcal \Sigma^c(p') \Dcal \Sigma^b(p)} \, \eta^c(p') = 0,
\ee
which governs the perturbation to a quantum extremal surface under a perturbation of its boundary conditions (this equation holds for each point~$p$ on~$\Sigma$ with the integral taken over~$p'$ with~$p$ fixed).  Second, we also obtain an equation for computing how a classical extremal surface is corrected to a quantum extremal surface:
\be
J\eta_\perp^a = -4G_N \hbar \, \frac{\Dcal S_\mathrm{out}}{\Dcal \Sigma^a}.
\ee
Note that if the classical extremal surface is weakly stable, then this quantum-corrected surface always exists. We note that this equation (and in particular the sign of spacelike components of the right hand side) can be used to determine whether or not quantum effects result in the entanglement wedge moving deeper into the bulk.

\subsubsection*{An Explicit Example}

To illustrate a simple use of the formalism above, let us quickly reproduce a well-known result for the perturbation to the extremal surface anchored to a ball-shaped region on the boundary of pure AdS.  We work with AdS$_d$ spacetime in the coordinates
\be
\label{eq:AdS}
ds^2 = l^2 \left[\frac{\cosh^2\chi}{\rho^2} (-dt^2 + d\rho^2) + d\chi^2 + \sinh^2 \chi d\Omega_{d-3}\right],
\ee
which can be obtained from the usual Poincar\'e coordinates by taking~$z = \rho \sech\chi$ and~$r = \rho \tanh\chi$, so the AdS boundary is at~$\chi \to \infty$ and~$(\rho,\Omega)$ are spherical coordinates on boundary slices of constant~$t$.  For the boundary region given by the sphere with radius~$\rho = \rho_0$ (on any slice~$t = $~const.), the RT surface is just given by~$\rho = \rho_0$ everywhere.  The induced metric on this surface is the hyperbolic ball
\be
\label{eq:hyperbolicball}
ds^2_\mathrm{RT} = l^2 \left(d\chi^2 + \sinh^2 \chi \, d\Omega_{d-3}\right),
\ee
and moreover this surface is totally geodesic, i.e.~${K^a}_{bc} = 0$, and hence~$S_{ab} = 0$.  The null basis
\be
\label{eq:AdSbasis}
k^a = \frac{\rho\sech\chi}{\sqrt{2} \, l} \left[(\partial_t)^a - (\partial_\rho)^a\right], \qquad \ell^a = -\frac{\rho\sech\chi}{\sqrt{2} \, l} \left[(\partial_t)^a + (\partial_\rho)^a\right]
\ee
is normal to these surfaces and satisfies~$k \cdot \ell = 1$, and moreover sets~$\chi^a = 0$.  It is easy to check that since~$S_{ab} = 0$,~$Q_{kk} = Q_{\ell\ell} = 0$ and~$Q_{k\ell} = -(d-2)/l^2$, hence using~\eqref{eqs:Jdecomp} we find that the components of the equation of extremal deviation~$J \eta_\perp^a = 0$ become simply
\be
\label{eqs:pureAdSJacobi}
D^2 \alpha - \frac{d-2}{l^2} \alpha = 0, \qquad D^2 \beta - \frac{d-2}{l^2} \beta = 0.
\ee

Now consider a deformation~$\delta \rho|_{\chi \to \infty}$ to the boundary ball~$\rho|_{\chi \to \infty} = \rho_0$ on a slice of constant~$t$, which we decompose in spherical harmonics as
\be
\label{eq:deltarhobndry}
\delta \rho|_{\chi \to \infty} = \sum_{\ell, m_i} a_{\ell,m_i} Y_{\ell,m_i}(\Omega).
\ee
The corresponding deformation to the extremal surface anchored to this boundary region is governed by~\eqref{eqs:pureAdSJacobi}; since the bulk is static, the RT surface must remain on the same time slice, implying that~$\delta t = \eta^t = 0$ and hence~$\alpha = \beta$.  The deformation to the~$\rho$-embedding of the RT surface is thus given by~$\delta \rho = \eta^\rho = -(\sqrt{2}/l) \alpha \rho \sech \chi$.  Solving~\eqref{eqs:pureAdSJacobi} subject to the boundary condition~\eqref{eq:deltarhobndry}, we thus find the regular solution
\bea
\delta \rho(\chi,\Omega) &= \sum_{\ell, m_i} a_{\ell,m_i} C_\ell \tanh^\ell \chi \, _2 F_1 \left(\frac{\ell-1}{2}, \frac{\ell}{2}, \frac{d}{2} + \ell - 1; \tanh^2\chi\right) Y_{\ell m_i}(\Omega), \\
C_\ell &= -\frac{\cos\left(\frac{\pi d}{2}\right)\Gamma\left(\frac{3-d}{2}\right)\Gamma\left(d+\ell-2\right)}{2^{d+\ell-3} \sqrt{\pi} \, \Gamma\left(\frac{d}{2} + \ell - 1\right)}.
\eea
This expression matches precisely that of~\cite{Mez14} (see also~\cite{AllMez14,NozNum13,Hub12}) under the appropriate change of coordinates~$\tanh\chi \to \sin\theta$ (and under the substitution~$d \to d+1$ necessary due to our differing conventions for~$d$).

\section{Theory of Classical Surface Deformations}
\label{sec:classical}

In this section, we give a unified treatment of classical surface theory~\cite{ColMin,LarFro93,Guv93,VisPar96,Car92,Car92b,Car93,BatCar95,BatCar00,Mos17,GhoMis17,LewPar18}, presented to maximize ease in generalizations to quantum extremal surfaces (of these references, we would highlight~\cite{Car92b} for a very pleasant introduction to the topic of embedded surfaces). To begin, let us review some basic definitions and properties of surfaces with a particular emphasis on formalism that will be useful to later deriving properties of perturbations of extremal surfaces.  Section~\ref{subsec:submanifold} provides an introduction to the topology and geometry of embedded surfaces; readers familiar with these are welcome to skip ahead to Section~\ref{subsec:families}, where surface deformations are discussed.  Our notation and conventions will otherwise follow~\cite{Wald}.

\subsection{Embedded Surfaces}
\label{subsec:submanifold}

We begin with a review of embedded submanifolds.  Intuitively, we think of a submanifold~$\Sigma$ of some geometry~$(M, g_{ab})$ as a subset of~$M$ with some topological and geometric properties inherited from~$(M, g_{ab})$.  However, in order to develop the formalism necessary for our ultimate treatment of quantum extremal surfaces and nonlocal functionals in Section~\ref{sec:quantum}, we will need to exploit the precise definition of an embedded submanifold in order to identify some properties that will be crucial to our later derivation.  To provide a complete story, the purpose of the present section is to first give a pedagogical review of embedded submanifolds from a purely topological perspective (i.e.~without invoking a notion of a metric), followed by a review of their geometry once a metric is invoked.

\subsubsection*{Topology of Embedded Surfaces}

The crucial ingredient in the definition of an embedded submanifold from a topological perspective is the notion of the embedding map and the pullback and pushforward that it defines.  If~$M$ and~$\Sigma$ are manifolds of arbitrary dimensions~$d = \dim(M)$,~$n = \dim(\Sigma)$, with~$n < d$, then the embedding map~$\psi$ is a map~$\psi: \Sigma \to M$ which is injective (that is, no two points in~$\Sigma$ are mapped to the same point in~$M$).  The image~$\psi(\Sigma)$ of~$\Sigma$ in~$M$ defines a submanifold of $M$ which we refer to as a \textit{surface} of codimension~$d-n$ in~$M$; see Figure~\ref{fig:surface}.  The requirement that~$\psi$ be injective is simply the statement that this surface does not self-intersect, and~$\psi$ is thus said to provide an \textit{embedding} of~$\Sigma$ in~$M$\footnote{The removal of the restriction that~$\psi$ be injective instead merely gives an \textit{immersion} of~$\Sigma$ in~$M$.}.  (In terms of coordinate systems on~$\Sigma$ and~$M$, the map~$\psi$ just corresponds to the embedding functions~$X^\mu(y)$ discussed in Section~\ref{subsec:manual}.)

\begin{figure}[t]
\centering
\includegraphics[page=4]{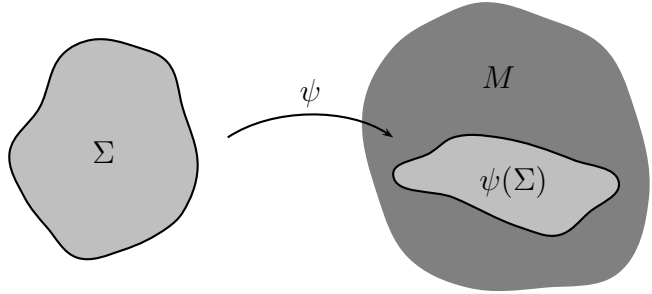}
\caption{A surface in some manifold~$M$ is the image of a lower-dimensional manifold~$\Sigma$ under an embedding map~$\psi$.  The map~$\psi$ can be used to push forward the tangent bundle~$T\Sigma$ to a subset of~$TM$ or to pull back the cotangent bundle~$T_{\psi(\Sigma)}^* M$ to~$T^*\Sigma$.  (Here we depict~$\Sigma$ with a boundary, but whether or not this is the case is immaterial to the discussion.)}
\label{fig:surface}
\end{figure}

For each point~$p \in \Sigma$, the tangent spaces~$T_p \Sigma$ and~$T_{\psi(p)} M$ are related by~$\psi$.  Specifically,~$T_p\Sigma$ can be pushed forward to a subspace of~$T_{\psi(p)} M$ via the pushforward~$\psi^*$, and the image of~$T_p\Sigma$ under this map defines the ``tangent subspace'' to~$\psi(\Sigma)$ at~$\psi(p)$ (see Figure~\ref{fig:surface}):
\be
T^\parallel_{\psi(p)} M \equiv \psi^* T_p\Sigma \subset T_{\psi(p)} M;
\ee
a vector~$v^a \in T^\parallel_{\psi(p)} M$ is thus said to be tangent to~$\psi(\Sigma)$ at~$\psi(p)$.  In turn, this allows us to define the ``normal subspace'' to~$\psi(\Sigma)$ at~$\psi(p)$ as the space of dual vectors normal to all elements of~$T^\parallel_{\psi(p)} M$:
\be 
\left(T^\perp_{\psi(p)}\right)^* M \equiv \{n_a \in T^*_{\psi(p)} M \, | \, n_a v^a = 0 \, \, \forall v^a \in T^\parallel_{\psi(p)} M\};
\ee
a dual vector~$n_a \in (T^\perp_{\psi(p)})^*$ is said to be normal to~$\psi(\Sigma)$ at~$\psi(p)$.  Conversely, the dual vector space~$T_{\psi(p)}^* M$ can be pulled back to~$T^*_p \Sigma$ via the pullback~$\psi_*$.  We may then define assorted tangent bundles in the usual way by taking disjoint unions of the various tangent spaces associated to each~$p$:
\bea
T\Sigma &\equiv \{(p,v^A) \, | \, p \in \Sigma, \, v^A \in T_p \Sigma \}, \\
TM &\equiv \{(p,v^a) \, | \, p \in M, \, v^a \in T_p M \}, \\
T_{\psi(\Sigma)} M &\equiv \{(p,v^a) \, | \, p \in \psi(\Sigma), \, v^a \in T_p M \}, \\
T^\parallel_{\psi(\Sigma)} M &\equiv \{(p,v^a) \, | \, p \in \psi(\Sigma), \, v^a \in T_p^\parallel M \}, \\
\left(T^\perp_{\psi(\Sigma)}\right)^* M &\equiv \{(p,n_a) \, | \, p \in \psi(\Sigma), \, n_a \in \left(T^\perp_p\right)^* M \}, 
\eea
where we use abstract indices~$a, b, \ldots$ for vectors on~$M$ and~$A, B, \ldots$ for vectors on~$\Sigma$.  We should interpret~$T_{\psi(\Sigma)} M$ as the space of all vector fields in~$TM$ ``living on''~$\psi(\Sigma)$, while~$T^\parallel_{\psi(\Sigma)} M$ is the subspace of these which are everywhere tangent to~$\psi(\Sigma)$.  A key point we would like to emphasize here is that the tangent bundle~$T^\parallel_{\psi(\Sigma)} M$ of vector fields tangent to~$\psi(\Sigma)$ and the normal bundle~$(T^\perp_{\psi(\Sigma)})^* M$ of dual vectors normal to~$\Sigma$ are defined \textit{without reference to a metric}; thus objects tangent and normal to~$\psi(\Sigma)$ should be more primitively thought of as upper- or lower-index objects, respectively.

(In terms of the coordinate embeddings~$X^\mu(y)$, the pullback~$\psi_*$ and pushforward~$\psi^*$ are induced by the objects~$\Psi^\mu_\alpha \equiv \partial X^\mu/\partial y^\alpha$, so for example a vector~$v^A \in T\Sigma$ with components~$v^\alpha$ gets pushed forward to a vector~$v^a \in T^\parallel_{\psi(\Sigma)} M$ with components~$v^\mu = \sum_{\alpha = 1}^n \Psi^\mu_\alpha v^\alpha$, while a dual vector~$w_a \in T^*_{\psi(\Sigma)} M$ with components~$w_\mu$ gets pulled back to a dual vector~$w_A \in T^* \Sigma$ with components~$w_\alpha = \sum_{\mu = 1}^d \Psi^\mu_\alpha w_\mu$.)

\subsubsection*{Geometry of Embedded Surfaces}

Now let us suppose that~$M$ comes equipped with a metric~$g_{ab}$ (of arbitrary signature).  The metric uniquely maps vectors to dual vectors and vice versa (i.e.~it lets us raise and lower indices), and it therefore permits the definition of objects like the normal bundle~$T^\perp_{\psi(\Sigma)} M$ as the dual space of~$(T^\perp_{\psi(\Sigma)})^* M$, interpreted as the set of \textit{vector} fields on~$\psi(\Sigma)$ that are normal to it.  A surface is said to have trivial normal bundle if there exists a global orthonormal basis of~$T^\perp_{\psi(\Sigma)} M$ (for intuition, saying a codimension-one surface has trivial normal bundle is equivalent to saying that it is two-sided); when we introduce a basis of the normal bundle in Section~\ref{subsec:codimtwo} we will take the normal bundle to be trivial, but otherwise we do not require this restriction.

The metric on~$M$ gives rise to an induced metric on~$\Sigma$, defined as the pullback of~$g_{ab}$:~$h_{AB} = \psi_* g_{ab}$.  If~$h_{AB}$ is degenerate (that is, if there exists some vector~$k^A \in T\Sigma$ such that~$h_{AB} k^B = 0$), we say that~$\psi(\Sigma)$ is null in~$(M,g_{ab})$; we will not consider this case further unless otherwise stated.  On the other hand, if~$h_{AB}$ is not degenerate, it has an inverse~$h^{AB} \in T \Sigma$ which we may push forward to a tensor~$h^{ab} = \psi^* h^{AB}$.  We may then lower indices as usual using~$g_{ab}$ to define the tensor $h_{ab}$ as well as~${h^a}_b$, which projects from~$T_{\psi(\Sigma)} M$ to~$T^\parallel_{\psi(\Sigma)} M$ (and from~$T^*_{\psi(\Sigma)} M$ to~$(T^\parallel_{\psi(\Sigma)})^* M$).  More generally, any tensor field on~$\Sigma$ can be mapped to a tensor field in~$T^\parallel_{\psi(\Sigma)} M$ by raising all its indices with~$h^{AB}$ and then pushing it forward with~$\psi^*$; we can then lower the indices of this pushforward using~$g_{ab}$.  Consequently, we may work entirely with tensor fields in~$T M$ (and its various subspaces~$T_{\psi(\Sigma)} M$, $T^\parallel_{\psi(\Sigma)} M$, $T^\perp_{\psi(\Sigma)} M$) without reference to the original manifold~$\Sigma$ or its tangent space~$T \Sigma$ at all.  Proceeding in this manner allows us to simplify notation: from here on, we will always refer to the image~$\psi(\Sigma)$ as simply~$\Sigma$.  In particular, the spaces~$T_{\psi(\Sigma)} M$, $T^\parallel_{\psi(\Sigma)} M$, and $T^\perp_{\psi(\Sigma)} M$ will now be called~$T_\Sigma M$, $T^\parallel_\Sigma M$, and $T^\perp_\Sigma M$.  (It is also for this reason that discussions of surfaces often do not introduce the map~$\psi$ at all, and just start by considering some~$\Sigma \subset M$.)

Next, the induced metric~$h_{ab}$ gives rise to a covariant derivative~$^\parallel \! D_a$ on~$\Sigma$ via the usual projection onto~$\Sigma$: for any~$v^a \in T^\parallel_\Sigma M$, we define
\be
\label{eq:Dpardef}
^\parallel \! D_a v^b = {h^c}_a {h^b}_d \grad_c v^d,
\ee
and likewise for higher-rank tensors all of whose indices are tangent to~$\Sigma$, where~$\grad_a$ is the covariant derivative on~$M$ compatible with~$g_{ab}$.  It is straightforward to show that~$^\parallel \! D_a$ inherits this metric-compatibility in the sense that~$^\parallel \! D_a h_{bc} = 0$.  Now note that both the objects~$^\parallel \! D_a$ and~${h^a}_b \grad_a$ are derivative operators on~$\Sigma$, and therefore they must be related by a connection.  Indeed, it is easy to see that for any~$v^a, u^a \in T^\parallel_\Sigma M$,
\be
\label{eq:Dpargrad}
v^b \, ^\parallel \! D_b u^a = v^b \grad_b u^a + v^b u^c {K^a}_{bc},
\ee
where
\be
\label{eq:Kabcdef}
{K^a}_{bc} \equiv - {h_b}^d {h_c}^e \grad_d {h_e}^a
\ee
defines the \textit{extrinsic curvature} of~$\Sigma$ in~$M$ (so called because by~\eqref{eq:Dpargrad}, if~$u^a$ is parallel-transported along~$v^b$ with respect to the intrinsic geometry on~$\Sigma$, then~${K^a}_{bc} u^b v^c$ measures the failure of~$u^a$ to be parallel-transported along~$v^a$ with respect to the ambient geometry~$g_{ab}$, so~${K^a}_{bc}$ quantifies the ``bending'' of~$\Sigma$ in~$M$).  Note that~${P_a}^b {K^c}_{bd} = {P_a}^d {K^c}_{bd} = 0$, with
\be
{P_a}^b \equiv {\delta_a}^b - {h_a}^b
\ee
the projector from~$T_\Sigma M$ to~$T^\perp_\Sigma M$; likewise, from the definition of~$\, ^\parallel \! D_a$ it is clear that~${h_b}^a {K^b}_{cd} = 0$ as well.  Thus~${K^a}_{bc}$ is normal to~$\Sigma$ in its first index and tangent to~$\Sigma$ in its last two indices.  Moreover, Frobenius' theorem says that for any~$v^a$,~$u^a$ tangent to~$\Sigma$, the commutator~$[v,u]^a$ must be tangent to~$\Sigma$ as well.  Expressing the commutator in terms of~$\grad_a$ and using~\eqref{eq:Dpargrad}, we find~$0 = {P^a}_b [v,u]^b = 2u^b v^c {K^a}_{[bc]}$, concluding that~${K^a}_{bc}$ is symmetric in its last two indices.  Finally, it also follows from~\eqref{eq:Kabcdef} that for any dual vector~$n_a$ normal to~$\Sigma$,
\be
\label{eq:Kn}
n_a {K^a}_{bc} = {h_b}^a {h_c}^d \grad_a n_d;
\ee
for this reason, the literature often treats the extrinsic curvature of surfaces by introducing a basis~$\{(n^i)_a\}$,~$i = 1, \ldots, d-n$ of the normal bundle~$(T^\perp_\Sigma)^* M$ and then defining a separate extrinsic curvature~$K^i_{ab} = {h_a}^c {h_b}^d \grad_c (n^i)_d$ for each~$i$.  The definition~\eqref{eq:Kabcdef} is preferable to us, however, as it is manifestly basis-independent.

Besides the covariant derivative~$^\parallel \! D_a$, which acts on tensors tangent to~$\Sigma$ in all their indices, we may also define a covariant derivative on the normal bundle which acts on tensors \textit{normal} to~$\Sigma$ in all their indices: for any~$n^a \in T^\perp_\Sigma M$, we define
\be
^\perp \! D_a n^b = {h^c}_a {P^b}_d \grad_c n^d,
\ee
and likewise for higher-rank tensors.  Again,~$^\perp \! D_a$ and~${h^a}_b \grad_a$ are related by the extrinsic curvature: for any~$v^a \in T^\parallel_\Sigma M$ and~$n^a \in T^\perp_\Sigma M$,
\be
v^b \, ^\perp \! D_b n^a = v^b \grad_b n^a - v^b n^c {K_{cb}}^a.
\ee
More generally, let~${T_{a_1 \cdots a_k}}^{b_1 \cdots b_l}$ be any tensor whose indices are each strictly tangent to or normal to~$\Sigma$.  We define the covariant derivative~$D_a {T_{b_1 \cdots b_k}}^{c_1 \cdots c_l}$ on~$\Sigma$ by first computing~${h_a}^d \grad_d {T_{b_1 \cdots b_k}}^{c_1 \cdots c_l}$ and then projecting each of the ~$b$ and~$c$ indices back to being tangent or normal to~$\Sigma$.  For instance, if~${T_a}^{bc}$ is normal to~$\Sigma$ in its first two indices and tangent to~$\Sigma$ in its last index, we have
\be
D_a {T_b}^{cd} = {h_a}^e {P_b}^f {P^c}_g {h^d}_h \grad_e {T_f}^{gh}.
\ee
Finally, let us note that the fact that~${K^a}_{bc}$ is symmetric in its last two indices implies that~$D_a$ is torsion-free: that is, for any scalar~$f$ on~$\Sigma$,~$D_{[a} D_{b]} f = 0$.

The covariant derivative~$D_a$ can be used to derive the Gauss, Codazzi, and Ricci equations, which relate the intrinsic and extrinsic curvatures of~$\Sigma$ in~$(M,g_{ab})$ to the intrinsic curvature of~$(M,g_{ab})$.  Recall that the Riemann curvature of the latter is defined by
\be
\label{eq:Riemanndef}
2\grad_{[a} \grad_{b]} v_c = {R_{abc}}^d v_d \quad \forall v^a \in TM,
\ee
and that the existence of a tensor~${R_{abc}}^d$ defined in this way is guaranteed by the fact that~$\grad_a$ is torsion-free.  Since~$D_a$ is also torsion-free, there must also exist objects~$^\parallel \! R_{abcd}$ and~$^\perp \! R_{abcd}$, interpreted as the curvatures of~$(\Sigma,h_{ab})$ and of the normal bundle, respectively, that obey
\begin{subequations}
\label{eqs:Rparperp}
\begin{align}
2 D_{[a} D_{b]} v_c &= \, ^\parallel \! {R_{abc}}^d v_d \quad \forall v^a \in T^\parallel_\Sigma M, \\
2 D_{[a} D_{b]} n_c &= \, ^\perp \! {R_{abc}}^d n_d \quad \forall n^a \in T^\perp_\Sigma M.
\end{align}
\end{subequations}
Expressing~$D_a$ in terms of~$\grad_a$ in~\eqref{eqs:Rparperp} and using~\eqref{eq:Kabcdef} and~\eqref{eq:Riemanndef} to rearrange, we obtain the Gauss and Ricci equations\footnote{Equation~\eqref{eq:Ricci} is often not treated in the physics literature because it is trivial unless both the dimension and codimension of~$\Sigma$ are greater than one.  In most of the math literature it goes by the name Ricci equation, a convention that we will follow here, though see~\cite{Car92b} for an argument that it should go by some linear combination of the names Voss, Ricci, Walker, and Schouten.}
\begin{subequations}
\begin{align}
^\parallel \! R_{abcd} &= {h_a}^e {h_b}^f {h_c}^g {h_d}^h R_{efgh} + 2K_{ec[a} {K^e}_{b]d}, \label{eq:Gauss} \\
^\perp \! R_{abcd} &= {h_a}^e {h_b}^f {P_c}^g {P_d}^h R_{efgh} + 2K_{ce[a|} {K_{d|b]}}^e. \label{eq:Ricci}
\end{align}
Likewise, using the definition of~$D_a$ in terms of~$\grad_a$ along with~\eqref{eq:Riemanndef}, it is straightforward to directly obtain the Codazzi equation
\be
\label{eq:Codazzi}
D_a K_{dbc} - D_b K_{dac} = {h_a}^e {h_b}^f {h_c}^g {P_d}^h R_{efgh}.
\ee
\end{subequations}

It will be useful to note that~$^\perp \! R_{abcd}$ measures the path-dependence of parallel transport along~$\Sigma$ of vectors in the normal bundle, and thus the vanishing of~$^\perp \! R_{abcd}$ is therefore a necessary and sufficient condition for the existence of a basis~$\{(n^i)_a\}$ of the normal bundle which satisfies~$D_a (n^i)_b = 0$; such a frame is called a Fermi-Walker frame.  Since~$^\perp \! R_{abcd}$ is antisymmetric and tangent to~$\Sigma$ in its first two indices and antisymmetric and normal to~$\Sigma$ in its last two indices, it vanishes identically when~$\Sigma$ is a curve or a hypersurface (i.e.~a codimension-one surface), in which case a Fermi-Walker frame always exists.  More generally, it is possible to show that~\eqref{eq:Ricci} can be rewritten as
\be
\label{eq:conformalRicci}
^\perp \! R_{abcd} = {h_a}^e {h_b}^f {P_c}^g {P_d}^h W_{efgh} + 2\widetilde{K}_{ce[a|} \widetilde{K}^e_{\phantom{e}d|b]},
\ee
where~$W_{abcd}$ is the Weyl tensor of~$(M, g_{ab})$ and~$\widetilde{K}^a_{\phantom{a}bc}$ is the trace-free extrinsic curvature
\be
\widetilde{K}^a_{\phantom{a}bc} \equiv {K^a}_{bc} - \frac{1}{n} \, K^a h_{bc}.
\ee
It then follows that a sufficient condition to ensure the vanishing of~$^\perp \! R_{abcd}$ is that~$g_{ab}$ be conformally flat and that~$\widetilde{K}^a_{\phantom{a}bc} = 0$.

\subsection{Families of Surfaces}
\label{subsec:families}

We are now equipped to study perturbations of surfaces.  As reviewed in Section~\ref{subsec:manual}, such perturbations are often treated by introducing an explicit coordinate embedding~$X^\mu(y)$ and then considering a perturbation~$\delta X^\mu(y)$ to the embedding functions.  While this brute-force approach does work for deriving the equation of extremal deviation, we instead exploit an equivalent but more abstract formulation which, as we will see, makes it almost effortless to derive the relevant perturbation equations we desire, and also facilitates the inclusion of quantum corrections.  To introduce this formalism, we must specify precisely what we mean by a ``perturbation'' of a given surface~$\Sigma$.  Here we take the following approach: we consider a continuous one-parameter family of surfaces~$\Sigma(\lambda)$ with~$\Sigma(\lambda = 0) = \Sigma$; studying ``perturbations'' of~$\Sigma$ then corresponds to studying the behavior of this family of surfaces around~$\lambda = 0$.  To perform such an analysis, note that at each value of~$\lambda$, the surface~$\Sigma(\lambda)$ can be obtained from the original surface~$\Sigma$ by some diffeomorphism~$\phi_\lambda$, that is,~$\Sigma(\lambda) = \phi_\lambda (\Sigma)$.  Thus the one-parameter continuous family~$\Sigma(\lambda)$ is obtained by ``evolving''~$\Sigma$ along a one-parameter group of diffeomorphisms~$\phi_\lambda$.  The generator of~$\phi_\lambda$ will be denoted by~$\eta^a$, and its restruction to~$\Sigma$ is said to be a deviation vector field on~$\Sigma$ along the family~$\Sigma(\lambda)$, as shown in Figure~\ref{fig:surfacefamily}.  This deviation vector field encodes ``infinitesimal deformations'' of~$\Sigma$ (which can be seen explicitly from the fact that for any coordinate system~$\{x^\mu\}$ on~$M$,~$\phi_\lambda$ generates a coordinate transformation~$x^\mu \to x^\mu + \lambda \eta^\mu + \Ocal(\lambda^2)$, and hence the embedding functions of~$\Sigma$ are modified as in~\eqref{eq:embeddingvariation}).

\begin{figure}[t]
\centering
\includegraphics[page=5]{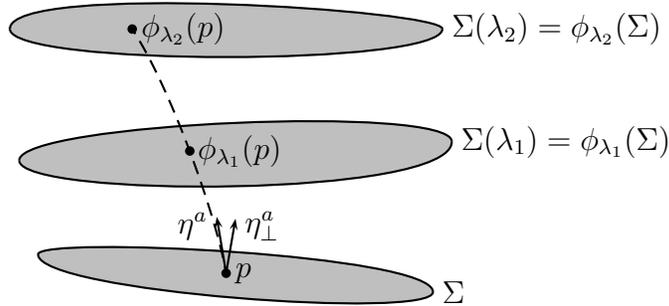}
\caption{A one-parameter family of surfaces~$\Sigma(\lambda)$ can be obtained from a starting surface~$\Sigma$ by evolving it along a one-parameter family of diffeomorphisms~$\phi_\lambda$.  For each point~$p \in \Sigma$,~$\phi_\lambda(p)$ is a curve parametrized by~$\lambda$ (shown as a dashed line); the tangent to such curves at all points on~$\Sigma$ is a deviation vector field~$\eta^a$ along this family of surfaces.  There is some freedom in the choice of~$\eta^a$ on~$\Sigma$ (corresponding to the freedom in how each point~$p \in \Sigma$ is mapped to subsequent surfaces), but the evolution of the geometry of the family~$\Sigma(\lambda)$ is captured by the component~$\eta^a_\perp$ normal to~$\Sigma$.}
\label{fig:surfacefamily}
\end{figure}

Note that for each~$\lambda$,~$\phi_\lambda$ is only defined by the requirement that it map~$\Sigma$ to~$\Sigma(\lambda)$, and therefore the group of diffeomorphisms~$\phi_\lambda$ is highly non-unique.  One source of this non-uniqueness is that the action of~$\phi_\lambda$ on any points not lying on~$\Sigma$ is completely arbitrary, and thus the generator~$\eta^a$ is completely arbitrary off of~$\Sigma$.  More importantly, if~$\varphi$ is any diffeomorphism that maps~$\Sigma$ to itself, then the composed diffeomorphism~$\phi_\lambda \circ \varphi$ also maps~$\Sigma$ to~$\Sigma(\lambda)$ (this is just the observation that the family~$\Sigma(\lambda)$ is unchanged if points within each~$\Sigma(\lambda)$ are ``moved around''); this implies that even on~$\Sigma$, the component of~$\eta^a$ tangent to~$\Sigma$ is arbitrary\footnote{Here we are temporarily ignoring boundary terms because they are irrelevant to the present discussion, but as shown in Appendix~\ref{app:variations}, the tangential component~$\eta^a_\parallel = {h^a}_b \eta^b$ does play a role in the evolution of the boundary~$\partial \Sigma$.}.  We therefore conclude that geometric information about the family~$\Sigma(\lambda)$ near~$\Sigma$ must be captured by the normal component~$\eta^a_\perp = {P^a}_b \eta^b$.

To proceed further, let us recall that the \textit{position} of a surface~$\Sigma$ in a geometry~$(M, g_{ab})$ is ``gauge-dependent'' in the sense that for any diffeomorphism~$\phi$, the surface~$\phi(\Sigma)$ in the geometry~$(M, \phi^* g_{ab})$ is geometrically equivalent to the original surface~$\Sigma$ in the original geometry~$(M, g_{ab})$.  What makes the one-parameter family~$\Sigma(\lambda)$ we have just introduced nontrivial is that the diffeomorphisms~$\phi_\lambda$ act only on~$\Sigma$, and \textit{not} on the ambient geometry~$g_{ab}$.  In other words, the nontrivial evolution of the~$\Sigma(\lambda)$ is due to a \textit{relative} diffeomorphism between~$\Sigma(\lambda)$ and the ambient metric~$g_{ab}$.  This observation leads to a natural alternative formulation of the evolution of surfaces: rather than considering a family of surfaces~$\Sigma(\lambda)$ evolving through a fixed metric~$g_{ab}$, as in Figure~\ref{subfig:active}, we may instead fix the surface~$\Sigma$ and evolve the metric~$g_{ab}$ ``back'' to~$\Sigma$, as in Figure~\ref{subfig:passive}.  We will call the former formulation (in which the~$\Sigma(\lambda)$ are evolving) the ``active'' picture, and we will refer to the latter formulation (in which~$\Sigma$ is left fixed but the metric is evolved) the ``passive'' picture\footnote{Our active and passive pictures are essentially the Eulerian and Lagrangian schemes used in~\cite{Car93,BatCar95,BatCar00}.}.  The metric in the active picture will be denoted by~$g^\mathrm{act}_{ab}$, while the metric in the passive picture is~$g^\mathrm{pas}_{ab} = (\phi^{-1}_{\lambda})_* g^\mathrm{act}_{ab} = \phi^*_{-\lambda} g^\mathrm{act}_{ab}$ (the latter equality follows from the fact that~$\phi_\lambda$ is a one-parameter group of diffeomorphisms and therefore~$\phi^{-1}_\lambda = \phi_{-\lambda}$, since~$\phi_{\lambda_1} \circ \phi_{\lambda_2} = \phi_{\lambda_1 + \lambda_2}$ with~$\phi_0$ the identity).  As we will see, switching to the passive picture offers two substantial advantages: first, in the passive picture we may exploit the fact that the tangent vector bundle~$T^\parallel_\Sigma M$ and normal dual vector bundle~$(T^\perp_\Sigma)^* M$ of the \textit{fixed} surface~$\Sigma$ are independent of the ambient metric, and therefore are unaffected by its evolution; second, all~$\lambda$-dependence is contained in the passive metric~$g^\mathrm{pas}_{ab}(\lambda)$, and therefore it is quite easy to investigate the behavior of families of surfaces even when the \textit{active} metric is varying arbitrarily (this is relevant, for instance, to a situation in which the boundary conditions of an extremal surface and its ambient geometry are deformed simultaneously).

\begin{figure}[t]
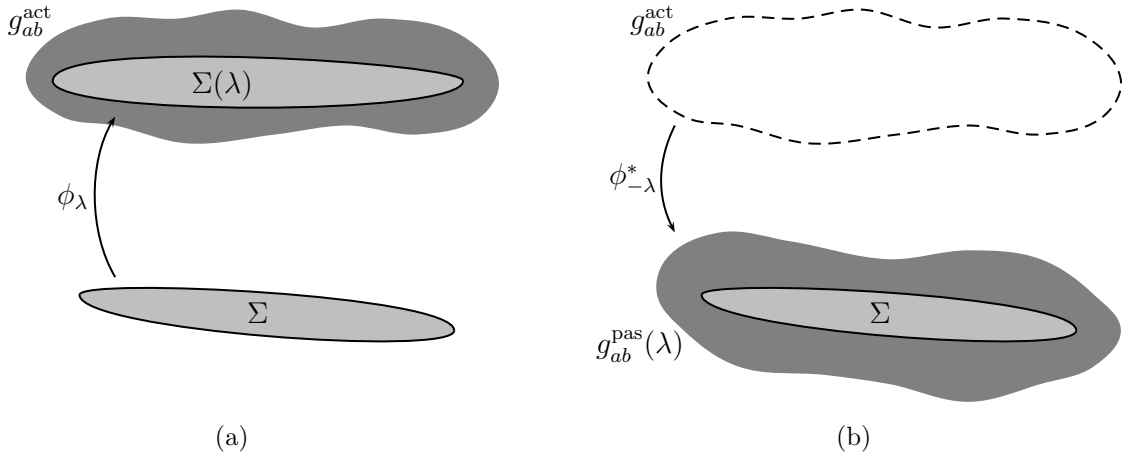

\centering
\subfigure[]{
\includegraphics[page=6,width=0.45\textwidth]{Figures-pics}
\label{subfig:active}
}%
\hspace{1cm}%
\subfigure[]{
\includegraphics[page=7,width=0.45\textwidth]{Figures-pics}
\label{subfig:passive}
}
\caption{\subref{subfig:active}: in the active picture, the initial surface~$\Sigma$ is evolved to a one-parameter family of surfaces~$\Sigma(\lambda)$ by a one-parameter group of diffeomorphisms~$\phi_\lambda$.  Each~$\Sigma(\lambda)$ is sensitive only to the ambient geometry~$g^\mathrm{act}_{ab}$ in its neighborhood, illustrated schematically as the dark gray shading.  \subref{subfig:passive}: equivalently, in the passive picture~$\Sigma$ is left unchanged, but the ambient geometry is pulled back to~$\Sigma$ by~$(\phi^{-1}_\lambda)_* = \phi_{-\lambda}^*$.  $\Sigma$ is then sensitive to a one-parameter family of passive metrics~$g^\mathrm{pas}_{ab}(\lambda) = \phi^*_{-\lambda} g^\mathrm{act}_{ab}$.}
\label{fig:pictures}
\end{figure}

\subsubsection*{Warmup: Geodesics}

To highlight the advantages of switching to the passive picture, let us warm up by re-deriving the equation of geodesic deviation (recall that this equation can be thought of as governing the infinitesimal perturbation to a geodesic under a deformation of its boundary conditions).  In fact, we will be more general: we will derive the \textit{sourced} equation of geodesic deviation, which describes how a geodesic varies in response to a perturbation of the active metric (as well as of its boundary conditions).  In other words, consider a geodesic~$\Sigma$ in a geometry with metric~$g_{ab}$, and let us begin in the active picture by perturbing both~$\Sigma$ and the metric to one-parameter families~$\Sigma(\lambda)$ and~$g^\mathrm{act}_{ab}(\lambda)$ such that for each~$\lambda$,~$\Sigma(\lambda)$ is a geodesic with respect to the geometry~$g^\mathrm{act}_{ab}(\lambda)$.  Switching to the passive picture, we keep~$\Sigma$ fixed and only vary the passive metric~$g^\mathrm{pas}_{ab}(\lambda) = \phi^*_{-\lambda} g^\mathrm{act}_{ab}(\lambda)$.

Let~$u^a$ be an affinely-parametrized tangent to~$\Sigma$ with respect to~$g_{ab}$; i.e.~$u^b \grad_b u^a = 0$.  Because the tangent space of~$\Sigma$ is independent of the metric,~$u^a$ is always tangent to~$\Sigma$ for any~$\lambda$.  For general~$g^\mathrm{pas}_{ab}(\lambda)$,~$u^a$ will not necessarily remain an affinely-parametrized tangent, but we can always gauge-fix by performing a~$\lambda$-dependent diffeomorphism within~$\Sigma$ to ensure that it does (such a diffeomorphism essentially corresponds to ``reparametrizing'' the curve).  Taking therefore~$u^a$ to be an affinely-parametrized tangent to~$\Sigma$ for all~$\lambda$, the requirement that~$\Sigma$ remain a geodesic under the perturbation is simply the geodesic equation 
\be
\label{eq:geodesic}
u^b \grad_b^{(\lambda)} u^a = 0,
\ee
where~$\grad_a^{(\lambda)}$ is the covariant derivative compatible with~$g^\mathrm{pas}_{ab}(\lambda)$.  Since both~$\grad_a$ and~$\grad_a^{(\lambda)}$ are derivative operators, they are again related by a connection:
\begin{subequations}
\label{eqs:gradlambda}
\be
\grad_b^{(\lambda)} u^a = \grad_b u^a + {C^a}_{bc}(\lambda) u^c,
\ee
where (see e.g.~Section~7.5 of~\cite{Wald})
\be
\label{subeq:Cdef}
{C^a}_{bc}(\lambda) = \frac{1}{2} (g^\mathrm{pas})^{ad}(\lambda) \left[ \grad_b g^\mathrm{pas}_{cd}(\lambda) + \grad_c g^\mathrm{pas}_{bd}(\lambda) - \grad_d g^\mathrm{pas}_{bc}(\lambda) \right].
\ee
\end{subequations}
Since~$u^a$ is affinely-parametrized with respect to~$g_{ab}$, the geodesic equation~\eqref{eq:geodesic} becomes~${C^a}_{bc}(\lambda) u^b u^c = 0 $, and in particular the derivative of this equation at~$\lambda = 0$ yields~$\dot{C}^a_{\phantom{a}bc} u^b u^c = 0$, where the dot denotes a~$\lambda$-derivative at~$\lambda = 0$.  Note that from~\eqref{subeq:Cdef},
\be
\label{eq:Cdot}
\dot{C}^a_{\phantom{a}bc} = \frac{1}{2} g^{ad} \left(\grad_b \dot{g}^\mathrm{pas}_{cd} + \grad_c \dot{g}^\mathrm{pas}_{bd} - \grad_d \dot{g}^\mathrm{pas}_{bc}\right).
\ee

Now let us return to the active picture: since~$g^\mathrm{pas}_{ab}(\lambda) = \phi^*_{-\lambda} g^\mathrm{act}_{ab}(\lambda)$, we have that on~$\Sigma$,
\be
\label{eq:gpassivedot}
\dot{g}^\mathrm{pas}_{ab} = \lim_{\lambda \to 0} \frac{\phi^*_{-\lambda} g^\mathrm{act}_{ab}(\lambda) - g^\mathrm{act}_{ab}(0)}{\lambda} = \pounds_\eta g_{ab} + \delta g_{ab} = 2\grad_{(a} \eta_{b)} + \delta g_{ab},
\ee
where~$\delta g_{ab} \equiv dg^\mathrm{act}_{ab}/d\lambda|_{\lambda = 0}$ is the linear perturbation to the active metric.  Inserting this expression for~$\dot{g}^\mathrm{pas}_{ab}$ into the equation~$\dot{C}^a_{\phantom{a}bc} u^b u^c = 0$, using the fact that~$u^b \grad_b u^a = 0$, and using the definition~\eqref{eq:Riemanndef} of the Riemann tensor, we quickly obtain
\be
\label{eq:sourcedJacobi}
u^b \grad_b \left(u^c \grad_c \eta^a\right) + {R_{cbd}}^a u^c u^d \eta^b = - {\delta \Gamma^a}_{bc} u^b u^c,
\ee
where
\be
\label{eq:deltagamma}
{\delta \Gamma^a}_{bc} \equiv \frac{1}{2} g^{ad} \left( \grad_b \delta g_{cd} + \grad_c \delta g_{bd} - \grad_d \delta g_{bc}\right)
\ee
is the variation in the connection due to the variation in the active metric.  When~$\delta g_{ab} = 0$, we immediately recognize~\eqref{eq:sourcedJacobi} as just the usual equation of geodesic deviation.  More generally,~\eqref{eq:sourcedJacobi} is a \textit{sourced} equation of geodesic equation, describing how a geodesic ``moves'' in response to simultaneous perturbations of the spacetime and of its boundary conditions.

Before moving on to surfaces of general dimension, let us make two remarks.  First, geodesics are sufficiently simple that here we never needed to assume that~$\Sigma$ was nondegenerate; thus~\eqref{eq:sourcedJacobi} is valid for geodesics of any signature, including null.  Indeed, the sourced equation of geodesic deviation was derived via more brute-force calculations in~\cite{PynBir93} with the goal of applying it to null geodesics.  Second,~\eqref{eq:sourcedJacobi} constrains \textit{all} of the components of~$\eta^a$ on~$\Sigma$, even though only the components of~$\eta^a$ which are transverse (that is, not tangent) to~$\Sigma$ affect how $\Sigma(\lambda)$ changes \textit{as a surface} with $\lambda$. The extra constraint on the component of~$\eta^a$ tangent to~$\Sigma$ comes from our earlier gauge-fixing requirement enforcing that~$u^a$ be an affinely-parametrized tangent to~$\Sigma$ for all~$\lambda$, since it is the component of~$\eta^a$ tangent to~$\Sigma$ which encodes how~$\Sigma$ is to be reparametrized to ensure that~$u^a$ remain an affinely-parametrized tangent.  If~$\Sigma$ is not null, we may isolate the transverse parts of~\eqref{eq:sourcedJacobi} by projecting onto the normal bundle~$T^\perp_\Sigma M$ with~${P^a}_b = {\delta^a}_b - u^a u_b/u^2$, obtaining
\be
\label{eq:sourcedJacobinormal}
u^b \grad_b \left(u^c \grad_c \eta^a_\perp \right) + {R_{cbd}}^a u^c u^d \eta_\perp^b = - {P^a}_d {\delta \Gamma^d}_{bc} u^b u^c.
\ee
The ``gauge'' part of~\eqref{eq:sourcedJacobi}, which keeps~$u^a$ affinely parametrized, is just the contraction with~$u^a$, which yields
\be
u^a \grad_a \left(u^b \grad_b (u \cdot \eta)\right) = - {\delta \Gamma^a}_{bc} u_a u^b u^c;
\ee
for a given metric perturbation~$\delta g_{ab}$, this equation can easily be integrated twice to obtain the component~$u \cdot \eta$.

\subsubsection*{Surfaces of General Dimension}

Now that we have demonstrated how to derive the sourced equation of geodesic deviation in the passive picture, it is relatively straightforward to generalize to surfaces of arbitrary dimension by working with the induced metric~$h_{ab}$ rather than the tangent vector~$u^a$.  We begin as we did for geodesics: consider a surface~$\Sigma$ (of arbitrary dimension) in a geometry with metric~$g_{ab}$, and perturb both the surface and the metric to the one-parameter families~$\Sigma(\lambda)$,~$g^\mathrm{act}_{ab}(\lambda)$.  Again we switch to the passive picture, considering instead a fixed surface~$\Sigma$ in the one-parameter family of ambient metrics~$g^\mathrm{pas}_{ab}(\lambda) = \phi^*_{-\lambda} g^\mathrm{act}_{ab}(\lambda)$.

Now we note that the induced metric on~$\Sigma$ varies as
\be
\dot{h}^{ab} = \frac{d}{d\lambda}\left({h^a}_c {h^b}_d (g^\mathrm{pas})^{cd}\right) = {h^a}_c {h^b}_d (\dot{g}^\mathrm{pas})^{cd} + \dot{h}^a_{\phantom{a}c} h^{cb} + \dot{h}^b_{\phantom{b}c} h^{ca}.
\ee
Recall, however, that~${h^a}_b$ is the identity on the tangent space~$T^\parallel_\Sigma M$, which in this picture is~$\lambda$-independent; thus~${h^a}_b(\lambda) v^b = v^a$ for all~$\lambda$ and for any~$v^a \in T^\parallel_\Sigma M$.  It then follows that~$\dot{h}^a_{\phantom{a}b}v^b = 0$, implying~$\dot{h}^a_{\phantom{a}c} h^{cb} = 0$ and thus
\begin{subequations}
\be
\dot{h}^{ab} = {h^a}_c {h^b}_d (\dot{g}^\mathrm{pas})^{cd} = -h^{ac} h^{bd} \dot{g}^\mathrm{pas}_{cd},
\ee
where in the second equality we used the fact that~$(\dot{g}^\mathrm{pas})^{ab} = -g^{ac} g^{bd} \dot{g}^\mathrm{pas}_{cd}$.  We then straightforwardly obtain
\begin{align}
\dot{h}_a^{\phantom{a}b} &= \frac{d}{d\lambda}(h^{bc} g^\mathrm{pas}_{ac}) = {P_a}^c h^{bd} \, \dot{g}^\mathrm{pas}_{cd}, \\
\dot{h}_{ab} &= \frac{d}{d\lambda}({h_a}^c g^\mathrm{pas}_{cb}) = \left({\delta_a}^c {\delta_b}^d - {P_a}^c {P_b}^d \right) \dot{g}^\mathrm{pas}_{cd}.
\end{align}
\end{subequations}
These equations, in addition to the variation~\eqref{eq:Cdot} of the covariant derivative, are sufficient to compute the variation~$\dot{K}^a_{\phantom{a}bc}$ of the extrinsic curvature of~$\Sigma$.  In fact, we will ultimately only be interested in the mean curvature~$K^a \equiv h^{bc} {K^a}_{bc}$, whose variation can be simplified to
\be
\label{eq:Kdotpassive}
\dot{K}_a = P_{ab}\left[(K^c g^{bd} - K^{bcd}) \dot{g}^\mathrm{pas}_{cd} - h^{cd} \dot{C}^b_{\phantom{b}cd}\right].
\ee
The fact that~$\dot{K}_a \in (T^\perp_\Sigma)^* M$ is as expected, since the normal bundle~$(T^\perp_\Sigma)^* M$ is~$\lambda$-independent.

Now we may use~\eqref{eq:gpassivedot} to express~$\dot{K}_a$ in terms of~$\eta^a$ and the perturbation~$\delta g_{ab}$ to the active metric: using~\eqref{eq:Cdot}, decomposing~$\eta^a$ into parts normal and tangent to~$\Sigma$, and making use of the Codazzi equation~\eqref{eq:Codazzi}, we obtain
\begin{subequations}
\label{eq:Kdotgeneral}
\be
\dot{K}_a = J (\eta_\perp)_a - s_a + K_c {P_a}^b \grad_b \eta^c_\perp + \pounds_{\eta_\parallel} K_a + {P_a}^b K^c \delta g_{bc},
\ee
where~$J$ is a second-order differential operator on the normal bundle, given explicitly as
\be
J (\eta_\perp)_a \equiv -D^2 (\eta_\perp)_a - Q_{ab} \eta^b_\perp \label{subeq:Ldef}
\ee
with
\begin{align}
D^2 \eta_\perp^a &= h^{bc} D_b D_c \eta_\perp^a = h^{bc} {P^a}_d \grad_b( {h_c}^e {P^d}_f \grad_e \eta_\perp^f), \label{eq:Laplaciannormalbundle} \\
Q_{ab} &\equiv S_{ab} + h^{cd} {P_a}^e {P_b}^f R_{cedf}, \label{subeq:Qdef} \\
S_{ab} &\equiv K_{acd} {K_b}^{cd},
\end{align}
and the source term~$s_a(\delta g)$ is given by
\be
\label{subeq:sdef}
s_a(\delta g) \equiv {K_a}^{bc} \delta g_{bc} + h^{cd} P_{ab} \delta {\Gamma^b}_{cd}.
\ee
\end{subequations}
The tensor~$S_{ab}$, which is symmetric and normal to~$\Sigma$, is often called Simons' operator, while~$D^2 (\eta_\perp)_a$ is often called the Laplacian of~$(\eta_\perp)_a$ on the normal bundle.  Let us also note that in the context of minimal surfaces in Riemannian manifolds,~$J$ is called the stability operator of~$\Sigma$~\cite{ColMin}; the reason for this nomenclature will become clear in Section~\ref{sec:stability}.  In other contexts,~$J$ is sometimes referred to as the Jacobi operator.

Equation~\eqref{eq:Kdotgeneral} governs perturbations of surfaces in broad generality, and we will make use of several special cases of it in the rest of this paper.  It is therefore worth pausing here to make some remarks and to highlight these special cases.  First, note that as expected, the component~$\eta^a_\parallel$ of~$\eta^a$ tangent to~$\Sigma$ simply transforms~$K_a$ by a diffeomorphism within~$\Sigma$; all the geometric information about the ``flow'' of surfaces is contained within the normal component~$\eta_\perp^a$.  Second, recall that~$\eta^a$ is arbitrary off of~$\Sigma$, and therefore the normal derivative~${P_a}^b \grad_b \eta_\perp^c$ is as well.  The appearance of this term as well as of the arbitrary component~$\eta_\parallel^a$ is an artifact of the fact that~\eqref{eq:Kdotgeneral} is a ``mixed-picture'' expression:~$\dot{K}_a$ should be understood as a passive-picture object, and is perfectly well-defined in terms of~$\dot{g}^\mathrm{pas}_{ab}$ via~\eqref{eq:Kdotpassive}; on the other hand, the objects on the right-hand side of~\eqref{eq:Kdotgeneral} are active-picture quantities.  To convert entirely to the active picture, note that the extrinsic curvature~$K_a$ is defined on each surface~$\Sigma(\lambda)$, and therefore can be thought of as a field on the~$(n+1)$-dimensional surface~$\Xi$ swept out by~$\Sigma(\lambda)$ as~$\lambda$ is varied\footnote{For simplicity, here we assume that~$\eta^a$ is nowhere vanishing, so that for sufficiently small range of~$\lambda$, none of the~$\Sigma(\lambda)$ intersect any of the others.  The case where they do intersect can be treated as a limiting case, and all the expressions are unaffected.}, as shown in Figure~\ref{fig:Kfield}.  The derivative~$\dot{K}_a$ can then be interpreted as a Lie derivative, so we may write
\be
\dot{K}_a = {P_a}^b \dot{K}_b = {P_a}^b \pounds_\eta K_b = {P_a}^b \left(\eta^c_\perp \grad_c K_b + K_c \grad_b \eta^c_\perp\right) + \pounds_{\eta_\parallel} K_b,
\ee
where we have used the fact that~$\dot{K}_a$ is normal to~$\Sigma$, and thus we can freely move projectors~${P_a}^b$ into or out of the Lie derivatives.  Inserting this expression into~\eqref{eq:Kdotgeneral}, both~$\eta_\parallel^a$ and~${P_a}^b \grad_b \eta^c_\perp$ drop out, and we are left with
\be
\label{eq:etagradK}
{P_a}^b \eta^c_\perp \grad_c K_b = J(\eta_\perp)_a - s_a + {P_a}^b K^c \delta g_{bc}.
\ee
The left-hand side is just a derivative of~$K_a$ along~$\eta^a_\perp$ (which is now well-defined), and the right-hand side depends only on~$\eta_\perp^a$ on~$\Sigma$.

\begin{figure}[t]
\centering
\includegraphics[page=8]{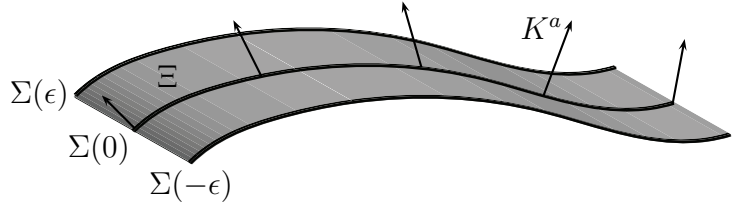}
\caption{Given a one-parameter family of~$n$-dimensional surfaces~$\Sigma(\lambda)$, we may define the mean curvature~$K_a$ on each surface, which can therefore be thought of as a field near~$\Sigma = \Sigma(0)$ on the~$(n+1)$-dimensional surface~$\Xi$ (shaded in gray) swept out by~$\Sigma(\lambda)$ as~$\lambda$ is varied around zero (wherever these surfaces do not intersect one another for sufficiently small range of~$\lambda$).  This picture allows us to define the directional derivative of~$K^a$ along~$\eta^a$, in addition to just the directions tangent to~$\Sigma$. }
\label{fig:Kfield}
\end{figure}

As a final observation, recall that the condition that a surface be extremal is simply~$K_a = 0$.  If we require that the surface~$\Sigma(\lambda)$ be extremal with respect to~$g^\mathrm{act}_{ab}(\lambda)$ for each~$\lambda$, then we have~$\dot{K}_a = 0$, which from~\eqref{eq:Kdotgeneral} imposes a constraint on the deviation vector along such a family of extremal surfaces:
\be
\label{eq:sourcedgeneralJacobi}
J \eta_\perp^a = s^a.
\ee
This is the analogue of the sourced equation of geodesic deviation~\eqref{eq:sourcedJacobinormal} for higher-dimensional extremal surfaces.  Indeed, when~$\Sigma$ is a (non-null) geodesic with tangent~$u^a$, we have~$h^{ab} = u^a u^b/u^2$ and~${K^a}_{bc} = 0$, and~\eqref{eq:sourcedgeneralJacobi} reduces to~\eqref{eq:sourcedJacobinormal}.  Finally, if we require the family of surfaces~$\Sigma(\lambda)$ to all be extremal in the \textit{same} ambient geometry~$g_{ab}$, then~$\delta g_{ab} = 0$, and we find that the deviation vector obeys
\be
\label{eq:unsourcedJacobi}
J \eta_\perp^a = 0.
\ee
In the mathematics literature, the equations~\eqref{eq:sourcedgeneralJacobi} and~\eqref{eq:unsourcedJacobi} are often referred to as the (inhomogeneous and homogeneous) Jacobi equation.  Here we will instead give them the more descriptive name of the sourced and unsourced equations of extremal deviation.

\subsection{Codimension-Two Spacelike Surfaces}
\label{subsec:codimtwo}

So far we have focused on surfaces of general codimension and signature.  Since our ultimate goal is to apply this formalism to subregion/subregion duality, let us now explicitly restrict to the case of spacelike codimension two surfaces in Lorentzian geometries.  Such surfaces have two independent null normal vectors~$k^a$ and~$\ell^a$ with corresponding null expansions
\be
\theta^{(k)} = K_a k^a, \qquad \theta^{(\ell)} = K_a \ell^a,
\ee
so when a spacelike codimension-two \textit{extremal} surface~$\Sigma$ is perturbed by a deviation vector~$\eta^a$, we may interpret~\eqref{eq:Kdotgeneral} (or~\eqref{eq:etagradK}) as computing the perturbation to its expansions:
\be
\label{eq:thetadot}
\dot{\theta}^{(k)} = \dot{K}_a k^a = k^a J (\eta_\perp)_a - s_k, \qquad \dot{\theta}^{(\ell)} = \dot{K}_a \ell^a = \ell^a J (\eta_\perp)_a - s_\ell,
\ee
where we use the notation~$s_k \equiv s \cdot k$,~$s_\ell \equiv s \cdot \ell$.  Our purpose is now to decompose the \textit{scalar} objects~$k^a J (\eta_\perp)_a$ and~$\ell^a J(\eta_\perp)_a$ in a useful choice of basis of the normal bundle of~$\Sigma$.  Such a decomposition has the advantage that the resulting equations are more immediately amenable to a treatment using elliptic operator theory, which is typically formulated in terms of (systems of) scalar elliptic differential equations.

First, note that given a basis~$\{(n_i)^a\}$ of the normal bundle of any (non-null) surface~$\Sigma$ of \textit{arbitrary} codimension, we may define the scalar differential operators~$J_{i,j}$ (the ``components'' of~$J$) via
\be
\label{eq:Lij}
J_{i,j} f \equiv (n_i)^a J \left(f (n_j)_a\right)
\ee
for any scalar~$f$ on~$\Sigma$.  To evaluate these objects, it is convenient to decompose the covariant derivative~$D_a$ on the normal bundle as
\be
\label{eq:normaldecomp}
(n_i)^b D_a u_b = D_a u_i - \sum_{j = 1}^{d-n} {\omega_{a i}}^j u_j \mbox{ with } {\omega_{a i}}^j = (n^j)_b D_a (n_i)^b,
\ee
where~$u^a$ is any vector field normal to~$\Sigma$,~$u_i \equiv u \cdot n_i$ are its components in this basis, the~${\omega_{ai}}^j$ are connection one-forms, and the normal index on~$(n^j)_b$ is raised using~$P^{ij}$, the matrix inverse of the metric on the normal bundle~$P_{ij} \equiv n_i \cdot n_j$.  Restricting now to the case of codimension two, let us take the basis~$\{(n_i)^a\}$ to consist of the null vector field~$k^a$ (which we take to be future-pointing) and another arbitrarily specified vector field~$m^a$ normalized such that~$k \cdot m = 1$.  Then the connection one-forms in this basis can be straightforwardly computed:
\be
\label{eq:connections}
{\omega_{a k}}^k = - {\omega_{a m}}^m = m^b D_a k_b \equiv \chi_a, \qquad {\omega_{ak}}^m = 0, \qquad {\omega_{a m}}^k = (m^b - m^2 k^b) D_a m_b.
\ee
In this context, the object~$\chi_a$ is often called the twist potnetial (not to be confused with the twist~$\omega_{ab}$, also called the vorticity, of a geodesic congruence).  Note that~$\chi_a$ is independent of the choice of~$m^a$, since it is unchanged under the transformation~$m^a \to m^a + f k^a$ for any scalar~$f$; however,~$\chi_a$ still depends on the normalization of~$k^a$, as the transformation~$k^a \to e^f k^a$ sends~$\chi_a \to \chi_a + D_a f$.  In general there is no choice of normalization that sets~$\chi_a = 0$, but note that per the discussion around~\eqref{eq:conformalRicci}, it follows that a sufficient condition for the existence of such a normalization is that~$g_{ab}$ be conformally flat and~$\widetilde{K}^a_{\phantom{a}bc} = 0$.

Second, note that in this basis the induced metric on~$\Sigma$ can be written as
\be
h_{ab} = g_{ab} + m^2 k_a k_b - 2k_{(a} m_{b)},
\ee
and hence, using the definition~\eqref{subeq:Qdef} of~$Q_{ab}$ and the symmetries of the Riemann tensor, we have
\be
\label{eq:Qkk}
Q_{kk} = S_{kk} + R_{kk}, \qquad Q_{km} = S_{km} + R_{km} + R_{kmkm}.
\ee
Restricting now to the case where~$\Sigma$ is extremal, we note that from the definition of Simons' tensor and of the extrinsic curvature that~$S_{kk} = h^{ab} h^{cd} (\grad_a k_c)(\grad_b k_d)$, which is the square of the shear of the null geodesic congruence fired from~$\Sigma$ in the~$k^a$ direction; in particular, since~$S_{kk} \geq 0$, the null curvature condition (NCC) -- which is the statement that~$R_{ab} k^a k^b \geq 0$ for any null vector~$k^a$ -- implies~$Q_{kk} \geq 0$.  We may also re-express the Riemann tensor component~$R_{kmkm}$ using the Gauss equation~\eqref{eq:Gauss} and the fact that~$\Sigma$ is extremal; doing so yields
\be
\label{eq:Qkm}
Q_{km} = -G_{km} + \frac{m^2}{2}\left(2R_{kk} + S_{kk}\right) - \frac{1}{2} \, ^\parallel \! R,
\ee
where~$G_{ab}$ is the Einstein tensor of~$g_{ab}$ and~$^\parallel \! R$ is the Ricci scalar of~$h_{ab}$.

We may now combine these results to compute the scalar differential operators~$J_{k,k}$ and~$J_{k,m}$ defined by~\eqref{eq:Lij}.  From~\eqref{subeq:Ldef}, some simple computations using the definition~\eqref{eq:normaldecomp} with the connection coefficients~\eqref{eq:connections} yields
\bea
J_{k,k} f &= -f Q_{kk}, \label{eq:Raychaudhuri} \\
J_{k,m} f &= -D^2 f + 2\chi^a D_a f + \left(D_a \chi^a - |\chi|^2 - Q_{km} \right)f. \label{eq:MOTSoperator}
\eea
Per~\eqref{eq:thetadot}, when~$\delta g_{ab} = 0$ then~$J_{k,k} f$ computes the change in the expansion~$\theta^{(k)}$ when~$\Sigma$ is perturbed in the direction~$f k^a$; using~\eqref{eq:Qkk} we thus recognize~\eqref{eq:Raychaudhuri} as just the Raychaudhuri equation.  On the other hand,~\eqref{eq:MOTSoperator} computes the change in~$\theta^{(k)}$ when~$\Sigma$ is perturbed in the direction~$f m^a$; using~\eqref{eq:Qkm}, it reproduces the known formula of~\cite{AndMar05} used in analyses of marginally outer trapped surfaces.

Now let us take~$m^a = \ell^a$; writing~$\eta^a = \alpha k^a + \beta \ell^a$ we find that the equation of extremal deviation~\eqref{eq:sourcedgeneralJacobi} decomposes into~$J_{k,\ell} \beta = -J_{k,k} \alpha + s_k$ and~$J_{\ell,k} \alpha = -J_{\ell,\ell} \beta + s_\ell$, or
\begin{subequations}
\label{eqs:scalarcodimtwo}
\begin{align}
-D^2 \beta + 2\chi^a D_a \beta - \left(|\chi|^2 - D_a \chi^a + Q_{k\ell} \right)\beta &= \alpha Q_{kk} + s_k, \\
-D^2 \alpha - 2\chi^a D_a \alpha - \left(|\chi|^2 + D_a \chi^a + Q_{k\ell} \right)\alpha &= \beta Q_{\ell\ell} + s_\ell.
\end{align}
\end{subequations}
This is the desired decomposition into the null basis of the normal bundle for codimension-two spacelike extremal surfaces.

\section{Theory of Quantum Surface Deformations}
\label{sec:quantum}

So far, the only special kinds of surfaces we have considered are classical extremal surfaces.  In the context of perturbative quantum gravity, however, we are interested in a more general class of codimension-two surfaces which are obtained by adding appropriate quantum corrections.  Specifically, in a Lorentzian spacetime consider a codimension-two surface~$\Sigma$ which splits a Cauchy slice in two.  The area of a surface is ordinarily corrected to a ``quantum area'', which incorporates contributions from the entanglement entropy of quantum fields across the surface~\cite{Bek72}. This replacement comes from a rich history grounded in black hole thermodynamics, but it is now understood to be relevant in a broader context.  We refer the reader to~\cite{Wal18} and references therein.  

The ``quantum-corrected'' area is Bekenstein's generalized entropy,
\be
\label{eq:Sgen}
S_\mathrm{gen}[\Sigma] = \frac{A[\Sigma]}{4G_N \hbar} + S_\mathrm{out}[\Sigma],
\ee
where~$S_\mathrm{out}[\Sigma]$ is the von Neumann entropy of any (quantum) matter fields living on the portion of the Cauchy slice ``outside'' of~$\Sigma$, as illustrated in Figure~\ref{fig:Sout} (in more general theories of gravity, the area term will be replaced by some other geometric object~\cite{Don13,Cam13,JacMye93,Wal93,IyeWal94,IyeWal95}; here we just focus on quantum corrections to classical Einstein-Hilbert gravity).  Just as classical extremal surfaces are defined as stationary points of the area functional~$A[\Sigma]$, \textit{quantum} extremal surfaces are defined as stationary points of the generalized entropy~$S_\mathrm{gen}[\Sigma]$.

\begin{figure}[t]
\centering
\includegraphics[page=9]{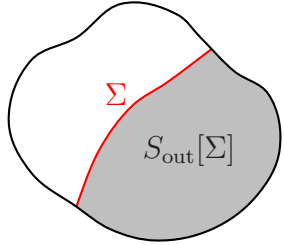}
\caption{A Cauchy-splitting surface~$\Sigma$ (which is necessarily codimension-two)  (red) divides a Cauchy slice into two regions, which allows us to define the entropy~$S_\mathrm{out}[\Sigma]$ of quantum fields to one side of~$\Sigma$.}
\label{fig:Sout}
\end{figure}

There is significant evidence that~$S_\mathrm{gen}$ is UV-finite: renormalization of $1/G_{N}$ cancels out divergences in $S_{\mathrm{out}}$ (see~\cite{BouFis15} and references therein).  It is therefore normally most physically relevant to work with $S_{\mathrm{gen}}$ as a complete quantity without dividing it into geometric and entropic components.  In the present section, however, our goal is to describe the geometric deformations of quantum extremal surfaces, which makes it natural to work with the terms~$A[\Sigma]$ and~$S_\mathrm{out}[\Sigma]$ separately.  To ensure that this is well-defined, we implicitly assume that we have imposed a UV cutoff and renormalization scheme that renders both geometric and entropic terms independently finite (the sum total will be independent of both the cutoff and the scheme). The formalism described in Section~\ref{sec:classical} above is of course well-suited to studying perturbations of the area, but we must take some care to treat the entropy term~$S_\mathrm{out}[\Sigma]$ properly.  In particular, because~$S_\mathrm{out}$ is nonlocal, we must make use of functional derivatives; we therefore pause here to set up the appropriate formalism for treating functional derivatives covariantly before proceeding.

\subsection{Distributional Tensors}
\label{subsec:multiloc}

Because functional derivatives involve global deformations, they often yield ``multilocal'' objects, which are typically distribution-valued; the purpose of this section is to give a precise definition of these non-local, distributional tensors.  To do so, first recall that ordinary tensors over the tangent space~$T_p M$ of a point~$p$ are defined by their action on vectors in~$T_p M$: that is, the tensor~${V_{a_1 \cdots a_k}}^{b_1 \cdots b_l}$ over the tangent space~$T_p M$ is a linear map
\be
V : (T_p M)^k \times (T^*_p M)^l \to \mathbb{R}
\ee
given by
\be
V(v_1, \ldots, v_k, u^1, \ldots, u^l) = {T_{a_1 \cdots a_k}}^{b_1 \cdots b_l} (v_1)^{a_1} \cdots (v_k)^{a_k} (u^1)_{b_1} \cdots (u^l)_{b_l}
\ee
for any~$(v_i)^a \in T_p M$ and~$(u^i)_b \in T^*_p M$.  We define \textit{distributional} tensors on a surface~$\Sigma$ in an analogous way as linear, integral maps from \textit{fields} on~$\Sigma$ to~$\mathbb{R}$.  Precisely, letting~$\Fcal(\Sigma)$ denote the space of scalar fields on~$\Sigma$, then on a given surface~$\Sigma$ we consider a map\footnote{In a slight abuse of notation, here we use~$T_\Sigma M$ and~$T^*_\Sigma M$ to refer to \textit{sections} of the bundles~$T_\Sigma M$ and~$T^*_\Sigma M$, i.e.~to vector and dual vector fields on~$\Sigma$.}
\be
\label{eq:Vmap}
V : (T_\Sigma M)^k \times (T^*_\Sigma M)^l \times \Fcal(\Sigma)^m \to \mathbb{R}
\ee
given explicitly as
\begin{multline}
\label{eq:multilocaldef}
V(v_1, \ldots, v_k, u^1, \ldots, u^l, f_1, \ldots, f_m) = \\ \int_{\Sigma^{k+l+m}} {V_{a_1 \cdots a_k}}^{b_1 \cdots b_l}(p_i, q_i, r_i) \left(\prod_{i = 1}^k (v_i)^{a_i}(p_i)\right) \left(\prod_{i = 1}^l (u^i)_{b_i}(q_i)\right) \left(\prod_{i = 1}^m f_i(r_i)\right)
\end{multline}
with the integral running over~$k+l+m$ copies of~$\Sigma$ labeled by the points~$p_i$,~$q_i$, and~$r_i$ with the natural volume element understood.  Note that each index of~${V_{a_1 \cdots a_k}}^{b_1 \cdots b_l}$ acts on the tangent space of a different point, and in addition~${V_{a_1 \cdots a_k}}^{b_1 \cdots b_l}$ also depends on the~$m$ points which are integrated against scalars in~\eqref{eq:multilocaldef}; the object~${V_{a_1 \cdots a_k}}^{b_1 \cdots b_l}$ is thus an example of what we mean by a ``multilocal'' distributional tensor field on~$\Sigma$.  More generally, we can of course consider maps of the form~\eqref{eq:multilocaldef} for which more than one index of~${V_{a_1 \cdots a_k}}^{b_1 \cdots b_l}$ acts on the tangent space of the \textit{same} point, e.g.~we may consider an object like~$V_{abc}(p,p')$, for which the first two indices act on~$T_p M$ and the last index acts on~$T_{p'} M$.  Indeed, any ordinary tensor field~${V_{a_1 \cdots a_k}}^{b_1 \cdots b_l}(p)$ (all of whose indices act on the tangent space of the point~$p$) can be thought of as a limiting case: given any such ordinary tensor field, we can always define the map
\be
V(v_1, \ldots, v_k, u^1, \ldots, u^l) = \int_\Sigma {V_{a_1 \cdots a_k}}^{b_1 \cdots b_l} (v_1)^{a_1} \cdots (v_k)^{a_k} (u^1)_{b_1} \cdots (u^l)_{b_l},
\ee
where the integral runs over a single copy of~$\Sigma$.  We will refer to any such tensor field over~$\Sigma$, and not just those defined as in~\eqref{eq:multilocaldef}, as a distributional tensor field on~$\Sigma$.  We will specifically be interested in the case of \textit{functional} multilocal tensor fields, which arise from a functional~$V[\Sigma]$ which yields a map~\eqref{eq:Vmap} (or a generalization theoreof as just discussed) for \textit{any} surface~$\Sigma$.  This includes the case of ordinary functionals of~$\Sigma$, which simply map any surface~$\Sigma$ to a real number (for instance, the area functional~$A[\Sigma]$ or the entropy~$S_\mathrm{out}[\Sigma]$).

The indices of a distributional tensor can be raised and lowered in the standard way by using the metric acting on the appropriate tangent space, but because of their distributional nature, we must be careful with contracting their indices.  When a contraction of~${V_{a_1 \cdots a_k}}^{b_1 \cdots b_l}$ \textit{is} defined -- say, between the indices~$a_1$ and~$b_1$ -- it can be evaluated by setting~$p_1 = q_1$ and then taking a standard contraction over the tangent space of this shared point\footnote{The reason such contractions may not always be well-defined is that they essentially require taking some of the~$(v_i)^a$ and~$(u^i)_a$ in the map~\eqref{eq:multilocaldef} to have delta-function support, but~$V$ is defined via its action on smooth vector fields.}.  On the other hand, outer products of distributional tensors are always well-defined, since they correspond to just multiplying the corresponding functionals~\eqref{eq:Vmap} together.

Finally, let us briefly note that since we will be exclusively interested in the case of functionals, for notational convenience we will often leave the argument~$\Sigma$ implied; i.e.~we will often write~$V$ in place of~$V[\Sigma]$ and~${V_{a_1 \cdots a_k}}^{b_1 \cdots b_l}$ in place of~${V_{a_1 \cdots a_k}}^{b_1 \cdots b_l}[\Sigma]$.  Similarly, we will also often forego explicitly calling objects ``functionals'' when it is clear that they are (in much the same way that ``tensor field'' is often colloquially shortened to just ``tensor'').  We will also sometimes refer to distributional tensors with no indices as distributional scalars.

\subsection{Functional Covariant and Lie Derivatives}
\label{subsec:functionalderiv}

Our purpose now is to generalize the notions of ordinary Lie and covariant derivatives to distributional tensor functionals, which will allow us to treat nonlocal variations of such objects covariantly.

\subsubsection*{Functional Covariant Derivatives}

Heuristically, an ordinary functional derivative captures how a functional varies under an infinitesimal variation.  We would like to generalize this notion to a covariant functional derivative, which should capture how a distributional tensor functional varies as the surface~$\Sigma$ on which it is defined is deformed.  To introduce such a derivative operator, we proceed in complete analogy with the logic via which the ordinary covariant derivative~$\grad_a$ is defined: a functional covariant derivative is a map from a distributional tensor functional~${V_{a_1 \cdots a_k}}^{b_1 \cdots b_l}$ to a distributional tensor functional denoted by~$\Dcal {V_{a_1 \cdots a_k}}^{b_1 \cdots b_l}/\Dcal \Sigma^c$, with~$\Dcal/\Dcal \Sigma^a$ obeying the following properties (which we write with all-lower indices for notational expedience, though the index structure may be general):
\begin{subequations}
\begin{enumerate}
	\item \label{cond:normal} Normal to~$\Sigma$: for any distributional tensor field~$V_{a_1 \cdots a_k}$, we have
		\be
		{h_c}^b \frac{\Dcal V_{a_1 \cdots a_k}}{\Dcal \Sigma^b} = 0.
		\ee
	\item \label{cond:linearity} Linearity: for any distributional tensor fields~$V_{a_1 \cdots a_k}$,~$U_{a_1 \cdots a_k}$ and any~$c_1, c_2 \in \mathbb{R}$, we have
		\be
		\frac{\Dcal}{\Dcal \Sigma^b} \left(c_1 V_{a_1 \cdots a_k} + c_2 U_{a_1 \cdots a_k}\right) = c_1 \frac{\Dcal V_{a_1 \cdots a_k}}{\Dcal \Sigma^b} + c_2 \frac{\Dcal U_{a_1 \cdots a_k}}{\Dcal \Sigma^b}.
		\ee
	\item \label{cond:leibnitz} The Leibnitz rule: for any distributional tensor fields~$V_{a_1 \cdots a_k}$,~$U_{a_1 \cdots a_{k'}}$, we have
		\be
		\frac{\Dcal}{\Dcal \Sigma^c}\left(V_{a_1 \cdots a_k} U_{a_1 \cdots a_{k'}}\right) = V_{a_1 \cdots a_k} \frac{\Dcal U_{a_1 \cdots a_{k'}}}{\Dcal \Sigma^c} + \frac{\Dcal V_{a_1 \cdots a_k}}{\Dcal \Sigma^c} \, U_{a_1 \cdots a_{k'}} .
		\ee
	\item \label{cond:contraction} Commutativity with contraction: for any distributional tensor field ${V^{a_1}}_{a_2 \cdots a_k}$ for which the contraction of the first two indices is well-defined, we have
		\be
		\frac{\Dcal}{\Dcal \Sigma^c} \left({V^b}_{ba_1 \cdots a_{k-1}} \right) = \frac{\Dcal {V^b}_{ba_1 \cdots a_{k-1}}}{\Dcal \Sigma^c}.
		\ee
	\item \label{cond:scalar} Variations of scalars: for any family of smooth surfaces~$\Sigma(\lambda)$ generated by a one-parameter group of diffeomorphisms~$\phi_\lambda$ and for any distributional scalar functional~$F$,
		\be
		\label{eq:scalarvar}
		\left. \frac{dF(p_i)}{d\lambda} \right|_{\lambda = 0} \equiv \left. \frac{d}{d\lambda} F[\Sigma(\lambda)](\phi_\lambda(p_i)) \right|_{\lambda = 0} = \int_\Sigma \frac{\Dcal F(p_i)}{\Dcal \Sigma^a(p')} \, \eta^a(p') \, \bm{\eps}(p'),
		\ee
		where the deviation vector~$\eta^a$ is taken to be normal to~$\Sigma$ and we have written the volume element as~$\bm{\eps}(p')$ to indicate explicitly that the integral is taken over~$p'$ (with the other arguments~$p_i$ kept fixed).
\end{enumerate}
The first condition just requires the functional derivative of any object on a surface~$\Sigma$ to be normal to~$\Sigma$ in its ``derivative'' index; this captures the notion that variations in the shape of~$\Sigma$ are contained only in the normal components~$\eta^a_\perp$ of a deviation vector.  The next three conditions are identical to their counterparts for the ordinary covariant derivative~$\grad_a$, and we will not discuss them further.  The fifth condition is the functional generalization of the requirement for ordinary covariant derivatives~$\grad_a$ that for any vector~$v^a$ and scalar field~$f$,~$v(f) = v^a \grad_a f$; it is the crucial property that differentiates functional derivatives from other kinds of derivatives.  In Appendix~\ref{subapp:functionalcovariant}, we show that covariant functional derivatives satisfying properties~\ref{cond:normal}-\ref{cond:scalar} exist by relating them to appropriately constructed ordinary functional derivatives associated to an arbitrary coordinate system.

There are many choices of functional derivative that satisfy the above properties; specifying a unique derivative operator involes imposing some additional constraint.  This is analogous to the freedom that we have in defining the ordinary covariant derivative $\nabla_{a}$. In that case, since the only tensor with which a geometry~$(M,g_{ab})$ comes equipped is the metric, the natural  requirements that uniquely fix~$\grad_a$ are that it be torsion-free and metric-compatible:~$\grad_{[a} \grad_{b]} f = 0$ and~$\grad_a g_{bc} = 0$.  We could generalize these conditions to uniquely fix a preferred functional covariant derivative, but it is much more intuitive to instead note that we can use the ordinary covariant derivative itself to uniquely fix~$\Dcal/\Dcal \Sigma^a$: we require
\begin{enumerate}
	\setcounter{enumi}{5}
	\item \label{cond:compatible} Compatibility with~$\grad_a$: for any \textit{ordinary} tensor field~${V_{a_1 \cdots a_k}}^{b_1 \cdots b_l}(p)$ on~$M$, we have that on any surface~$\Sigma$,
		\be
		\label{eq:compatible}
		\frac{\Dcal {V_{a_1 \cdots a_k}}^{b_1 \cdots b_l}(p)}{\Dcal \Sigma^c(p')} = \delta(p,p') {P_c}^d \grad_d {V_{a_1 \cdots a_k}}^{b_1 \cdots b_l},
		\ee
		where~$\grad_a$ is the preferred (torsion-free and metric-compatible) ordinary covariant derivative operator, and where~$\delta(p,p')$ is the covariant Dirac delta function on~$\Sigma$, defined by~$\int_\Sigma f(p') \delta(p,p') \bm{\eps}(p') = f(p)$ for any ordinary scalar field~$f$ on~$\Sigma$.
\end{enumerate}
\end{subequations}
In Appendix~\ref{subapp:functionalcovariant} we verify that this condition suffices to uniquely specify~$\Dcal/\Dcal\Sigma^a$ and we show that the connection relating it to coordinate functional derivatives is given by (a distributional version of) the usual Christoffel symbols.  We note that it follows that the functional covariant derivative is metric-compatible:~$\Dcal g_{ab}(p)/\Dcal \Sigma^c(p') = 0$.

\subsubsection*{Functional Lie Derivatives}

Our next task is the generalization of the notion of Lie derivatives to distributional tensor functionals.  Recall that for an ordinary tensor field~${V_{a_1 \cdots a_k}}^{b_1 \cdots b_l}$ on~$M$, the Lie derivative along a vector field $\eta^{a}$ roughly measures how ${V_{a_1 \cdots a_k}}^{b_1 \cdots b_l}$ changes with flow along $\eta^{a}$.  More precisely, we introduce a one-parameter group of diffeomorphisms~$\phi_\lambda$ with generator~$\eta^a$; the Lie derivative then computes the difference between~${V_{a_1 \cdots a_k}}^{b_1 \cdots b_l}$ and its pullback~$\phi^*_{-\lambda} {V_{a_1 \cdots a_k}}^{b_1 \cdots b_l}$:
\be
\label{eq:normalLie}
\pounds_\eta {V_{a_1 \cdots a_k}}^{b_1 \cdots b_l} = \lim_{\lambda \to 0} \frac{\phi^*_{-\lambda}{V_{a_1 \cdots a_k}}^{b_1 \cdots b_l} - {V_{a_1 \cdots a_k}}^{b_1 \cdots b_l}}{\lambda}.
\ee
This definition also applies to a distributional tensor functional~${V_{a_1 \cdots a_k}}^{b_1 \cdots b_l}[\Sigma]$, but we must be careful to specify how the pullback should be interpreted.  This is quite straightforward: given a family of surfaces~$\Sigma(\lambda) = \phi_\lambda(\Sigma)$, the functional~${V_{a_1 \cdots a_k}}^{b_1 \cdots b_l}$ defines a distributional tensor~${V_{a_1 \cdots a_k}}^{b_1 \cdots b_l}[\Sigma(\lambda)]$ on each surface, as shown in Figure~\ref{fig:functionalpullback}.  Then for each~$p \in \Sigma$, the tangent space of the ``evolved'' point~$\phi(p) \in \Sigma(\lambda)$ can be pulled back to~$T_p M$ in the usual way, allowing us to pull the distributional tensor field~${V_{a_1 \cdots a_k}}^{b_1 \cdots b_l}[\Sigma(\lambda)]$ back to~$\Sigma$ using~$\phi^*_{-\lambda}$ as usual.  We thus define the functional Lie derivative as
\be
\label{eq:functionalLie}
\pounds_\eta {V_{a_1 \cdots a_k}}^{b_1 \cdots b_l}[\Sigma] = \lim_{\lambda \to 0} \frac{\phi^*_{-\lambda}{V_{a_1 \cdots a_k}}^{b_1 \cdots b_l}[\phi_\lambda(\Sigma)] - {V_{a_1 \cdots a_k}}^{b_1 \cdots b_l}[\Sigma]}{\lambda},
\ee
where we emphasize that we are keeping the notation~$\pounds_\eta$ unchanged because this definition reproduces the conventional one~\eqref{eq:normalLie} when acting on ordinary tensor fields on~$M$.

\begin{figure}[t]
\centering
\includegraphics[page=10]{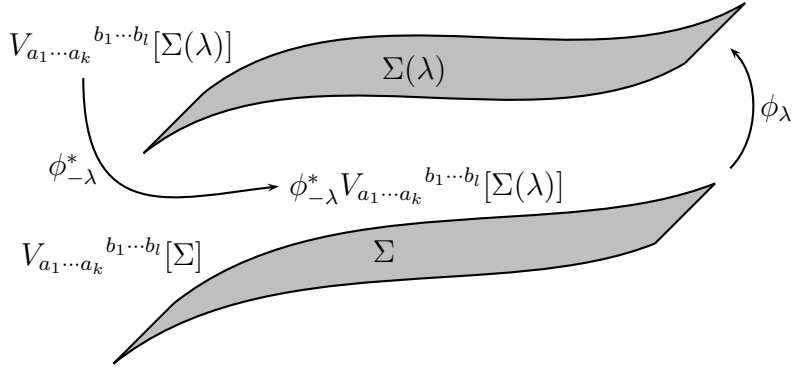}
\caption{For a family of surfaces~$\Sigma(\lambda) = \phi_\lambda(\Sigma)$ defined by a one-parameter group of diffeomorphisms~$\phi_\lambda$, a multilocal tensor functional~${V_{a_1 \cdots a_k}}^{b_1 \cdots b_l}$ yields a multilocal tensor~${V_{a_1 \cdots a_k}}^{b_1 \cdots b_l}[\Sigma(\lambda)]$ on each surface.  Each of these tensors can be pulled back to~$\Sigma$ in the usual way using the pullback~$\phi^*_{-\lambda}$; the difference of these tensors in the limit~$\lambda \to 0$ is what defines the functional Lie derivative.}
\label{fig:functionalpullback}
\end{figure}

We would now like to express the Lie derivative of functionals in terms of the functional covariant derivative.  It follows immediately from~\eqref{eq:scalarvar} that when~$\eta^a$ is normal to~$\Sigma$, the Lie derivative of any distributional scalar functional~$F$ is given by
\be
\label{eq:Liescalar}
\pounds_\eta F = \int_\Sigma \frac{\Dcal F}{\Dcal \Sigma^a(p')} \, \eta^a(p') \, \bm{\eps}(p').
\ee
The analogous expression for general distributional tensors is derived in Appendix~\ref{subapp:functionalLie}; the relevant result for our purposes is that for a dual vector field~$V_a$,
\be
\label{eq:LieVdown}
\pounds_\eta V_a = \int_\Sigma \frac{\Dcal V_a}{\Dcal \Sigma^b(p')} \eta^b(p') \, \bm{\eps}(p') + V_b \grad_a \eta^b.
\ee
Note that when~$V_a$ is an ordinary dual vector field on~$M$, this reproduces (using~\eqref{eq:compatible}) the usual expression for the Lie derivative, as it must.  (The generalization to the case where~$\eta^a$ is not normal to~$\Sigma$ is straightforward and can be found in Appendix~\ref{subapp:functionalLie}.)

\subsection{Equation of Quantum Extremal Deviation}

We have now prepared the technology necessary to derive the analogue of the equation of extremal deviation,~\eqref{eq:sourcedgeneralJacobi}, for quantum extremal surfaces (which must be Cauchy-splitting in order for~$S_\mathrm{out}$ to be defined).  A quantum extremal surface~$\Sigma$ is defined by the condition that for any one-parameter perturbation~$\Sigma(\lambda)$ of it,~$dS_\mathrm{gen}[\Sigma(\lambda)]/d\lambda|_{\lambda = 0} = 0$, and hence~$\Dcal S_\mathrm{gen}/\Dcal \Sigma^a = 0$~\cite{EngWal14}.  From the definition~\eqref{eq:Sgen}, the first area variation formula~\eqref{eq:firstareavar}, and~\eqref{eq:scalarvar}, we have for any~$\eta^a$ normal to~$\Sigma$,
\be
0 = \int_\Sigma \left(K_a + 4G_N \hbar \, \frac{\Dcal S_\mathrm{out}}{\Dcal \Sigma^a} \right) \eta^a \, \bm{\eps};
\ee
thus the condition of quantum extremality is
\be
\label{eq:quantumextremal}
K_a + 4G_N \hbar \, \frac{\Dcal S_\mathrm{out}}{\Dcal \Sigma^a} = 0.
\ee
For this reason it is natural to define the object~$4G_N \hbar \Dcal S_\mathrm{gen}/\Dcal \Sigma^a$ as the ``quantum'' mean curvature of a surface; its component in a null direction~$k^a$ normal to~$\Sigma$ is what~\cite{BouFis15} coined the quantum expansion of the surface in the~$k^a$ direction:
\be
\label{eq:Theta}
\Theta^{(k)} = 4 G_N \hbar \, k^a \frac{\Dcal S_\mathrm{gen}}{\Dcal \Sigma^a}.
\ee
Let us pause to preempt a point of potential perplexity.  In~\cite{BouFis15},~$\Theta^{(k)}$ was defined by considering perturbations of~$\Sigma$ along a null pencil\footnote{The functional derivative~$\delta/\delta V$ used in~\cite{BouFis15} to define~$\Theta^{(k)}$ is realted to our~$\Dcal/\Dcal\Sigma^{a}$ via~$k^a \Dcal S_\mathrm{gen}/\Dcal \Sigma^a = h^{-1/2} \delta S_\mathrm{gen}/\delta V$, where~$h^{1/2}$ is the volume element on~$\Sigma$.}; this construction is sufficient to obtain a null (or more generally, a spacelike) component of~$\Dcal S_\mathrm{gen}/\Dcal \Sigma^a$, since such perturbations keep~$\Sigma$ spacelike.  However, an analogous construction cannot be used to obtain timelike components of~$\Dcal S_\mathrm{gen}/\Dcal \Sigma^a$, since perturbations of~$\Sigma$ along a timelike pencil can make~$\Sigma$ timelike around the perturbation, rendering~$S_\mathrm{out}$ ill-defined.  In our formalism, however, the object~$\Dcal S_\mathrm{out}/\Dcal \Sigma^a$ is defined in a distributional sense by equation~\eqref{eq:scalarvar}; in particular, its definition involves perturbing~$\Sigma$ into one-parameter families of \textit{smooth} surfaces~$\Sigma(\lambda)$, which for sufficiently small~$\lambda$ must be spacelike and achronal as~$\Sigma$ is.  Indeed, that~$\Dcal S_\mathrm{gen}/\Dcal \Sigma^a$ is well-defined as a dual vector is clear from the fact that it can be decomposed in a null basis~$\{k^a, \ell^a\}$ (with~$k \cdot \ell = 1$) as
\be
\frac{\Dcal S_\mathrm{gen}}{\Dcal \Sigma^a} = \frac{1}{4 G_N \hbar} \left[\ell_a \Theta^{(k)} + k_a \Theta^{(\ell)}\right],
\ee
with each of~$\Theta^{(k)}$,~$\Theta^{(\ell)}$ well-defined.

Now, to use~\eqref{eq:quantumextremal} to derive the quantum analogue of~\eqref{eq:sourcedgeneralJacobi} let us again consider a one-parameter family of surfaces~$\Sigma(\lambda)$.  In Section~\ref{sec:classical}, we required each surface in this family to be extremal with respect to some varying metric~$g_{ab}(\lambda)$; then the family~$\Sigma(\lambda)$ encodes perturbations to a surface~$\Sigma$ as the state of the ambient geometry changes (or as the boundary conditions of~$\Sigma$ are varied).  This variation in the geometry should presumably be coupled to a variation in the state of any matter fields, and so the generalization to the present case is clear: we require that for each~$\lambda$,~$\Sigma(\lambda)$ be a quantum extremal surface in the state with geometry~$g_{ab}(\lambda)$ and with matter entropy functional~$S^{(\lambda)}_\mathrm{out}$.  Differentiating~\eqref{eq:quantumextremal} in~$\lambda$, we thus have that the variation in the entropy term gets a geometric contribution and a contribution from the explicit~$\lambda$-dependence of the one-parameter family of functionals~$S^{(\lambda)}_\mathrm{out}$:
\be
\label{eq:DSderiv}
\left. \frac{d}{d\lambda} \left(\frac{\Dcal S_\mathrm{out}}{\Dcal \Sigma^a}\right) \right|_{\lambda = 0} = \pounds_\eta \left(\frac{\Dcal S_\mathrm{out}}{\Dcal \Sigma^a}\right) + \frac{\Dcal \delta S_\mathrm{out}}{\Dcal \Sigma^a},
\ee
where the functional~$\delta S_\mathrm{out}$ is given explicitly on any fixed surface~$\Sigma$ as
\be
\delta S_\mathrm{out}[\Sigma] \equiv \left. \frac{dS_\mathrm{out}^{(\lambda)}[\Sigma]}{d\lambda} \right|_{\lambda = 0}.
\ee
The variation of~\eqref{eq:quantumextremal} is therefore
\be
\dot{K}_a + 4G_N \hbar \, {P_a}^b \pounds_\eta \left(\frac{\Dcal S_\mathrm{out}}{\Dcal \Sigma^b}\right) + 4G_N \hbar \frac{\Dcal \delta S_\mathrm{out}}{\Dcal \Sigma^a} = 0,
\ee
where we noted by the same arguments used in Section~\ref{sec:classical} that since~$\Dcal S_\mathrm{out}/\Dcal \Sigma^a$ is always normal to~$\Sigma$, its Lie derivative must be as well.  Then taking~$\eta^a = \eta_\perp^a$ to be normal to~$\Sigma$ and using~\eqref{eq:Kdotgeneral},~\eqref{eq:LieVdown}, and the fact that the quantum mean curvature of~$\Sigma$ vanishes, we obtain the desired equation
\begin{multline}
\label{eq:quantumextremaldeviation}
J(\eta_\perp)_a + 4 G_N \hbar \int_\Sigma {P_a}^b \frac{\Dcal^2 S_\mathrm{out}}{\Dcal \Sigma^c(p') \Dcal \Sigma^b} \, \eta_\perp^c(p') \bm{\eps}(p') = \\ s_a + 4G_N \hbar \left[{P_a}^b P^{cd} \frac{\Dcal S_\mathrm{out}}{\Dcal \Sigma^d} \, \delta g_{bc} - \frac{\Dcal \delta S_\mathrm{out}}{\Dcal \Sigma^a}\right].
\end{multline}
This is the sourced equation of quantum extremal deviation, describing how a quantum extremal surface varies in response to a change of the state (including both the geometry and matter entropy) and of its boundary conditions.  Two special cases are worth highlighting.  First, if we set all sources to vanish, we obtain the unsourced equation of extremal deviation, which describes how a quantum extremal surface may be perturbed (due to modifications of its boundary conditions) in a fixed geometry and state while maintaining its quantum extremality:
\be
\label{eq:unsourcedquantumextremaldeviation}
J(\eta_\perp)_a + 4 G_N \hbar \int_\Sigma {P_a}^b \frac{\Dcal^2 S_\mathrm{out}}{\Dcal \Sigma^c(p') \Dcal \Sigma^b} \, \eta_\perp^c(p') \bm{\eps}(p') = 0;
\ee
this is the quantum generalization of~\eqref{eq:unsourcedJacobi}.  Second, we can obtain the ``quantum correction'' to a classical extremal surface by requiring that~$\Sigma$ be a classical extremal surface and that quantum corrections be ``turned on'' with~$\lambda$ by taking~$S^{(\lambda)}_\mathrm{out} = \lambda S_\mathrm{out}$.\footnote{Of course, no matter state actually yields such an entropy functional; for the purposes of obtaining~\eqref{eq:quantumcorrection} we are just thinking of~$S^{(\lambda)}_\mathrm{out}$ as an arbitrary functional we may specify by hand. Since our derivation of~\eqref{eq:quantumextremaldeviation} is purely kinematical, this does not pose a problem.}  We thereby obtain
\be
\label{eq:quantumcorrection}
J(\eta_\perp)_a = - 4G_N \hbar \, \frac{\Dcal S_\mathrm{out}}{\Dcal \Sigma^a}.
\ee

\section{Stability of Extremal Surfaces}
\label{sec:stability}

Extremal surfaces are defined by the requirement that their first area variation vanish.  It is sometimes useful, however, to classify them by additional properties which allows us to give them a clearer geometric interpretation.  The purpose of this section is thus to motivate notions of stability for extremal surfaces; we will focus on the case in which~$h_{ab}$ has definite sign (i.e.~when the components of~$h_{ab}$ in an orthonormal frame are all~$+1$ or all~$-1$), though for completeness we will conclude with a brief mention of stability when~$h_{ab}$ is indeterminate.

When~$h_{ab}$ has fixed sign, the operator~$J$ is elliptic, and the notions of stability discussed here are constraints on the Dirichlet spectrum of~$J$ (that is, the spectrum of~$J$ on the space of perturbations vanishing at~$\partial \Sigma$) which stem from natural geometric considerations; it is for this reason that~$J$ is sometimes called the stability operator.  In short, we will review two notions of stability.  The first is \textit{strong stability}, the requirement that the Dirichlet spectrum of~$J$ be bounded by zero; in certain cases this corresponds to the requirement that an extremal surface be a local minimum or maximum of the area functional (recall that unfortunately here the word ``extremal'' does not necessarily mean an extremum of the area functional).  This notion of stability can be further divided into two sub-cases: an extremal surface is \textit{strictly} strongly stable (or just strictly stable for short) if the Dirichlet spectrum of~$J$ has definite sign, and \textit{marginally} strongly stable (or just marginally stable for short) if the Dirichlet spectrum of~$J$ has semidefinite sign and contains zero.  Our second notion of stability, which we term \textit{weak stability}, is the requirement that the Dirichlet spectrum of~$J$ simply not contain zero, which is related to the continued existence of a surface under perturbations.  Note that a strictly stable surface is also weakly stable, and a marginally stable surface is not weakly stable.

\subsection{Strong Stability}
\label{subsec:strong}

It is useful to begin by considering the case where~$(M,g_{ab})$ is Riemannian, in which the picture is most intuitive.  In this case, we often call an extremal surface~$\Sigma$ ``minimal'' based on the intuition that it should be a minimum of the area functional (with boundary conditions that fix~$\partial \Sigma$).  But not every extremal surface in a Riemannian manifold is minimal: for example, consider portions of great circles (i.e.~one-dimensional extremal surfaces) on the two-sphere.  If a segment of a great circle traverses less than halfway around the sphere, as shown in Figure~\ref{subfig:minimaltwosphere}, it is indeed the minimal-length curve connecting its endpoints.  However, if a segment of a great circle goes \textit{more} more than halfway around the sphere, as in Figure~\ref{subfig:unstabletwosphere}, then small deformations of it can shorten its length while keeping its endpoints fixed.  (This property stems from the fact that a geodesic starting at the North pole has a conjugate point at the South pole.)

\begin{figure}[t]
\centering
\subfigure[]{
\includegraphics[page=11,width=0.25\textwidth]{Figures-pics}
\label{subfig:minimaltwosphere}
}
\hspace{0.5cm}
\subfigure[]{
\includegraphics[page=12,width=0.25\textwidth]{Figures-pics}
\label{subfig:unstabletwosphere}
}
\hspace{0.5cm}
\subfigure[]{
\includegraphics[page=13,width=0.25\textwidth]{Figures-pics}
\label{subfig:marginaltwosphere}
}
\caption{\subref{subfig:minimaltwosphere}: A geodesic on the two-sphere starting at the North pole is the minimal-length curve connecting its endpoints as long as it doesn't reach the South pole.  \subref{subfig:unstabletwosphere}: If it goes past the South pole, small deformations of it can decrease its length.  Case~\subref{subfig:minimaltwosphere} is said to be strictly stable; case~\subref{subfig:unstabletwosphere} is not strongly stable; and case~\subref{subfig:marginaltwosphere} (where a geodesic starts at the North pole and ends at the South pole) is marginally stable.}
\label{fig:twosphere}
\end{figure}

To distinguish between the case where~$\Sigma$ is locally minimal and when it isn't, we must look at second derivatives of the area: that is,~$\Sigma$ being a local minimum of the area functional requires that for any one-parameter deformation~$\Sigma(\lambda)$ with~$\partial \Sigma(\lambda) = \partial \Sigma$, the area~$A(\lambda)$ of these surfaces obeys~$d^2 A/d\lambda^2|_{\lambda = 0} \geq 0$.  In~\eqref{eq:extremalsecondareavar} we show that~$J$ is essentially the Hessian of the area functional, which in particular implies that
\be
\label{eq:inproddef}
\left. \frac{d^2A}{d\lambda^2}\right|_{\lambda = 0} = \int_\Sigma \eta_\perp^a J (\eta_\perp)_a \equiv \inprod{\eta_\perp}{J\eta_\perp},
\ee
and therefore the requirement that~$\Sigma$ be a local minimum of the area functional requires~$\inprod{\eta_\perp}{J\eta_\perp} \geq 0$ for any~$\eta^a_\perp$ which vanishes at~$\partial \Sigma$; in other words, the Dirichlet spectrum of~$J$ must be non-negative\footnote{The Dirichlet spectrum of~$J$ is guaranteed to be real by virtue of the fact that~$J$ is self-adjoint under the inner product~\eqref{eq:inproddef}, which is Hermitian in Riemannian signature.}.  This is a non-trivial constraint independent of the extremality condition~$K_a = 0$, and is precisely the notion of \textit{strong stability} mentioned above.  Moreover, note that the sub-case of strict stability (that is, the requirement that the Dirichlet spectrum of~$J$ be strictly positive) guarantees that~$\Sigma$ is minimal, since all perturbations of strictly stable extremal surfaces must increase their area.  On the other hand, the sub-case of marginal stability (in which zero is in the Dirichlet spectrum of~$J$) is agnostic about whether or not~$\Sigma$ is minimal: it just ensures that perturbations of~$\Sigma$ do not decrease its area to second order (higher derivatives of~$A(\lambda)$ would then be needed to determine whether or not~$\Sigma$ is actually minimal).

To develop this idea a little more explicitly, note that we may always (regardless of signature) use~\eqref{subeq:Ldef} to write
\be
\label{eq:Ldecomp}
\inprod{\eta_\perp}{J\eta_\perp} = \inprod{D\eta_\perp}{D\eta_\perp} - \inprod{\eta_\perp}{Q\eta_\perp},
\ee
where we used the fact that the adjoint of~$D_a$ is~$-D_a$ under the inner product~~\eqref{eq:inproddef} with Dirichlet boundary conditions (see Appendix~\ref{app:variations} for more details).  In Riemannian signature, the object~$\inprod{D\eta_\perp}{D\eta_\perp}$ is non-negative, and we have
\be
\inprod{\eta_\perp}{J\eta_\perp} \geq - \inprod{\eta_\perp}{Q\eta_\perp}.
\ee
Since at each point~$p \in \Sigma$,~${Q^a}_b$ (thought of as a map from~$T_p^\perp M$ to itself) has finitely many (finite) eigenvalues, we conclude that the Dirichlet spectrum of~$J$ must \textit{always} be bounded below (whether or not~$\Sigma$ is strongly stable).  It is this condition that guarantees that the criterion of strong stability is a reasonable one: strong stability simply requires that this lower bound on the Dirichlet spectrum of~$J$ be zero.

More generally, whenever~$h_{ab}$ is positive (negative) definite,~$J$ is elliptic, and so its Dirichlet spectrum must always be bounded below (above).  Strong stability requires this bound to be zero, from which it follows that if~$P_{ab}$ also has definite sign, then we immediately conclude that for a strongly stable extremal surface,
\bea
\left. \frac{d^2A}{d\lambda^2} \right|_{\lambda = 0} &\geq 0 \mbox{ if } h_{ab} \mbox{ and } P_{ab} \mbox{ have the same sign}, \\
\left. \frac{d^2A}{d\lambda^2} \right|_{\lambda = 0} &\leq 0 \mbox{ if } h_{ab} \mbox{ and } P_{ab} \mbox{ have opposite signs},
\eea
with the inequalities being obeyed strictly in the case of strict stability.  Thus a strictly stable extremal surface can sensibly be said to be ``minimal'' (``maximal'') if~$h_{ab}$ and~$P_{ab}$ have the same (opposite) sign.  Examples of minimal surfaces of course include the Riemannian context discussed above, while examples of maximal surfaces include timelike geodesics and spacelike hypersurfaces in a Lorentzian geometry.  Indeed, the case of timelike geodesics has been of crucial importance in the derivation of the Penrose-Hawking singularity theorems: a key result is that a timelike geodesic from~$p$ to~$q$ is strictly stable if and only if it has no points conjugate to~$p$ between~$p$ and~$q$~\cite{HawEll}.

When~$P_{ab}$ does not have definite sign (as in the case of codimension-two spacelike surface in Lorentzian spacetimes), it is clear from~\eqref{eq:Ldecomp} that even if the Dirichlet spectrum of~$J$ is bounded, the inner product~$\inprod{\eta_\perp}{J\eta_\perp}$ is not, and therefore there is no way to ensure that area variations have fixed sign.  Nevertheless, there is still a way to ascribe a physical interpretation to a constraint on the spectrum of~$J$; it is this physical interpretation that yields the notion of weak stability.

\subsection{Weak Stability}
\label{subsec:weak}

If area variations necessarily have indefinite sign when~$P_{ab}$ does, is there another physical notion of stability that we can impose on such surfaces?  A natural one is provided by the notion of stability under perturbations: roughly speaking, an extremal surface~$\Sigma$ is stable under perturbations if it does not ``cease to exist'' under arbitrary deformations of either its boundary~$\partial \Sigma$ or the ambient geometry~$g_{ab}$.  Formally, we require that the sourced equation of extremal deviation~\eqref{eq:sourcedgeneralJacobi} must have a solution for any inhomogeneous Dirichlet boundary conditions on~$\eta_\perp^a$ and for any metric perturbation~$\delta g_{ab}$, which is a necessary condition for there to exists a continuous deformation of~$\Sigma$ which keeps it extremal under arbitrary continuous deformations of~$\partial \Sigma$ and~$g_{ab}$\footnote{We emphasize that this is merely a \textit{necessary} condition: the existence of a solution to~\eqref{eq:sourcedgeneralJacobi} for arbitrary~$\delta g_{ab}$ and~$\eta^a_\perp|_{\partial \Sigma}$ is not in general sufficient to conclude the existence of a one-parameter family of surfaces~$\Sigma(\lambda)$ corresponding to a one-parameter family of metrics~$g_{ab}(\lambda)$ with~$\dot{g}_{ab}(\lambda = 0) = \delta g_{ab}$ and boundary conditions~$\partial \Sigma(\lambda)$ with deviation vector~$\eta^a_\perp|_{\partial \Sigma}$.  See e.g.~\cite{Cam18} for more details.}.  For perturbations of the geometry~$g_{ab}$ but for which the boundary~$\partial \Sigma$ is left fixed, this condition is just the requirement that~$J$ be invertible on the space of vector fields normal to~$\Sigma$ which vanish at~$\partial \Sigma$.  To include deformations of~$\partial \Sigma$, we invoke a more refined version of this statement in the form of the Fredholm alternative.

First, consider an arbitrary elliptic operator~$L$ on~$\Sigma$ which acts on \textit{scalar fields}; then the Fredholm alternative (see~\cite{evans10} for an introductory discussion) is an exclusive alternative which states that precisely one of the following must be true:
\begin{enumerate}
	\item The homogeneous boundary-value problem~$L f = 0$,~$f|_{\partial \Sigma} = 0$ has nontrivial solutions; or
	\item The inhomogeneous boundary-value problem~$L f = s$,~$f|_{\partial \Sigma} = v$ has a unique solution for arbitrary functions~$s$,~$v$.
\end{enumerate}
It is in general not entirely trivial to generalize the Fredholm alternative to elliptic operators that act on a vector bundle over~$\Sigma$, as our operator~$J$ does.  However, since~$J$ is a linear operator, it is expected to satisfy the Fredholm alternative as well~\cite{Igor}.  Thus invoking the Fredholm alternative on the operator~$J$, we conclude that the requirement that the sourced equation of extremal deviation have a solution for any perturbation of the metric or of~$\partial \Sigma$ is equivalent\footnote{We have been a little fast here: strictly speaking, stability under perturbations merely requires that~$J \eta^a = s^a$ have a solution for any source~$s^a$ constructed from a metric pertrbation~$\delta g_{ab}$ as given by~\eqref{subeq:sdef}, for any \textit{arbitrary}~$s^a$.  But it is relatively easy to show that any desired~$s^a$ can be generated from some metric perturbation~$\delta g_{ab}$ in this way, so the two statements are equivalent.}  to the requirement that~$J \eta^a = 0$ have no nontrivial solutions with~$\eta^a|_{\partial \Sigma} = 0$; in other words, that the Dirichlet spectrum of~$J$ cannot contain zero.  We therefore define \textit{weak stability} as the requirement that the Dirichlet spectrum of~$J$ does not contain a zero eigenvalue.

It is useful to revisit the simple example of geodesics on the two-sphere, Figure~\ref{fig:twosphere}, in this context.  A great circle that traverses less than half the sphere, as in Figure~\ref{subfig:minimaltwosphere}, is strictly stable, and therefore must also be weakly stable.  Indeed, it is easy enough to see that small perturbations of one of the endpoints of the geodesics in Figure~\ref{subfig:minimaltwosphere} must cause a small perturbation of the geodesic itself (the same must also be true of small perturbations of the geometry, though perhaps this is less easy to intuit).  Interestingly, a geodesic which traverses more than half the sphere, Figure~\ref{subfig:unstabletwosphere}, is weakly stable even though it's not strongly stable: small perturbations of the endpoints will induce a small perturbation of the geodesic, even if there are deformations that locally decrease its length.  Finally, if the endpoints of the geodesic lie precisely at the North and South poles, as in Figure~\ref{subfig:marginaltwosphere}, it is \textit{not} weakly stable even though it is marginally stable: certain (in fact, generic) small perturbations of one of the endpoints will induce a discontinuous global change in the geodesic (for instance, if the geodesic orginally lies along the line~$\phi = 0$ in the usual spherical coordinates, perturbing one endpoint an arbitrarily small amount onto the line~$\phi = \pi/2$ requires the geodesic to abruptly ``jump'' to this line everywhere).

Finally, it is worth commenting that in the special case of codimension-two spacelike surfaces in Lorentzian spacetimes -- the case of primary interest in the context of entanglement entropy in AdS/CFT -- the notions of strong and weak stability just discussed appear quite naturally.  For instance, the Lewkowykcz-Maldacena proof of RT~\cite{LewMal13} makes use of the replica trick, in which the RT surface is the limit of a family of surfaces in an analytically continued family of geometries.  The requirement that the RT surface be stable ensures that (linearized) such deformations of the surface exist.  Similarly, the maximin construction of HRT surfaces imposes a notion of ``spatial'' strict stability in the following sense.  The maximin surface is found by first finding the minimal surface~$X_\mathrm{min}[H]$ on every possible Cauchy slice~$H$ (containing~$\partial X$), and then maximizing over the area of all such surfaces; the resulting surface~$X_*$ will be a minimal surface lying on some Cauchy slice~$H_*$, and therefore must be strongly stable with respect to the geometry on~$H_*$.  Moreover, the stability requirement imposed in~\cite{Wal12,MarWal19} is that if~$H_*$ is slightly perturbed to some nearby Cauchy slice~$H$, the corresponding minimal surface~$X_\mathrm{min}[H]$ must be a small deformation of~$X_*$; but since a deformation of~$H_*$ perturbs the intrinsic geometry of this Cauchy slice, weak stability of~$X_*$ within~$H_*$ is sufficient to guarantee that linearized variations of~$X_*$ exist for small deformations of~$H_*$.  Thus requiring that~$X_*$ be strictly stable in~$H_*$ is sufficient to ensure a linearized version of the notion of stability enforced in~\cite{Wal12,MarWal19}.

\subsection{Surfaces with $h_{ab}$ of Indefinite Sign}
\label{subsec:indefinitehab}

Before concluding, let us briefly say a few words on the situation in which~$h_{ab}$ does not have fixed sign (for example, a timelike surface of dimension greater than one in a Lorentzian spacetime).  In such a case,~$J$ is not elliptic, and we cannot sensibly refer to its Dirichlet spectrum, let alone make any statements about its boundedness.  Is there a nevertheless a notion of stability we may impose in this case?

The answer, perhaps surprisingly, is in the affirmative, but requires generalizing the notion of an extremal surface.  Specifically, here we have been exclusively focused on the case of surfaces that are stationary points of the area functional; however, we may equally well consider surfaces that are stationary points of some more general geometric functional~$F[\Sigma]$.  Perturbations of such a surface~$\Sigma$ will obey some linear equation~$\widetilde{J} \eta^a = \tilde{s}^a$ (for some differential operator~$\widetilde{J}$) which is the analogue of the equation of extremal deviation~\eqref{eq:sourcedgeneralJacobi}.  If~$h_{ab}$ does not have definite sign, a natural notion of stability of~$\Sigma$ is the requirement that~$\widetilde{J}$ be hyperbolic, so that any ``initial perturbation'' $\eta^a_0$ defined on a spacelike slice of~$\Sigma$ propagates causally to all of~$\Sigma$~\cite{Car92}.  Note that this physical perspective is very similar in spirit to that of weak stability, which requires an extremal surface (with~$h_{ab}$ of definite sign) to continue to exist under arbitrary perturbations of its boundary and of the ambient geometry.

In the case of timelike \textit{extremal} surfaces, however, the operator~$\widetilde{J}$ is just the stability operator~$J$, which is manifestly hyperbolic.  Thus for timelike extremal surfaces, this \textit{dynamical} notion of stability is always satisfied.

\section{Causal Wedge Inclusion}
\label{sec:causal}

We now transition to reaping the benefits of the formalism developed in the first half of the paper by studying some applications in AdS/CFT. As discussed in Section~\ref{sec:intro}, the consistency of subregion/subregion duality manifests in significant constraints on the behavior of surfaces via subregion/subregion duality. In this section we focus on causal wedge inclusion, which we remind the reader requires that for any boundary region~$R$, the HRT surface~$X_R$ and the holographic causal information surface~$C_R$ are nowhere timelike-separated and with~$C_R$ nowhere to the outside of~$X_R$, as shown in Figure~\ref{fig:CWI}.  For arbitrary boundary region~$R$ and bulk geometry obeying the null curvature condition to leading order in~$1/N$, it is known that~$W_C[R]$ is typically a proper subset of~$W_E[R]$, and thus~$X_R$ and~$C_R$ are non-perturbatively separated, a property that cannot be violated by arbitrarily small deformations (as we consider here).  Consequently, in these cases our local variational formalism does not yield any constraints on the bulk geometry from CWI.

\begin{figure}[t]
\centering
\includegraphics[page=1,width=0.3\textwidth]{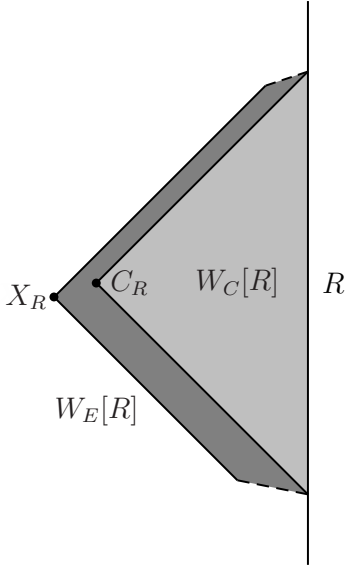}
\caption{Causal wedge inclusion is the requirement that the causal wedge~$W_C[R]$ (light gray) of some boundary region~$R$ must always lie within the entanglement wedge~$W_E[R]$ (dark gray).  This is equivalent to the requirement that the causal information surface~$C_R$ must be spacelike-separated (or marginally, null-separated) towards~$R$ of the HRT surface~$X_R$.}
\label{fig:CWI}
\end{figure}

However, in the non-generic case that CWI is saturated, i.e.~$W_C[R] = W_E[R]$, then arbitrarily small perturbations of either the spacetime or of the entropy~$S_\mathrm{out}[\Sigma]$ run into the danger of violating it; in such cases, our variational formalism constrains the allowed behavior of such perturbations.  Let us therefore consider perturbations to spacetimes and boundary regions that saturate CWI.  To saturate CWI at leading order in~$1/N$, the null boundaries of the causal/entanglement wedge must have zero expansion everywhere, since if we assume the NEC holds at leading order then the outgoing expansion of~$\partial W_C[R]$~($\partial W_E[R]$) must be non-negative (non-positive), and thus if~$\partial W_C[R] = \partial W_E[R]$ this expansion must vanish.  This in turn implies that this boundary is a local Killing horizon in the sense that the generator~$k^a$ of any piecewise null piece of~$\partial W_C[R]$ can be chosen such that~$\pounds_k g_{ab} = 2\grad_{(a} k_{b)} = 0$ on~$\partial W_C[R]$.  Examples of this saturation therefore include the case where~$R$ is (the boundary causal development of) a ball-shaped region and the bulk is vacuum AdS (in which case the causal/entanglement wedge is a Rindler wedge of pure AdS) or the case where~$R$ is a complete connected component of the boundary of a stationary black hole.  Of these cases, the former is more interesting, as pure AdS can be foliated by Rindler horizons and therefore CWI would constrain linear perturbations of the metric everywhere.

Let us therefore consider a ball-shaped region~$R$ on the boundary of pure AdS and examine how the surfaces~$X_R$ and~$C_R$ are perturbed under a perturbation of the state, both in the sense of a perturbation to the spacetime or of the entropy functional~$S_\mathrm{out}[\Sigma]$.

\subsection{Perturbation of $X_R$}

In general, a linearized perturbation to the quantum extremal surface under a deformation of the metric is governed by the sourced equation of quantum extremal deviation~\eqref{eq:quantumextremaldeviation}.  In vacuum and when~$R$ is a ball-shaped region,~$X_R$ is the bifurcation surface~$\Hcal$ of an AdS Rindler horizon, and thus its perturbations are governed by the simpler equation~\eqref{eq:quantumcorrection} which describes how a \textit{classical} extremal surface responds to quantum corrections and a perturbation of the geometry.  Note that~$\Hcal$ is totally geodesic (that is,~${K^a}_{bc} = 0$), and moreover since pure AdS is conformally flat, it follows from the Ricci equation~\eqref{eq:conformalRicci} that the curvature of the normal bundle of~$\Hcal$ vanishes.  There therefore exists a null Fermi-Walker frame; that is, there exists a null basis~$\{k^a, \ell^a\}$ of the normal bundle such that~$D_a k^b = 0 = D_a \ell^b$, which besides the usual normalization~$k \cdot \ell = 1$ we also take to be outward-pointing in the sense that~$k^a$ and~$\ell^a$ both point towards~$W_E[R]$\footnote{In the coordinates of~\eqref{eq:AdS}, we may take, for instance,
\be
k^a = \frac{\rho\sech\chi}{\sqrt{2} \, l} \left((\partial_t)^a - (\partial_\rho)^a\right), \qquad \ell^a = -\frac{\rho\sech\chi}{\sqrt{2} \, l} \left((\partial_t)^a + (\partial_\rho)^a\right).
\ee}.  In this frame,~$\chi_a$ vanishes.  Moreover, using the expressions~\eqref{eq:Qkk} for~$Q_{kk}$ and~$Q_{k\ell}$, as well as the fact that the Riemann tensor of pure AdS can be written as
\be
\label{eq:AdSRiemann}
R_{abcd} = \frac{2}{l^2} g_{a[d} g_{c]b}
\ee
with~$l$ the AdS scale, we find that~$Q_{kk} = 0 = Q_{\ell\ell}$ and~$Q_{k\ell} = -(d-2)/l^2$.  Consequently, in this frame the components of the equation of extremal deviation~\eqref{eq:quantumcorrection} reduce to the decoupled equations
\begin{subequations}
\label{eq:pureAdSJacobi}
\begin{align}
D^2_{\Hcal} \alpha_E -\frac{d-2}{l^2} \, \alpha_E &= -s_\ell + 4 G_N \hbar \, \ell^a \frac{\Dcal \delta S_\mathrm{out}}{\Dcal \Sigma^a}, \\ 
D^2_\Hcal \beta_E - \frac{d-2}{l^2} \, \beta_E &= -s_k + 4 G_N \hbar \, k^a \frac{\Dcal \delta S_\mathrm{out}}{\Dcal \Sigma^a},
\end{align}
\end{subequations}
where the subscript~$E$ denotes the fact that~$\alpha_E$ and~$\beta_E$ are the components of the deviation vector~$\eta^a_E = \alpha_E k^a + \beta_E \ell^a$ describing the perturbation to the HRT surface, and~$D^2_{\Hcal}$ denotes the Laplacian on the Rindler horizon~$\Hcal$.

Since the perturbed surface~$X[R]$ and unperturbed horizon~$\Hcal$ are both anchored to the same boundary region, the perturbations~$\alpha_E$,~$\beta_E$ must vanish at~$\partial \Hcal$.  Thus introducing the Dirichlet Green's function~$G(p,p')$ which obeys
\be
\label{eq:Greens}
D^2_{\Hcal} G(p,p') - \frac{d-2}{l^2} \, G(p,p') = -\delta(p,p')
\ee
and vanishes as either~$p$ or~$p'$ approach~$\partial \Hcal$, the system~\eqref{eq:pureAdSJacobi} can be solved immediately:
\begin{subequations}
\label{eq:extremalalphabeta}
\begin{align}
\alpha_E &= \int_\Hcal G(p,p') \left[s_\ell(p') - 4 G_N \hbar \, \ell^a \frac{\Dcal \delta S_\mathrm{out}}{\Dcal \Sigma^a(p')} \right]\bm{\eps}(p'),\\
\beta_E &= \int_\Hcal G(p,p') \left[s_k(p') - 4 G_N \hbar \, k^a \frac{\Dcal \delta S_\mathrm{out}}{\Dcal \Sigma^a(p')}\right]\bm{\eps}(p').
\end{align}
\end{subequations}
The form of~$G(p,p')$ is given both in general dimension and explicitly for~$d = 3, \ldots, 7$ in Appendix~\ref{app:Greens}, though the results are not particularly illuminating.

\subsection{Perturbation of $C_R$}

Next, let us compute the deformation of the causal surface~$C_R$.  By its definition,~$C_R$ is the intersection of the future and past causal horizons~$\partial J^-[R]$,~$\partial J^+[R]$, which are generated by null geodesics fired from the boundary.  In pure AdS, when~$R$ is (the causal development of) a ball-shaped region these causal horizons are just Poincar\'e horizons~$\Pcal^+$,~$\Pcal^-$ fired from the future and past tips of the boundary causal diamond~$R$, and their intersection defines the surface~$\Hcal$.  The deformation of~$\Pcal^\pm$, and thus of the causal rim~$C_R$, under a perturbation of the metric can be found by computing the deformation of the individual null generators of~$\Pcal^\pm$, which obey the sourced equation of \textit{geodesic} deviation~\eqref{eq:sourcedJacobi}.

Consider therefore the null basis~$\{k^a, \ell^a\}$ of the normal bundle of~$\Hcal$ introduced above, and extend it to the entirety of the Poincar\'e horizons~$\Pcal^\pm$ via parallel transport along their null generators; then~$k^a$ generates~$\Pcal^+$ and~$\ell^a$ generates~$\Pcal^-$.  Next, note every point~$p$ on~$\Hcal$ lies at the intersection of two null geodesics~$\gamma^+ \subset \Pcal^+$ and~$\gamma^- \subset \Pcal^-$.  If~$\xi_\pm^a$ are the deviation vectors along these geodesics describing their response to a metric perturbation, then the perturbation of the point~$p$ is governed by the components of~$\xi^a_\pm$ normal to~$\Hcal$.  In particular, let~$\eta_C^a = \alpha_C k^a + \beta_C \ell^a$ be the deviation vector field describing the perturbation of~$\Hcal$ to~$C_R$; then the components~$\alpha_C$ and~$\beta_C$ at~$p$ are determined by the components~$k \cdot \xi_+$ (which determines how much~$\gamma^+$ is perturbed in the~$\ell^a$ direction) and~$\ell \cdot \xi_-$ (which determines how much~$\gamma_-$ is perturbed in the~$k^a$ direction), as shown in Figure~\ref{fig:perturbedgenerators}:
\be
\alpha_C(p) = \ell \cdot \xi_-(p), \qquad \beta_C(p) = k \cdot \xi_+(p).
\ee

\begin{figure}[t]
\centering
\includegraphics[page=14,width=0.25\textwidth]{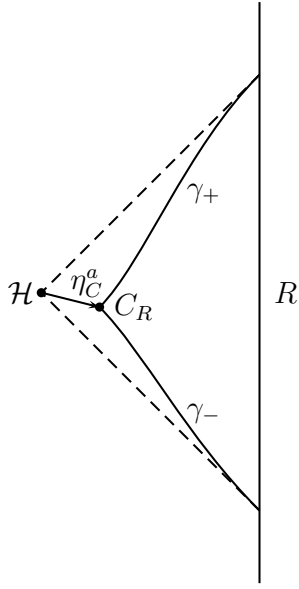}
\caption{Under a deformation of the spacetime, the Rindler horizon~$\Hcal$ is perturbed to the causal rim~$C_R$.  At each point~$p \in \Hcal$, the components of the corresponding deviation vector field~$\eta^a_C(p)$ are determined by the perturbation of the null geodesics~$\gamma^\pm$ which intersect at~$p$.  }
\label{fig:perturbedgenerators}
\end{figure}

The components~$\ell \cdot \xi_-$ and~$k \cdot \xi_+$ can be computed easily by contracting~\eqref{eq:sourcedJacobi} with either~$k^a$ or~$\ell^a$ and using the expression~\eqref{eq:AdSRiemann} for the Riemann tensor in pure AdS; a straightforward integration yields
\be
\label{eq:causalalphabeta}
\alpha_C = \xi_- \cdot \ell = \frac{1}{2} \int_{\gamma_-} \delta g_{ab} \ell^a \ell^b d\sigma_-, \qquad \beta_C = \xi_+ \cdot k = \frac{1}{2} \int_{\gamma_+} \delta g_{ab} k^a k^b d\sigma_+,
\ee
where the integrals are taken over the (incomplete) null geodesics~$\gamma_\pm$ connecting~$p$ to the past and future tips of the boundary causal diamond~$R$, and the affine parameters are associated to the particular normalization of~$k^a$ and~$\ell^a$ as~$k^a = (\partial_{\sigma_+})^a$,~$\ell^a = (\partial_{\sigma_-})^a$ (we have also set the boundary conditions so that~$\gamma_\pm$ end at the future and past tips of the causal diamond~$R$; see~\cite{EngFis16} for more details).  The deviation vector field~$\eta_C^a$ is therefore obtained everywhere on~$\Hcal$ by computing these integrals along \textit{all} the null generators of~$\Pcal^\pm$.

\subsection{A Constraint from CWI}

We may now assemble these results.  CWI requires that~$C_R$ lie on the side of~$X_R$ closer to~$R$; since the deviation vector between the perturbed HRT surface and the perturbed causal rim is just the difference~$\eta_C^a - \eta_E^a$, the infinitesimal statement of CWI is that the components of~$\eta_C^a - \eta_E^a$ in the~$\{k^a, \ell^a\}$ basis must both be positive.  From~\eqref{eq:extremalalphabeta} and~\eqref{eq:causalalphabeta}, we thus conclude that for each~$p \in \Hcal$, we must have
\begin{subequations}
\label{eqs:CWI}
\begin{align}
\frac{1}{2} \int_{\gamma_+(p)} \delta g_{ab} k^a k^b \, d\sigma_+ &\geq \int_\Hcal G(p,p') k^a \left[h^{bc} \delta \Gamma_{abc}(p') - 4 G_N \hbar \, \frac{\Dcal \delta S_\mathrm{out}}{\Dcal \Sigma^a(p')} \right] \bm{\eps}(p'), \\
\frac{1}{2} \int_{\gamma_-(p)} \delta g_{ab} \ell^a \ell^b \, d\sigma_- &\geq \int_\Hcal G(p,p') \ell^a \left[h^{bc} \delta \Gamma_{abc}(p') - 4 G_N \hbar \, \frac{\Dcal \delta S_\mathrm{out}}{\Dcal \Sigma^a(p')} \right] \bm{\eps}(p'),
\end{align}
\end{subequations}
where we have replaced~$s_k$ and~$s_\ell$ with their explicit expressions using~\eqref{subeq:sdef}, and the integrals on the left-hand side are taken over the null geodesics~$\gamma_\pm(p)$ which intersect at~$p$.

In order for a linearized perturbation of the vacuum to satisfy CWI, these constraints must be obeyed on \textit{every} Rindler horizon~$\Hcal$: they thus constrain such a perturbation everywhere.  It is worth noting that this constraint is physical, since~\eqref{eqs:CWI} are gauge-invariant in the sense that they are unaffected by an infinitesimal diffeomorphism~$\delta g_{ab} = 2\grad_{(a} \zeta_{b)}$ (with~$\zeta_a$ falling off sufficiently fast asymptotically).  They are, however, gauge-fixed in the sense that they require choosing a basis~$\{k^a, \ell^a\}$ in which the twist~$\chi_a$ vanishes (though they do not depend on the overall normalization of this basis, i.e.~taking~$k^a \to c k^a$ for some number~$c$ leaves them unchanged).

Finally, let us highlight two special cases.  First, consider taking the region~$R$ to be very large, so that it wraps almost all the way around the AdS boundary.  Then the corresponding Rindler horizon~$\Hcal$ moves out towards the boundary, and (assuming sufficient falloff of~$\delta g_{ab}$ and~$\Dcal \delta S_\mathrm{out}/\Dcal \Sigma^a$) the right-hand sides of~\eqref{eqs:CWI} vanish, yielding only the condition~$\int_\gamma \delta g_{ab} k^a k^b d\sigma \geq 0$ along every \textit{complete} null geodesic~$\gamma$ of the background pure AdS.  This is precisely the so-called boundary causality condition (BCC) of~\cite{EngFis16}, which is dual to microcausality in the bulk CFT, and was shown to be intimately connected to the chaos bound in~\cite{AfkHar17}.   The condition~\eqref{eqs:CWI} can therefore be interpreted as a ``fine-grained'' CFT chaos bound, potentially involving some entanglement entropy contribution\footnote{Also of interest is the fact that the near-boundary version of CWI implies the quantum half averaged null energy condition~\cite{AkeKoe16}.}.

Second, if we were to work in the regime of quantum field theory on a \textit{fixed} pure AdS spacetime, the quantum extremal surface would still be different from the classical extremal surface (which still coincides with the causal wedge), since it is a stationary point of a different functional.  While fixing the background geometry is not consistent with the full equations of motion (which enforce backreaction that comes in at the same order as the $S_{\mathrm{out}}$ contribution), the GSL nevertheless \textit{still} implies that the causal wedge is contained inside the (quantum) entanglement wedge~\cite{EngWal14}.  We can prove a partial converse to this: if causal wedge inclusion holds perturbatively around pure AdS, then we get a smeared GSL on the Rindler horizon. Explicitly, from~\eqref{eqs:CWI} we obtain the constraint
\be
\label{eq:smearedGSL}
\int_\Hcal G(p,p') \, \delta \Theta^{(k)}(p') \geq 0
\ee
and likewise for~$\Theta^{(\ell)}$ (where the quantum expansion~$\Theta^{(k)}$ is as defined in~\eqref{eq:Theta}).  Now, it is straightforward to check (either from the definition~\eqref{eq:Greens} or from the explicit expressions in Appendix~\ref{app:Greens}) that~$G(p,p') \geq 0$ for all~$p$,~$p'$.  Since the GSL requires~$\Theta^{(k)}$ to be non-negative on slices of any causal horizon with (future-pointing) generator~$k^a$, we indeed get a smeared GSL for quantum fields on Poincar\'e horizons of pure AdS.  While this result is already known (since the GSL is known to hold along Killing horizons~\cite{Wal11}), this derivation provides some clarification on a potential converse to the results that assume the GSL or QFC and derive bulk consistency conditions.

\section{Entanglement Wedge Nesting}
\label{sec:EWN}

To further illustrate the power of this formalism at work, we now exploit it to explore the implications of entanglement wedge nesting for the bulk geometry.  As a warmup, we first proceed in the converse direction by deriving (a local version of) EWN from the NCC.  The proof is obtained by using maximum principles for so-called cooperative elliptic systems, and it is free of many of the subtleties that feature in the geometric proofs~\cite{Wal12,EngWal13} due to complications arising from caustics and non-local intersections of null geodesics; we are also able to relax the assumptions used in~\cite{EngWal13}.  We then proceed in the desired direction: assuming EWN, we derive constraints on the bulk geometry.  For pedagodical clarity, we first consider the purely classical case~($\hbar = 0$) before including quantum corrections.

\subsection{Warmup: EWN from the NCC}
\label{subsec:EWNfromNCC}

There are currently two versions of the proof that the NCC implies EWN, which use two different sets of assumptions.  The first, from~\cite{Wal12}, assumes minimality and homology of the surfaces as well as a notion of ``stability'' for HRT surfaces (which is related, but different, to the notions of stability discussed in Section~\ref{sec:stability}); the second, from~\cite{EngWal13}, proves nesting for extremal surfaces that are part of a smoothly deformable family satisfying some global conditions.  The result we will present makes no global assumptions like those made in~\cite{EngWal13}.  We also do not assume minimality or homology, but consequently we are only able to prove local nesting -- we make no claims about the situation in which the HRT surface jumps due to a phase transition realized by the minimality constraint.  Our proof is quite expedient thanks to the fact that the extremal deviation equation~\eqref{eq:unsourcedJacobi} is an eigenvalue equation for an elliptic differential operator, and such operators have a number of useful properties (see e.g.~\cite{evans10, GilbargTrudinger, ProtterWeinberger} for introductory textbooks on scalar elliptic operator theory).  Most useful to our present purposes are the so-called \textit{maximum and minimum principles} for elliptic PDEs, which we now review.

To develop some intuition, we remind the reader of the standard minimum principle for ordinary differential equations (ODEs).  Consider an ordinary (single variable) differential inequality on some open connected interval $U$ of $\mathbb{R}$:
\be
\label{eq:ODE}
-u''(x) + b(x) u'(x)+ c(x) u(x) \geq 0,
\ee
where $u(x)$ is twice differentiable on $U$ and~$b(x)$ and~$c(x)$ are known functions on~$U$.  If $c(x) > 0$ everywhere in~$U$, then $u(x)$ must be nonnegative at a local minimum in~$U$ (since there~$u'(x) = 0$ and~$u''(x) \geq 0$).  It follows that if~$u(x)$ is non\textit{positive} at a local minimum in~$U$, it must in fact be constant and vanishing everywhere in~$U$.  This latter statement extends to the case where the strict inequality on~$c(x)$ is relaxed: if~$c(x) \geq 0$ everywhere in~$U$ and~$u(x)$ is nonpositive at a local minimum in~$U$, it must be a constant function.  Consequently, the sign of~$u(x)$ at the boundary~$\partial U$ fixes the sign of~$u(x)$ everywhere in~$U$: if~$u(x)|_{\partial U} \geq 0$, then~$u(x) \geq 0$ \textit{everywhere} on~$U$ (or else it would necessarily have a negative local minimum, violating~\eqref{eq:ODE}).  This is the content of the minimum principle for ODEs; a similar maximum principle is obtained by flipping the sign of the ineqality in~\eqref{eq:ODE}.

The relevance of the minimum principle to the problem of EWN is clear: we would like to constrain the sign of the components of~$\eta^a$ in the null basis~$\{k^a, \ell^a\}$ without actually solving for them.  The system of equations~\eqref{eqs:scalarcodimtwo} governing these components, however, has three features that must be dealt with: it is \textit{(i)} a \textit{system} of \textit{(ii)} \textit{partial} differential equations, and \textit{(iii)} not all the zero-derivative terms need have definite sign on an arbitrary surface~$\Sigma$.  That the minimum principle for ODEs extends to scalar PDEs is very well-established, addressing item~\textit{(ii)}, and it turns out that the condition~$c \geq 0$ on the zero-derivative term can be replaced with the existence of a supersolution, addressing item~\textit{(iii)}.  Surprisingly, even item~\textit{(i)} is addressed by the existence of minimum principles for certain \textit{systems} of elliptic differential equations, thus allowing us to apply the minimum principle to the system~\eqref{eqs:scalarcodimtwo}.  We will present these extensions in steps, first addressing~\textit{(ii)} and~\textit{(iii)} by providing the minimum (and maximum) principle for (scalar) PDEs, followed by the minimum principle for so-called cooperative elliptic systems of PDEs.  It will turn out that whenever the NCC is satisfied, the system~\eqref{eqs:scalarcodimtwo} is cooperative, thereby allowing us to prove a version of EWN from the NCC.

To state the minimum (and maximum) principle for scalar PDEs, let~$U$ be some open connected domain of~$\mathbb{R}^n$ and consider a linear scalar differential operator
\be
\label{eq:ellipticop}
L = -\sum_{\alpha,\beta = 1}^n h^{\alpha\beta} \partial_\alpha \partial_\beta + \sum_{\alpha = 1}^n b^\alpha \partial_\alpha + c,
\ee
where~$h^{\alpha\beta}$, $b^\alpha$, and $c$ are all at least twice-differentiable on~$U$ and once-differentiable on~$\partial U$.  If the~$h^{\alpha\beta}$ are the components of a positive-definite matrix, then~$L$ is elliptic, and the following statement holds:
\begin{thm}
\label{thm:scalarmax}
\textbf{Maximum and minimum principle for scalar PDEs.}  Let~$L$ be an elliptic operator as in~\eqref{eq:ellipticop}.  Assume one of the following two statements is true:
\begin{itemize}
	\item $c \geq 0$ on~$U$.
	\item There exists a positive strict supersolution~$u^+$ on~$U \cup \partial U$, i.e.~a function which is twice-differentiable on~$U$ and once-differentiable on~$\partial U$ such that~$u^+ \geq 0$ everywhere,~$u^+$ is nonzero somewhere, and~$L u^+ \geq 0$ everywhere.
\end{itemize}
Let~$u$ be any function which is twice-differentiable on~$U$ and once-differentiable on $\partial U$.  If $Lu \geq 0$ ($Lu \leq 0$) and $u$ has a nonpositive minimum (nonnegative maximum) on $U$ (at an interior point, since~$U$ is open), then $u$ is a constant function on $U$.
\end{thm}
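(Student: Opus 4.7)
The plan is to follow Hopf's classical route: first establish a weak minimum principle (a nonpositive interior minimum cannot strictly undercut the boundary infimum), then upgrade it to the strong form (constancy) via Hopf's boundary point lemma, and finally absorb the supersolution case into the $c \geq 0$ case by a change of unknown. Because $L(-u) = -Lu$ interchanges the two assertions of the theorem, it suffices to prove the minimum version: assuming $Lu \geq 0$ together with a nonpositive interior minimum of $u$, show that $u$ is constant.

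The weak principle under bullet one begins with the strict version. If $Lu > 0$ on $U$ and $u$ attained a nonpositive interior minimum at $p_0$, then $\nabla u(p_0) = 0$ and the Hessian is positive semi-definite; ellipticity of $(h^{\alpha\beta})$ yields $-h^{\alpha\beta}\partial_\alpha \partial_\beta u(p_0) \leq 0$, and $c(p_0) u(p_0) \leq 0$ since $c \geq 0$ and $u(p_0) \leq 0$, which forces $Lu(p_0) \leq 0$---a contradiction. To handle the nonstrict case $Lu \geq 0$ I would perturb with a barrier $\phi = e^{\gamma x^1}$: choosing $\gamma$ large enough that the quadratic $-h^{11}\gamma^2 + b^1 \gamma + c$ is negative throughout the bounded domain of interest gives $L\phi < 0$, hence $L(u - \epsilon \phi) > 0$ strictly for every $\epsilon > 0$. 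Applying the strict version to $u - \epsilon \phi$ and sending $\epsilon \to 0^+$ recovers the weak minimum principle, namely that a nonpositive interior minimum of $u$ cannot lie strictly below $\inf_{\partial U} u$.

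Upgrading to constancy rests on Hopf's boundary point lemma: if $u$ satisfies $Lu \geq 0$ on a ball $B \subset U$, attains its nonpositive minimum $m$ at $q \in \partial B$, and $u > m$ inside $B$, then the outward normal derivative $\partial_\nu u(q)$ is strictly negative. One proves this by comparing $u - m$ on an annulus inside $B$ to a barrier of the form $w(x) = e^{-\sigma |x - x_0|^2} - e^{-\sigma r^2}$, choosing $\sigma$ large enough that $Lw < 0$ on the annulus, and then invoking the weak principle just established. With Hopf's lemma in hand, suppose $u$ has a nonpositive interior minimum $m$ but is not constant; then $\Omega = \{u > m\} \subset U$ is a nonempty open subset of $U$, and one can pick a ball $B \subset \Omega$ whose closure first touches $\{u = m\}$ at an interior point $q \in U$. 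Hopf then gives $\partial_\nu u(q) < 0$, which contradicts $\nabla u(q) = 0$ (forced because $q$ is a global interior minimum). Hence $u \equiv m$ on $U$.

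Bullet two is handled by removing the zero-order term via $v = u/u^+$, well defined once one knows $u^+ > 0$ on $U$ (an immediate byproduct of applying the strong minimum principle to $u^+$ itself with the trivial supersolution $1$, or assumed directly). A short computation gives
\begin{equation*}
\frac{1}{u^+} L(u^+ v) = -h^{\alpha\beta} \partial_\alpha \partial_\beta v + \Bigl(b^\alpha - \tfrac{2 h^{\alpha\beta} \partial_\beta u^+}{u^+}\Bigr) \partial_\alpha v + \tfrac{L u^+}{u^+}\, v \;=:\; \hat{L} v,
\end{equation*}
an elliptic operator whose zero-order coefficient $\hat{c} = L u^+/u^+$ is non-negative by assumption. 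Then $Lu \geq 0$ becomes $\hat{L} v \geq 0$, a nonpositive interior minimum of $u$ corresponds to one of $v$ (since $u^+ > 0$), and bullet one applied to $\hat{L}$ and $v$ yields $v \equiv \mathrm{const}$ on $U$, which in particular recovers ``$u$ is constant'' in the canonical realization $u^+ \equiv 1$. The main technical obstacle is Hopf's lemma itself---producing the annular comparison function and verifying $Lw < 0$ uniformly on the annulus---since this is where local boundedness of the coefficients (supplied by the differentiability hypothesis) is used; everything else is bookkeeping built on top of this lemma and the second-derivative test.
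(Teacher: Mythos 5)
Your Hopf argument for the $c \geq 0$ bullet is standard and correct (the barrier step with $e^{\gamma x^1}$ needs local uniform bounds on the coefficients, which the restriction to a compact subdomain supplies, and the upgrade to constancy via Hopf's boundary point lemma on an interior tangent ball is the textbook route). The supersolution bullet, however, has a genuine gap at the step ``a nonpositive interior minimum of $u$ corresponds to one of $v$.'' This is false: if $u$ attains its minimum $m \leq 0$ at $p_0$, then $v(p_0) = m/u^+(p_0) \leq 0$, but $v(p) = u(p)/u^+(p)$ can dip below $v(p_0)$ wherever $u^+(p)$ is small relative to $u^+(p_0)$, so the minimum of $v$ need not occur at $p_0$ nor at any interior point. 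A concrete instance: on $U = (0,1)$ take $L = -d^2/dx^2 - 2\,d/dx - 1$ (so $c = -1 < 0$), $u^+ = e^{-x}$ (so $u^+ > 0$ and $Lu^+ = 0 \geq 0$), and $u = -\tfrac12 e^{-x} - xe^{-x} - 1$; one checks $Lu = 1 > 0$ and that $u$ has its nonpositive interior minimum at $x = 1/2$, yet $v = u/u^+ = -\tfrac12 - x - e^x$ is strictly decreasing with no interior minimum, so bullet one applied to $\hat L$ is vacuous and $u$ is manifestly not constant. Even when $v$ \emph{is} constant, $u = v\,u^+$ is constant only if $u^+$ is (or $v = 0$), so your closing parenthetical establishes the conclusion only in the degenerate realization $u^+ \equiv 1$, where bullet two collapses to bullet one.

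Two further remarks. First, your derivation of $u^+ > 0$ ``with the trivial supersolution $1$'' is circular: $1$ is a supersolution only when $c \geq 0$, exactly the hypothesis we are trying to dispense with. The correct route is to apply the $c \geq 0$ strong minimum principle to the modified operator $L + c^-$ (with $c^- \equiv \max(-c,0)$), noting $(L+c^-)u^+ = Lu^+ + c^-u^+ \geq 0$ and that $L + c^-$ has non-negative zeroth-order coefficient $c^+$, so $u^+ \geq 0$ vanishing at an interior point would force $u^+ \equiv 0$. Second, and more tellingly, the example above satisfies every stated hypothesis of the supersolution bullet, which indicates that Theorem~\ref{thm:scalarmax} as written is missing the Dirichlet-type condition $u|_{\partial U} \geq 0$ that does appear in Theorem~\ref{thm:maximumcooperative} and in the cited sources. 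Note that the paper does not itself prove Theorem~\ref{thm:scalarmax} --- it defers the $c \geq 0$ case to standard references and the supersolution case to~\cite{AndMar07} and to the $m=1$ specialization of Theorem~\ref{thm:maximumcooperative} --- and in its later applications the scalar theorem is only ever invoked with $c = 0$; so while the missing boundary hypothesis is dormant in the paper, it is fatal for your attempted proof of the supersolution bullet.
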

The~$c \geq 0$ version of this theorem is the standard maximum and minimum principle for PDEs; the alternative version invoking the existence of the supersolution~$u^+$ was formulated indirectly when~$U$ is a compact manifold by~\cite{AndMar05,AndMar07} (it essentially follows by combining Definition~5.1, Proposition~5.1, and Lemma~4.2 of~\cite{AndMar07}), though as we will see it is a special case of a more general result which applies to systems of elliptic PDEs.  To introduce this more general result -- and thereby finish addressing all three items~\textit{(i)},~\textit{(ii)}, and~\textit{(iii)} listed above -- we now introduce \textit{cooperative elliptic systems}~\cite{Swe92}:
\begin{defn}
\label{def:cooperative}
\textbf{Cooperative elliptic systems.} Consider some open domain~$U$ of~$\mathbb{R}^n$.  A linear system of differential equations on~$U$ for the~$m$ functions~$u_i$,~$i = 1, \ldots, m$ is said to be a \textit{cooperative elliptic system} if it can be written in the form
\be
\label{eq:cooperative}
\left[
\begin{pmatrix}
L_1 & 0 & \cdots & 0 \\ 0 & L_2 & \cdots & 0 \\ \vdots & \vdots & \ddots & \vdots \\ 0 & 0 & \cdots & L_m
\end{pmatrix} -
\begin{pmatrix}
0 & H_{12} & \cdots & H_{1m} \\ H_{21} & 0 & \cdots & H_{2m} \\ \vdots & \vdots & \ddots & \vdots \\ H_{m1} & H_{m2} & \cdots & 0
\end{pmatrix}
\right]
\begin{pmatrix}
u_1 \\ u_2 \\ \vdots \\ u_m
\end{pmatrix} = 
\begin{pmatrix}
f_1 \\ f_2 \\ \vdots \\ f_m
\end{pmatrix},
\ee
or~$(L - H) \vec{u} = \vec{f}$ for shorthand, where the~$L_i$ are elliptic operators defined as in~\eqref{eq:ellipticop} (i.e.~the coefficients~$(h_i)^{\alpha\beta}$,~$(b_i)^\alpha$, and~$c_i$ are allowed to be different for each~$i$) and the coefficients~$H_{ij}$ are all non-negative on~$U \cup \partial U$.  Moreover, a cooperative elliptic system is said to be \textit{fully coupled} if~$\{1,\cdots, m\}$ cannot be split into two disjoint nonempty sets~$A_1$ and~$A_2$ such that~$H_{ij} = 0$ everywhere in~$U$ for all~$i \in A_1$, $j \in A_2$.
\end{defn}
The system of equations~\eqref{eqs:scalarcodimtwo} for the components of the deviation vector in the~$\{k^a, \ell^a\}$ basis takes the form~\eqref{eq:cooperative}, while the NCC implies that~$Q_{kk}$ and~$Q_{\ell\ell}$ are non-negative, so the system~\eqref{eqs:scalarcodimtwo} is a cooperative elliptic system under the assumption of the NCC.  The promised minimum principle for such systems is as follows:
\begin{thm}
\label{thm:maximumcooperative}
\textbf{Minimum principle for cooperative systems}~\cite{Swe92}.  Consider a fully-coupled cooperative elliptic system as defined above which additionally obeys the following properties:
\begin{itemize}
	\item There exists a positive strict supersolution of the homogeneous version of~\eqref{eq:cooperative}; that is, there exists a	vector~$\vec{u}\,^+$ of (sufficiently smooth) functions such that for all~$i$,~$u^+_i \geq 0$ and~$(L\vec{u}\,^+-H\vec{u}\,^+)_i \geq 0$ everywhere on~$U$, and either~$\vec{u}\,^+$ is nonzero somewhere on $\partial U$ or $(L-H)\vec{u}\,^+$ is nonzero somewhere in~$U$;
	\item For all~$i$,~$f_i \geq 0$ on~$U$.
\end{itemize}
Then for any (sufficiently smooth)~$u_i$ which solve~\eqref{eq:cooperative} with~$u_i \geq 0$ on~$\partial U$, the~$u_i$ are either all positive everywhere on~$U$ or they vanish everywhere on~$U$.
\end{thm}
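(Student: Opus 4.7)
The plan is to reduce the vector PDE system to scalar minimum principles by rescaling with the supersolution, and then to propagate sign constraints across the coupling graph using the full-coupling hypothesis. The argument splits naturally into three steps.

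First, I would change variables using the supersolution: set $w_i = u_i / u_i^+$. (A preliminary Hopf-type argument, using cooperativity, full coupling, and the fact that $\vec{u}\,^+$ is a \emph{strict} supersolution, upgrades $u_i^+ \ge 0$ to $u_i^+ > 0$ throughout $U \cup \partial U$, so the rescaling is well-defined.) A direct expansion of $L_i(u_i^+ w_i)$ shows that the $w_i$ satisfy a cooperative elliptic system $(\tilde{L} - \tilde{H}) \vec{w} = \vec{\tilde{f}}$ whose off-diagonal couplings $\tilde{H}_{ij} = (u_j^+ / u_i^+) H_{ij}$ and sources $\tilde{f}_i = f_i / u_i^+$ remain non-negative. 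The point of the rescaling is that the \emph{effective zero-order coefficient} of the $i$-th diagonal operator, namely $\tilde{c}_i - \sum_j \tilde{H}_{ij}$, equals $(u_i^+)^{-1} (L_i u_i^+ - \sum_j H_{ij} u_j^+)$ and is therefore non-negative by the supersolution property.

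Second, I would establish $w_i \ge 0$ on $U$ for every $i$ (equivalently $u_i \ge 0$). Suppose not; then $m := \min_i \inf_U w_i < 0$, and since $w_i \ge 0$ on $\partial U$ this minimum is attained at an interior point $p_0$ by some index $i_0$. Standard interior-minimum arguments (vanishing first derivatives and positive-semidefinite Hessian combined with ellipticity of $\tilde{L}_{i_0}$) give $\tilde{L}_{i_0} w_{i_0}(p_0) \ge \tilde{c}_{i_0}(p_0) \, m$. Substituting into the $i_0$-th equation and using $w_j(p_0) \ge m$ together with $\tilde{H}_{i_0 j} \ge 0$ yields $0 \le \tilde{f}_{i_0}(p_0) \le \bigl[\tilde{c}_{i_0} - \sum_j \tilde{H}_{i_0 j}\bigr](p_0) \, m$, which contradicts $m < 0$ in view of the non-negativity established in the previous step.

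Third, I would upgrade non-negativity to the advertised dichotomy using the scalar strong minimum principle together with full coupling. If some $w_{i_0}$ vanishes at an interior point, rewrite its equation as $\tilde{L}_{i_0} w_{i_0} = \tilde{f}_{i_0} + \sum_j \tilde{H}_{i_0 j} w_j \ge 0$ with all three terms non-negative; Theorem~\ref{thm:scalarmax} then forces $w_{i_0} \equiv 0$ on $U$. The $i_0$-th equation reduces to $\sum_j \tilde{H}_{i_0 j} w_j = -\tilde{f}_{i_0} \le 0$, so each non-negative summand must vanish pointwise, and hence $w_j$ vanishes on the open set $\{\tilde{H}_{i_0 j} > 0\}$; the scalar strong minimum principle then gives $w_j \equiv 0$ on $U$ for every such $j$. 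Iterating and invoking full coupling -- which guarantees that the set of vanishing indices cannot form a proper subset closed under the coupling relation -- yields $\vec{w} \equiv 0$, hence $\vec{u} \equiv 0$. The main obstacle will be precisely this third step: cooperativity alone only propagates vanishing to directly coupled components, and it is the full-coupling hypothesis (together with the analogous iteration on the scalar strong minimum principle) that permits the propagation to reach every index; ensuring no subtleties arise at points where an off-diagonal $\tilde{H}_{ij}$ happens to vanish locally is the delicate part of the argument.
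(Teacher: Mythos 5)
The paper does not actually prove Theorem~\ref{thm:maximumcooperative}: it cites it from~\cite{Swe92} and treats it as an external input from the elliptic-operator literature. So there is no in-paper argument to compare against, and you are essentially reconstructing Sweers' result. Your high-level strategy -- rescale by the supersolution, establish nonnegativity, then use the scalar strong minimum principle together with full coupling to extract the dichotomy -- is a reasonable and fairly standard way to approach such theorems, and your third step is sound in spirit.

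However, your second step has a concrete gap. First, the sign in the interior-minimum estimate is backwards: with the convention of~\eqref{eq:ellipticop} the principal part is~$-h^{\alpha\beta}\partial_\alpha\partial_\beta$, and at an interior \emph{minimum} the Hessian of~$w_{i_0}$ is positive semidefinite while~$h^{\alpha\beta}$ is positive definite, so~$-h^{\alpha\beta}\partial_\alpha\partial_\beta w_{i_0}(p_0)\le 0$ and hence~$\tilde{L}_{i_0}w_{i_0}(p_0)\le\tilde{c}_{i_0}(p_0)\,m$, not~$\ge$ (your stated chain does not even close with your own~$\ge$). More importantly, after fixing the sign the conclusion is only that
\begin{equation*}
0\le\tilde{f}_{i_0}(p_0)\le m\Bigl[\tilde{c}_{i_0}-\textstyle\sum_j\tilde{H}_{i_0 j}\Bigr](p_0)\le 0 ,
\end{equation*}
which forces equality throughout but is \emph{not} a contradiction with~$m<0$ when the effective zero-order coefficient happens to vanish at~$p_0$ -- it only tells you that all the intermediate inequalities are saturated (e.g.~$\tilde{f}_{i_0}(p_0)=0$, the effective coefficient vanishes at~$p_0$, and~$w_j(p_0)=m$ for all~$j$ with~$\tilde{H}_{i_0 j}(p_0)>0$). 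Closing the argument requires the \emph{strong} Hopf maximum principle to show that the set where~$w_{i_0}=m$ is open (and hence all of the connected~$U$), after which~$w_{i_0}\equiv m<0$ contradicts the boundary condition. This is essentially the same open-and-closed propagation machinery you already invoke in your third step, and it needs to appear in the second step too; pointwise arguments alone are never enough for minimum principles without strict sign conditions. Finally, your preliminary claim that~$u_i^+>0$ can be upgraded all the way to~$\partial U$ is unjustified -- the hypothesis explicitly allows~$\vec{u}\,^+$ to vanish on~$\partial U$ provided~$(L-H)\vec{u}\,^+$ is nonzero somewhere in the interior -- though this is a lesser issue since the rescaling only needs to live on~$U$ itself, with boundary behavior handled by limits.
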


Theorem~\ref{thm:maximumcooperative} is the minimum principle we need in order to prove EWN from the NCC using the system of equations~\eqref{eqs:scalarcodimtwo}.  We may therefore finally give the advertised proof of continuous nesting of codimension-two spacelike extremal surfaces:
\begin{prop}
\label{prop:NCCnesting}
\textbf{Continuous Nesting of Extremal Surfaces}.  Let~$\Sigma(\lambda)$ be a continuous one-parameter family of codimension-two connected extremal surfaces anchored to a family of causal diamonds~$B(\lambda)$ on the boundary~$\partial M$ of an asymptotically locally AdS spacetime~$(M, g_{ab})$.  Assume that~$g_{ab}$ obeys the null curvature condition~$R_{ab} k^a k^b \geq 0$ for all null~$k^a$, and moreover assume that for each~$\lambda$, there exists an arbitrarily small deformation of~$\Sigma(\lambda)$ in a spacelike direction towards~$B(\lambda)$ which is nonvanishing somewhere on the boundary~$\partial \Sigma(\lambda)$ and whose outgoing null expansions are everywhere nonnegative.  Then if the family~$B(\lambda)$ is nested in the sense that~$B(\lambda_1)  \subset B(\lambda_2)$ whenever~$\lambda_1 > \lambda_2$, so are the~$\Sigma(\lambda)$ in the sense that for each~$\lambda$, the deviation vector field~$\eta^a$ on~$\Sigma(\lambda)$ is everywhere achronal, nonvanishing, and pointing towards~$B(\lambda)$.
\end{prop}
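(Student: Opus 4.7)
The plan is to reduce the nesting statement to a sign-determination problem for the deviation vector and then apply the minimum principle for cooperative elliptic systems (Theorem~\ref{thm:maximumcooperative}) directly. First, since the ambient metric is held fixed as~$\lambda$ varies, the deviation vector~$\eta^a$ along the family~$\Sigma(\lambda)$ satisfies the unsourced equation of extremal deviation~$J\eta_\perp^a = 0$ on each~$\Sigma(\lambda)$.  At each~$\lambda$, choose the null basis~$\{k^a,\ell^a\}$ of the normal bundle with~$k\cdot\ell = 1$ and both pointing towards~$B(\lambda)$, and write~$\eta_\perp^a = \alpha k^a + \beta \ell^a$.  The statement we want --- that~$\eta^a$ is achronal, nonvanishing, and points towards~$B(\lambda)$ --- then becomes precisely the statement that~$\alpha$ and~$\beta$ are everywhere non-negative and not simultaneously zero at any point.

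The two components of~$J\eta_\perp^a = 0$ are given by the system~\eqref{eqs:scalarcodimtwo} with vanishing sources; that system takes the form~$(L - H)(\beta,\alpha)^T = 0$, where~$L = \mathrm{diag}(L_\beta,L_\alpha)$ is the diagonal of the two scalar elliptic operators appearing in~\eqref{eqs:scalarcodimtwo} and the off-diagonal coupling matrix has entries~$H_{12} = Q_{kk}$ and~$H_{21} = Q_{\ell\ell}$.  Using~\eqref{eq:Qkk}, the shear-squared terms~$S_{kk},S_{\ell\ell}$ are manifestly non-negative, so the NCC implies~$Q_{kk}, Q_{\ell\ell}\geq 0$ throughout~$\Sigma(\lambda)$.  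This is exactly the cooperativity condition of Definition~\ref{def:cooperative}, so the decomposed equation of extremal deviation is a cooperative elliptic system.  Boundary data for~$\alpha,\beta$ at~$\partial\Sigma(\lambda)$ are determined by the nesting of the causal diamonds~$B(\lambda)$: since~$B(\lambda_1)\subset B(\lambda_2)$ whenever~$\lambda_1 > \lambda_2$, the displacement of~$\partial\Sigma(\lambda)$ as~$\lambda$ decreases moves spacelike-inwards into~$B(\lambda)$, so~$\alpha|_{\partial\Sigma}, \beta|_{\partial\Sigma}\geq 0$.

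The hypothesis on the existence of an arbitrarily small spacelike deformation of~$\Sigma(\lambda)$ towards~$B(\lambda)$, nonvanishing on~$\partial\Sigma(\lambda)$ and with non-negative outgoing null expansions, provides exactly the positive strict supersolution needed by Theorem~\ref{thm:maximumcooperative}: writing this deformation as~$\eta_+^a = \alpha_+ k^a + \beta_+\ell^a$ with~$\alpha_+,\beta_+\geq 0$, the statement that both null expansions are non-negative is precisely~$k^a J(\eta_{+\perp})_a = \dot\theta^{(k)}[\eta_+]\geq 0$ and similarly for~$\ell^a$, i.e.~$(L-H)(\beta_+,\alpha_+)^T\geq 0$ componentwise, with nonvanishing boundary values as required.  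Applying Theorem~\ref{thm:maximumcooperative} to~$(\beta,\alpha)^T$ with this supersolution then forces~$\alpha$ and~$\beta$ to be either strictly positive everywhere on~$\Sigma(\lambda)$ or to vanish identically; the latter is incompatible with strict nesting of the boundary diamonds~$B(\lambda)$, which guarantees~$(\alpha,\beta)\not\equiv 0$ on~$\partial\Sigma(\lambda)$.  This yields the claim.

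The main obstacle is a mild one: the cooperative minimum principle as stated requires the system to be \emph{fully coupled}, which may fail on portions of~$\Sigma(\lambda)$ where both~$Q_{kk}$ and~$Q_{\ell\ell}$ vanish (for instance, on a totally geodesic surface in a Ricci-flat background).  In this degenerate case the system decouples and one must instead apply the scalar maximum principle Theorem~\ref{thm:scalarmax} to~$\alpha$ and~$\beta$ separately, using the two components of the supersolution~$\eta_+^a$ individually.  A secondary technical point is that the minimum principle is traditionally stated on~$\mathbb{R}^n$, so some care is needed in interpreting it on the curved manifold~$\Sigma(\lambda)$ (which introduces the covariant Laplacian~$D^2$ and the drift term~$\chi^a D_a$), but since these operators are elliptic of the same structural form as~\eqref{eq:ellipticop} in local coordinates, the classical proof goes through locally and patches together by connectedness of~$\Sigma(\lambda)$.
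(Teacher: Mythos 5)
Your proposal follows the same strategy as the paper: decompose the unsourced extremal-deviation equation in the null basis, recognize the resulting system~\eqref{eqs:scalarcodimtwo} as cooperative via the NCC and $S_{kk},S_{\ell\ell}\geq 0$, read off nonnegative boundary data for $(\alpha,\beta)$ from the nesting of the $B(\lambda)$, and use the spacelike deformation hypothesis to manufacture a positive strict supersolution for Theorem~\ref{thm:maximumcooperative}. That much is correct and matches the paper's proof.

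The gap is in your handling of the failure of full coupling. You assert that full coupling ``may fail on portions of~$\Sigma(\lambda)$ where both~$Q_{kk}$ and~$Q_{\ell\ell}$ vanish'' and that in this ``degenerate case the system decouples.'' Neither is right. By Definition~\ref{def:cooperative}, for a $2\times 2$ system full coupling fails precisely when \emph{one} of $Q_{kk}$ or $Q_{\ell\ell}$ vanishes \emph{identically} on~$\Sigma$ (not both, and not merely on portions --- full coupling is a global property). Moreover, if only $Q_{kk}\equiv 0$ while $Q_{\ell\ell}\not\equiv 0$, the system does \emph{not} decouple: the equation for $\alpha$ still contains the source term $Q_{\ell\ell}\,\beta$. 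Your proposed fallback of treating $\alpha$ and $\beta$ ``separately, using the two components of the supersolution individually'' therefore fails in exactly this intermediate case, because you cannot establish the sign of $J_+\alpha = Q_{\ell\ell}\beta$ without first establishing $\beta\geq 0$. The paper resolves this with a sequential argument: when $Q_{kk}\equiv 0$ the equation $J_-\beta = 0$ is genuinely decoupled, so one first applies the $m=1$ minimum principle to conclude $\beta\geq 0$ (positive everywhere or identically zero), and only then does the inhomogeneity $Q_{\ell\ell}\beta$ in the $\alpha$-equation acquire a definite sign, allowing a second application of the minimum principle. To make your proof complete you need to (i) correct the criterion for when full coupling fails, and (ii) replace the parallel scalar-principle application by the sequential one. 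A secondary imprecision: using Theorem~\ref{thm:scalarmax} (scalar max/min principle) rather than the $m=1$ case of Theorem~\ref{thm:maximumcooperative} yields only ``$u$ is constant,'' so you would also need one more line to pass from constancy to the dichotomy ``strictly positive or identically zero'' that the argument requires.
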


\begin{proof}
We proceed by contradiction: assume that there is some critical~$\lambda_*$ and surface~$\Sigma(\lambda_*)$, which without loss of generality we take to be~$\lambda_* = 0$ and~$\Sigma \equiv \Sigma(\lambda = 0)$, on which~$\eta^a$ is not everywhere achronal, nonvanishing, and pointing towards~$B(\lambda = 0)$.  Decomposing~$\eta^a$ in the usual outwards-pointing null basis~$\{k^a, \ell^a\}$ of the normal bundle of~$\Sigma$ as~$\eta^a = \alpha k^a + \beta \ell^a$, this condition implies that somewhere on~$\Sigma$, at least one of~$\alpha$,~$\beta$ is negative or both are zero.  We take the basis~$\{k^a, \ell^a\}$ to be nonzero at~$\partial M$ with respect to any conformal completion that renders~$\partial M$ finite (this implies that~$k^a$ and~$\ell^a$ are divergent at~$\partial \Sigma$ with respect to~$g_{ab}$).  Consequently (since the components of~$\eta^a$ in some coordinate chart are the coordinate displacement of~$\Sigma$ under the perturbation generated by~$\eta^a$), the condition that the diamonds~$B(\lambda)$ are nested implies that neither~$\alpha$ nor~$\beta$ is anywhere negative at~$\partial \Sigma$ and that they cannot both be zero everywhere there.  In particular, it follows that they cannot both be zero everywhere on~$\Sigma$, and thus at least one must become negative somewhere.

Since~$\eta^a$ is a deviation vector along a family of extremal surfaces, on~$\Sigma$~$\alpha$ and~$\beta$ obey~\eqref{eqs:scalarcodimtwo} with sources turned off, which we write here as
\begin{subequations}
\be
\label{eq:codimtwocooperative}
\left[\begin{pmatrix} J_+ & 0 \\ 0 & J_- \end{pmatrix} - \begin{pmatrix} 0 & Q_{\ell\ell} \\ Q_{kk} & 0 \end{pmatrix}\right] \begin{pmatrix} \alpha \\ \beta \end{pmatrix} = \begin{pmatrix} 0 \\ 0 \end{pmatrix}
\ee
with
\be
J_\pm \equiv -D^2 \mp 2\chi^a D_a - \left(|\chi|^2 \pm D_a \chi^a + Q_{k\ell} \right).
\ee
\end{subequations}
As mentioned above, the NCC implies that~$Q_{kk}$ and~$Q_{\ell\ell}$ are non-negative, and therefore this is a cooperative elliptic system.  Finally, the assumption that there exists a deformation of~$\Sigma(\lambda)$ in an outwards direction that renders its outgoing null expansions nonnegative can be reinterpreted using~\eqref{eq:thetadot}: it is equivalent to the statement that there exists some~$\nu^a$ on~$\Sigma$, with~$\nu^a$ everywhere spacelike, pointing towards~$B$, and nonvanishing somewhere on~$\partial \Sigma(\lambda)$, such that~$k^a J \nu_a \geq 0$ and~$\ell^a J \nu_a \geq 0$.  Decomposing~$\nu^a = \tilde{\alpha} k^a + \tilde{\beta} \ell^a$, we have that~$\tilde{\alpha}$ and~$\tilde{\beta}$ are everywhere non-negative, that both are nonzero somewhere on~$\partial \Sigma(\lambda)$, and that they satisfy
\be
J_+ \tilde{\alpha} - Q_{\ell\ell} \tilde{\beta} \geq 0, \qquad J_- \tilde{\beta} - Q_{kk} \tilde{\alpha} \geq 0.
\ee
In other words, the vector~$\vec{\nu} = (\tilde{\alpha}, \tilde{\beta})$ is a positive strict supersolution of~\eqref{eq:codimtwocooperative}.

To complete the proof, we must deal with two separate cases.

\paragraph{Case 1:} One of~$Q_{kk}$ or~$Q_{\ell\ell}$ is everywhere zero on~$\Sigma$.  Taking without loss of generality~$Q_{kk} = 0$ everywhere, this implies that~$\beta$ decouples from~$\alpha$, as we have simply~$J_- \beta = 0$.  Since~$\beta \geq 0$ on~$\partial \Sigma$, and since~$\tilde{\beta}$ is a positive strict supersolution (since for~$Q_{kk} = 0$ it satisfies~$J_i \tilde{\beta} \geq 0$), we may apply the minimum principle for cooperative elliptic systems in the case~$m = 1$ to conclude that~$\beta$ must be everywhere positive or vanish everywhere.  Consequently,~$\alpha$ obeys~$J_+ \alpha = Q_{\ell\ell} \beta \geq 0$, and~$\tilde{\alpha}$ obeys~$J_+\tilde{\alpha} \geq Q_{\ell\ell} \tilde{\beta} \geq 0$, so it is a positive strict supersolution.  Again applying the~$m = 1$ case of the minimum principle for cooperative elliptic systems, we conclude that~$\alpha$ must be everywhere positive or vanish everywhere.  But by assumption, one of~$\alpha$ or~$\beta$ must become negative somewhere, and therefore we have a contradiction.

\paragraph{Case 2:} Neither of~$Q_{kk}$ or~$Q_{\ell\ell}$ is everywhere zero.  Consequently, the system~\eqref{eq:codimtwocooperative} is a fully coupled cooperative elliptic system, for which we have already established the existence of a positive strict supersolution.  Thus we may apply the~$m = 2$ case of the minimum principle for cooperative elliptic systems to conclude that either~$\alpha$ and~$\beta$ are both everywhere positive or both vanish everywhere.  But since by assumption at least one must be negative somewhere, this is a contradiction.
\end{proof}

We emphasize that besides the NCC, the assumptions of this proof are very weak, the main one being the existence of a spacelike perturbation of each~$\Sigma(\lambda)$ which is nonvanishing somewhere on~$\partial \Sigma(\lambda)$ and whose expansions are nonnegative; this can be interpreted as a weaker, local notion of the smooth deformability criterion of~\cite{EngWal13} (we of course make no assumptions about global minimality).  To see this explicitly, note that one (but not the only) way of achieving such a perturbation is to find an extremal surface attached to a ``shrunken'' boundary domain~$B(\lambda)$ which is everywhere spacelike to~$\Sigma(\lambda)$ on the side of~$B(\lambda)$.  In other words, if there exists an extremal perturbation of an extremal surface~$\Sigma$ which is nested, then all such extremal perturbations must also be nested; this is precisely a local version of the ``deformable family'' of extremal surfaces invoked in~\cite{EngWal13}.

Furthermore, let us note that we needed this deformation to be nonvanishing somewhere on~$\partial \Sigma(\lambda)$ only to ensure that~$\tilde{\alpha}$ and~$\tilde{\beta}$ are independently supersolutions to the equations~$J_-\beta = 0$ and~$J_+ \alpha = 0$ even when one (or both) of~$Q_{kk}$ and~$Q_{\ell\ell}$ are everywhere-vanishing.  However, generically~$Q_{kk}$ and~$Q_{\ell\ell}$ should not vanish everywhere, in which case we could also invoke Theorem~\ref{thm:maximumcooperative} with a deformation of~$\Sigma$ that \textit{fixes} the boundary~$\partial \Sigma$ but renders both outgoing expansions nonnegative, with at least one strictly positive somewhere.  In other words, if both~$Q_{kk}$ and~$Q_{\ell\ell}$ are not everywhere-vanishing, then EWN is guaranteed by the existence of perturbations of the~$\Sigma(\lambda)$ that render them ``normal'' in the sense of having positive outwards expansions.

\subsection{A Constraint from EWN}
\label{subsec:NCC}

We have re-established, using this formalism, the fact that the bulk NCC enforces a type of EWN.  But since EWN is enforced from a fundamental principle of the boundary field theory, is would be desirable to proceed in the converse direction: that is, does EWN tell us anything about the bulk geometry?  Here we show that the answer is yes.  To obtain the result, note that since we will only consider small perturbations of extremal surfaces, our result only assumes \textit{extremal} wedge nesting (rather than entanglement wedge nesting, which requires the extremal surfaces in question to be HRT surfaces).  With this caveat in mind, we first show the following:

\begin{prop}
Consider two deviation vector fields~$\eta_1^a$,~$\eta_2^a$ on a boundary-anchored extremal surface~$\Sigma$ along independent one-parameter families of extremal surfaces.  Let~$k^a$ be any null normal to~$\Sigma$ which is nonvanishing at~$\partial \Sigma$\footnote{More precisely, when we say a vector field~$k^a$ is nonvanishing at~$\partial \Sigma$, we mean nonvanishing in any conformal compactification; that is, for any~$\Omega$ which vanishes at~$\partial \Sigma$ such that~$\Omega^2 g_{ab}$ is smoothly extendable to~$\partial \Sigma$ and nondegenerate there,~$\Omega^{-1} k^a$ should also be smoothly extendable to~$\partial \Sigma$ and nowhere-vanishing there.}, and choose~$\eta_{1,2}^a$ such that at~$\partial \Sigma$, the following hold:
\begin{itemize}
	\item $\eta_1^a|_{\partial \Sigma}$ is everywhere nonvanishing, achronal, and points into the extremal wedge of~$\Sigma$, and is nowhere proportional to~$k^a$;
	\item $\eta_2^a|_{\partial \Sigma}$ is everywhere nonvanishing and proportional to~$k^a$.
\end{itemize}
Without loss of generality, also take~$\eta_{1,2}^a$ normal to~$\Sigma$.  Then if extremal wedge nesting holds,~$\eta_2^a$ may be decomposed as
\be
\eta_2^a = w k^a + v \eta_1^a,
\ee
and either~$w Q_{kk} > 0$ somewhere or~$w Q_{kk} = 0$ everywhere.
\end{prop}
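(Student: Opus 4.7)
The idea is to project the equation of extremal deviation $J\eta_2^a=0$ onto the $k^a$ direction and reduce it to a scalar elliptic equation for the coefficient $v$, to which EWN supplies the sign information required by the strong maximum principle. First I would extend $k^a$ to a null basis $\{k^a,\ell^a\}$ of the normal bundle with $k\cdot\ell=1$ and write $\eta_i^a = \alpha_i k^a + \beta_i\ell^a$, so that the boundary hypotheses become $\beta_1|_{\partial\Sigma}\neq 0$ and $\beta_2|_{\partial\Sigma}=0$. The decomposition $\eta_2^a = w k^a + v\eta_1^a$ then pins down $v = \beta_2/\beta_1$ and $w = \alpha_2 - v\alpha_1$. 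Applying EWN to the family that generates $\eta_1^a$ forces $\eta_1^a$ to be achronal and to point into the extremal wedge everywhere, so with $k^a,\ell^a$ oriented appropriately we have $\alpha_1,\beta_1\geq 0$ on $\Sigma$; generically $\beta_1>0$ throughout $\Sigma$, and I shall work on the open region where this holds (so that the decomposition is smooth).

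The key identity is obtained by comparing the scalar equations that the $k^a$-contractions of $J\eta_i^a=0$ impose on $\beta_1$ and $\beta_2 = v\beta_1$. Reading off from~\eqref{eqs:Jdecomp}, both give $J_+\beta_i = Q_{kk}\alpha_i$, where $J_+ \equiv -D^2 + 2\chi^a D_a - (|\chi|^2 - D_a\chi^a + Q_{k\ell})$. Expanding $J_+(v\beta_1)$ by the Leibniz rule, subtracting $v\,J_+\beta_1 = v Q_{kk}\alpha_1$, and multiplying by $\beta_1$ eliminates all zero-order terms in $v$ and yields
\be
\mathcal{L}\,v \;\equiv\; -D_a\!\left(\beta_1^{\,2}\,D^a v\right) + 2\beta_1^{\,2}\,\chi^a D_a v \;=\; \beta_1\, Q_{kk}\, w.
\ee
The operator $\mathcal{L}$ is elliptic, with positive-definite principal symbol $\beta_1^{\,2} h^{ab}$ and no zero-order term; it therefore satisfies the strong minimum principle of Theorem~\ref{thm:scalarmax} with $c=0$.

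Next I would apply EWN a second time, now to the family generating $\eta_2^a$: since $\eta_2^a|_{\partial\Sigma}\propto k^a$, EWN forces $\eta_2^a$ to remain achronal throughout $\Sigma$, which (after fixing the overall sign of the boundary deformation) gives $\alpha_2,\beta_2\geq 0$ and hence $v\geq 0$ on $\Sigma$ with $v|_{\partial\Sigma}=0$. Suppose for contradiction that $wQ_{kk}\leq 0$ everywhere with strict inequality somewhere. Then $\mathcal{L}v\leq 0$ with strict inequality somewhere. If $v\not\equiv 0$ then, by compactness of $\Sigma$, $v$ attains a positive maximum at an interior point; the strong minimum principle forces $v$ to be constant, and $v|_{\partial\Sigma}=0$ then yields $v\equiv 0$, a contradiction. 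If instead $v\equiv 0$, then $\beta_2\equiv 0$ and $\eta_2^a = wk^a$, whereupon contracting $J\eta_2^a=0$ with $k_a$ directly produces $Q_{kk}w\equiv 0$, again contradicting $wQ_{kk}<0$ at some point. Either way a contradiction arises, so $wQ_{kk}$ must be positive somewhere or identically zero, as claimed.

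The primary technical obstacle lies in ensuring that the decomposition $\eta_2^a = wk^a + v\eta_1^a$ is globally valid, i.e.\ that $\beta_1>0$ on all of $\Sigma$ rather than merely on a neighborhood of $\partial\Sigma$; EWN itself delivers only $\beta_1\geq 0$, so interior zeros must be excluded separately, either by a strong-maximum-principle argument applied to $\beta_1$ directly, by assuming a strict form of EWN, or by restricting the conclusion to the open subset $\{\beta_1>0\}$ and extending by continuity. A secondary subtlety is the use of compactness of $\Sigma$ to guarantee that $v$ attains its maximum, which is unproblematic in the standard setting where $\Sigma$ is a boundary-anchored surface with an appropriate radial regulator in place.
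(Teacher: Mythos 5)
Your proof is correct and follows essentially the same strategy as the paper, so only a brief comparison is warranted. Where the paper works with the basis-independent scalar operator $J_{k,\eta_1}f\equiv k^a J(f(\eta_1)_a)$ and observes that $J(\eta_1)_a=0$ forces its zeroth-order coefficient to vanish (yielding directly $J_{k,\eta_1}v=wQ_{kk}$, eq.~\eqref{eq:Letapsi}), you carry out the same Leibniz cancellation explicitly in the null basis, obtaining $-\beta_1 D^2 v - 2D^a\beta_1\,D_a v + 2\beta_1\chi^a D_a v = Q_{kk}w$ and then multiplying by $\beta_1$ to put it in divergence form. The two operators differ only by the positive overall factor $\beta_1$, so the maximum-principle argument goes through identically. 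Your handling of the $v\equiv 0$ case and the contradiction are the same as the paper's.

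One point worth noting: the paper asserts that EWN implies $\eta_1^a$ is nowhere proportional to $k^a$ in the bulk (equivalently $\beta_1>0$), which is what makes the decomposition $\eta_2^a=wk^a+v\eta_1^a$ globally well-defined. You correctly flag that EWN by itself only gives $\alpha_1,\beta_1\geq 0$ together with nonvanishing of $\eta_1^a$, and that excluding interior zeros of $\beta_1$ requires an additional argument or a genericity assumption. This is a genuine subtlety that the paper's own statement of the proof somewhat glosses over, and your remark that one could either apply a strong maximum principle to $\beta_1$ (which satisfies $J_+\beta_1=Q_{kk}\alpha_1\geq 0$ under the NCC) or simply work on the open set $\{\beta_1>0\}$ and extend by continuity is a reasonable way to patch it.
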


\begin{proof}
Since~$\eta_1^a$ is a deviation vector along a family of extremal surfaces, we have
\be
J (\eta_1)_a = 0 \Rightarrow J_{k,\eta_1} \mathbb{I} = 0,
\ee
where~$\mathbb{I}$ denotes the function which is everywhere unity and the differential operator~$J_{k,\eta_1}$ is defined as in~\eqref{eq:Lij}.  This implies that the zero-derivative term in~$J_{k,\eta_1}$ vanishes.

Next, since both~$\eta_1^a$ and~$\eta_2^a$ are achronal and nonvanishing on the boundary and point into the extremal wedge of~$\Sigma$, extremal wedge nesting implies that this must be true in the bulk as well.  Moreover, since~$\eta_1^a$ is never proportional to~$k^a$ on the boundary, this must be true in the bulk as well, and thus~$k^a$ and~$\eta_1^a$ are everywhere linearly independent; this guarantees that~$\eta_2^a$ can be decomposed as stated.  When this decomposition is inserted into the Jacobi equation~$J(\eta_2)_a = 0$, we obtain (after contracting with~$k^a$)
\be
\label{eq:Letapsi}
J_{k,\eta_1} v = w Q_{kk}.
\ee
Since~$\eta_2^a$ is proportional to~$k^a$ on the boundary, we have~$v|_{\partial \Sigma} = 0$.  Extremal wedge nesting also implies that~$v \geq 0$ everywhere (otherwise~$\eta_2^a$ would point out of the extremal wedge).  Thus~$v$ must have a non-negative maximum somewhere in the interior of~$\Sigma$.

Now we proceed by contradiction: assume that~$w Q_{kk} \leq 0$ everywhere with the inequality holding strictly somewhere.  From~\eqref{eq:Letapsi}, this implies that~$v$ cannot be everywhere zero.  But since~$J_{k,\eta_1}$ is a uniformly elliptic operator and its zeroth-order piece vanishes, we may apply the maximum principle (Theorem~\ref{thm:scalarmax}): the fact that~$J_{k,\eta_1} v = w Q_{kk} \leq 0$, that the zero-derivative term in~$J_{k,\eta_1}$ vanishes, and that~$v$ has a non-negative maximum in~$\Sigma$ implies that~$v$ must be constant.  But as we already established, if~$v$ were constant it would have to vanish everywhere, which is not permitted by assumption.  Thus we have reached a contradiction. 
\end{proof}

We now wish to argue that under some appropriate genericity condition, it should be possible to find a choice of deviation vectors~$\eta_{1,2}^a$ as defined above such that~$w > 0$ everywhere.  To see this, in Figure~\ref{fig:normal} we illustrate the two-dimensional normal space~$T^\perp_p M$ at some point~$p \in \Sigma$.  Entanglement wedge nesting requires that both~$\eta_1^a$ and~$\eta_2^a$ point into the right wedge of the figure, and the sign of~$w$ at~$p$ is determined by whether~$\eta_1^a$ falls above or below the dotted line spanned by~$\eta_2^a$.  But since~$\eta_2^a$ is proportional to~$k^a$ at~$\partial \Sigma$, we expect that, say, taking~$\eta_1^a$ proportional to~$\ell^a$ at~$\partial \Sigma$ should keep~$\eta_1^a$ below the dotted line, and hence~$w > 0$, everywhere on~$\Sigma$.  This is certainly true for spacetimes that are sufficiently small (but nonperturbative) deformations of pure AdS, but it seems reasonable to expect that it should be true much more broadly as well.

\begin{figure}[t]
\centering
\includegraphics[page=15,width=0.23\textwidth]{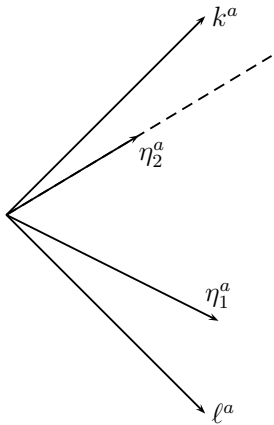}
\caption{Here we illustrate the normal space~$T^\perp_p M$ at some point~$p \in \Sigma$.  EWN requires that both~$\eta_1^a$ and~$\eta_2^a$ point into the right wedge, while the sign of~$w$ is determined by whether~$\eta_1^a$ falls above or below the dashed line marking the span of~$\eta_2^a$.}
\label{fig:normal}
\end{figure}

Consequently, we conclude (under appropriate genericity assumptions) that extremal wedge nesting implies that on any extremal surface, we must have~$Q_{kk} > 0$ somewhere or~$Q_{kk} = 0$ everywhere.  Recall from~\eqref{eq:Qkk} that~$Q_{kk}$ is the ``right-hand side'' of the Raychaudhuri equation, which determines the focusing of null geodesics; we have therefore found that extremal wedge nesting imposes that any defocusing of null geodesics fired off of some region of an HRT surface~$\Sigma$ must be accompanied by focusing of null geodesics fired elsewhere off of~$\Sigma$, as shown in Figure~\ref{fig:QEI}.  Interpreting focusing and defocusing in some rough sense as due to a local null energy, we might heuristically rephrase this statement as enforcing that any negative local null energy on~$\Sigma$ must be accompanied by some positive local null energy elsewhere; this is reminiscent of certain spacelike quantum energy inequalities (and some more recent results from modular theory~\cite{BlaCas17}) which require any negative energy on a Cauchy slice to be accompanied by positive energy elsewhere~\cite{Fla97,FewHol04,Few12}.  These results are restricted to quantum field theory on fixed Minkowski spacetime in two dimensions (and in fact, a generalization to higher dimensions is known \textit{not} to exist, at least for the massless minimally coupled scalar field on four-dimensional Minkowski spacetime~\cite{ForHel02}).  On the other hand, our result applies to general bulk spacetimes which are consistently coupled to (classical) matter fields.

\begin{figure}[t]
\centering
\includegraphics[page=16]{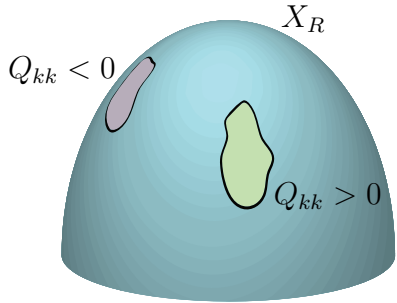}
\caption{In a general spacetime, we argue that the quantity~$Q_{kk} = R_{kk} + \sigma_k^2$ cannot be everywhere-negative on an HRT surface~$X_R$.  Thus if any null geodesics fired off of~$X_R$ defocus due to a region of negative~$Q_{kk}$ (the purple region), there must be null geodesics fired from elsewhere on~$X_R$ (the green region) which strictly focus due to positive~$Q_{kk}$.}
\label{fig:QEI}
\end{figure}

Finally, the result as it currently stands does not immediately generalize to quantum extremal surfaces due to the nonlocal contributions from the integral of the second derivative of~$S_{\mathrm{out}}$ in the equation of quantum extremal deviation~\eqref{eq:unsourcedquantumextremaldeviation}. However, there is some literature in mathematics on such integro-differential elliptic equations (see e.g.~\cite{garroni95}), and it is reasonable to expect an analogous statement that permits the QFC to be false on an HRT only if it is satisfied somewhere else on the same HRT surface.

\subsection{Special Case: Perturbations of Pure AdS}
\label{subsec:pureAdS}

We have therefore argued that~$Q_{kk}$ cannot be negative everywhere on an extremal surface, but is there a more explicit quantitative statement we can make?  Presumably for perturbations around any sufficiently nice solution (e.g. Schwarzschild-AdS), it is possible to explicitly obtain constraints from our formalism.  Here for simplicity we will focus on pure AdS, showing that spacetimes that are a small perturbation thereof, we can indeed obtain a quantitative constraint.  We begin with the classical statement:

\begin{prop}
\label{prop:classicalAdS}
Consider a spacetime~$(M,g_{ab})$ whose metric is a linear perturbation~$\delta g_{ab}$ of the metric~$\bar{g}_{ab}$ of pure AdS.  Let~$\Hcal$ be the bifurcation surface of any Rindler horizon of the pure AdS background, and let~$\bar{k}^a$ be any null normal to~$\Hcal$ (with respect to~$\bar{g}_{ab}$) which is parallel-transported along~$\Hcal$ (such a vector field necessarily exists per the discussion above~\eqref{eq:AdSRiemann}).    Also let~$\Sigma$ be the surface anchored to~$\partial \Hcal$ which is extremal with respect to~$g_{ab}$, with~$k^a = \bar{k}^a + \delta k^a$ its null normal.  Then if extremal wedge nesting holds,
\be
\label{eq:Rkkbound}
\int_\Sigma \delta R_{kk} \, \bm{\eps} \geq 0 + \Ocal(\delta^2)
\ee
whenever the integral is finite (with~$\delta R_{kk} = \delta(R_{ab} k^a k^b)$ the null-null component of the Ricci tensor of~$g_{ab}$ and~$\bm{\eps}$ the usual natural volume form on~$\Sigma$).
\end{prop}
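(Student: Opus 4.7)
The strategy is to specialize the argument of Subsection~\ref{subsec:NCC} to the perturbative setting around pure AdS, where all background objects are explicit and the positive Dirichlet Green's function of Appendix~\ref{app:Greens} permits converting the pointwise constraint on $Q_{kk}$ from the previous subsection into an integrated one.

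The first step is to exploit the background simplifications. Because $\Hcal$ is totally geodesic ($\bar K^a_{\phantom{a}bc} = 0$) and pure AdS has Ricci tensor proportional to the metric, one has $\bar Q_{kk} = \bar S_{kk} + \bar R_{kk} = 0$ identically on $\Hcal$. At first order in $\delta g_{ab}$, the shear contribution $\delta S_{kk} = 2\bar\theta^{(k)}_{cd}\,\delta\theta^{(k)cd}$ vanishes because $\bar\theta^{(k)}_{cd}=0$, so $\delta Q_{kk} = \delta R_{kk}$. Moreover, since $\Sigma = \Hcal + \mathcal{O}(\delta)$, the integral $\int_\Sigma \delta R_{kk}\,\bm{\eps}$ reduces at first order to $\int_\Hcal \delta R_{kk}\,\bm{\eps}$.

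Next, I would apply the construction of the preceding subsection to the perturbed spacetime using two deviation vectors along families of extremal surfaces whose restrictions to $\Hcal$ in the background are explicit: $\bar\eta_1^a \propto k^a + \ell^a$ (generated by a family of concentric, radially-shrinking ball-shaped boundary regions) and $\bar\eta_2^a \propto k^a$ (generated by pushing the future tip of the boundary causal diamond into the past along a null generator of the Rindler horizon). The decomposition $\eta_2 = wk + v\eta_1$ then has $\bar w > 0$ and $\bar v = 0$ at zeroth order, so $v$ is purely the first-order correction $\delta v$, and EWN forces $\delta v \geq 0$ on $\Hcal$ with $\delta v|_{\partial\Hcal} = 0$. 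Linearizing the master equation $J_{k,\eta_1} v = w Q_{kk}$ around pure AdS then yields
\[
\bar J_{k,\bar\eta_1}(\delta v) = \bar w\,\delta R_{kk},
\]
where $\bar J_{k,\bar\eta_1}$ is elliptic with vanishing zero-derivative coefficient (since $\bar\eta_1$ is itself a genuine extremal deformation, hence $\bar J_{k,\bar\eta_1}(\mathbb I) = 0$). Inverting with the positive Dirichlet Green's function $G(p,p')$ of Appendix~\ref{app:Greens} represents $\delta v$ as an integral of $\bar w\,\delta R_{kk}$ weighted by $G$; the non-negativity $\delta v \geq 0$ combined with positivity of $G$ and $\bar w$ then produces a weighted integrated inequality on $\Hcal$.

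The main obstacle is upgrading this weighted statement to the unweighted $\int_\Hcal \delta R_{kk}\,\bm{\eps} \geq 0$. My preferred route is to integrate the linearized equation directly against the test function $\phi \equiv \mathbb{I}$: the vanishing of the zero-derivative term of $\bar J_{k,\bar\eta_1}$ lets integration by parts reduce the LHS to a boundary contribution on $\partial\Hcal$ plus a bulk integral of $\delta v$ against a first-order operator acting on $\phi$, and the signs of both pieces are fixed by $\delta v \geq 0$ together with the Dirichlet condition $\delta v|_{\partial\Hcal} = 0$. An alternative is to average the weighted inequality over the subgroup of AdS isometries stabilizing $\Hcal$ (which acts transitively on $\Hcal$); the resulting symmetrized kernel is constant and reduces the bound directly to its unweighted form. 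Either approach also needs a short check that the asymptotic falloff of $\delta g_{ab}$ is such that the relevant boundary terms near $\partial\Hcal$ are finite, which dovetails with the caveat in the statement that the inequality is only claimed when the integral converges.
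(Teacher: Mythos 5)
Your setup (linearizing $Q_{kk}\to\delta R_{kk}$, replacing $\Sigma$ by $\Hcal$ at first order, reducing to the scalar equation with vanishing zero-derivative coefficient) matches the paper's, and the relation $\bar J_{k,\bar\eta_1}(\delta v)=\bar w\,\delta R_{kk}$ is the same equation the paper writes for $\delta\beta$; the two are related by $\delta v=\delta\beta/\bar b_1$ and $\bar w=\bar\alpha$. The gap is in the final integration step. Testing the linearized equation against $\phi\equiv\mathbb{I}$ yields
\[
\int_\Hcal \bar J_{k,\bar\eta_1}(\delta v)\,\bm{\eps}=\int_\Hcal \bar w\,\delta R_{kk}\,\bm{\eps},
\]
which is a bound on the \emph{weighted} integral $\int\bar w\,\delta R_{kk}$, not on $\int\delta R_{kk}$. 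Since $\bar w=\bar\alpha\sim\cosh\chi$ diverges at $\partial\Hcal$, these are genuinely different quantities, and in fact the weighted integral generically diverges even when the unweighted one is finite. The correct test function is $1/\bar\alpha$, not $\mathbb{I}$; equivalently, one must divide the equation by $\bar\alpha$ \emph{before} integrating. This is precisely what the paper does, rewriting the result as
\[
\delta R_{kk}=2\,\widetilde{\delta\beta}\,D^2_\Hcal\ln\bar\alpha-\overline{D}_a\!\left[\overline{D}^a\widetilde{\delta\beta}+2\,\widetilde{\delta\beta}\,\overline{D}^a\ln\bar\alpha\right],
\]
with $\widetilde{\delta\beta}=\delta\beta/\bar\alpha$. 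The positivity of the bulk term after integration then rests on the explicit computation $D^2_\Hcal\ln\cosh\chi\ge 0$ for $d\ge 3$, a key lemma your proposal does not identify, together with $\widetilde{\delta\beta}\ge 0$ from nesting.

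Your alternative of averaging over the isometries of $\Hcal$ also does not go through: the isometry group of the hyperbolic ball is noncompact, so there is no finite invariant averaging measure, and any such averaging would act on $\delta g_{ab}$ (and hence on $\delta R_{kk}$) as well, so one would not obtain a statement about the original perturbation. Finally, even with the correct test function, the boundary contribution at $\partial\Hcal$ is not merely a "short check" — the paper devotes a detailed asymptotic analysis, parametrizing $\widetilde{\delta\beta}=b/\bar\alpha^p$ and establishing the sign and vanishing of the boundary integral only for $p>\min(2,d-3)$ (and finiteness only for $p>d-3$); this is where the caveat "whenever the integral is finite" actually does its work.
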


In fact, before providing the proof of this statement let us briefly remark that the bound~\eqref{eq:Rkkbound} holds even when the integral is divergent, but such a case is not particularly insteresting since then the only nontrivial contribution to~\eqref{eq:Rkkbound} comes from the asymptotic behavior of the Ricci tensor, which in turn is simply related to the boundary metric and stress tensor (e.g.~by the Fefferman-Graham expansion).  It is also worth noting that~\eqref{eq:Rkkbound} is related to the positive energy theorems of~\cite{LasRab14,LasLin16,NeuSar18}, obtained there via entangement entropy inequalities rather than via subregion/subregion duality as we do here.

\begin{proof}
Consider solutions to the equation of extremal deviation~\eqref{eqs:scalarcodimtwo} with~$\eta^a$ corresponding to a deformation of~$\partial \Sigma$ along~$k^a$.  In the pure AdS background, we found in Section~\ref{sec:causal} that the components of~\eqref{eqs:scalarcodimtwo} for perturbations of a Rindler horizon~$\Hcal$ reduce to~\eqref{eq:pureAdSJacobi}, which for no metric perturbation become
\be
\label{eq:alphabetaAdS}
-D^2_\Hcal \bar{\alpha} + \frac{d-2}{l^2} \, \bar{\alpha} = 0, \qquad -D^2_\Hcal \bar{\beta} + \frac{d-2}{l^2} \, \bar{\beta} = 0,
\ee
where we continue to use overlines to denote objects evaluated in the pure AdS metric~$\bar{g}_{ab}$.  Requiring that~$\bar{\eta}^a$ correspond to a perturbation of the asymptotic boundary~$\partial \Hcal$ along~$\bar{k}^a$ implies that~$\bar{\beta} = 0$ everywhere and that~$\bar{\eta}^a$ must be finite and nonzero at~$\partial \Hcal$ in any conformal compactification of~$(M,\bar{g}_{ab})$.  This implies, in particular, that~$1/\bar{\alpha}$ is a defining function for a conformal compactification of~$\Hcal$ (that is, the conformally compactified metric~$\bar{g}_{ab}/\bar{\alpha}^2$ on~$\Hcal$ can be smoothly extended to~$\partial \Hcal$); it follows that~$\bar{\alpha}$ diverges at~$\Hcal$.

Now, in the perturbed spacetime~$g_{ab}$, we must linearize~\eqref{eqs:scalarcodimtwo} in~$\delta g_{ab}$\footnote{To avoid potential confusion, let us emphasize that here we are interested in the \textit{unsourced} equation of extremal deviation~\eqref{eq:unsourcedJacobi} which governs the deviation vector between extremal surfaces in the \textit{same} geometry~$\bar{g}_{ab} + \delta g_{ab}$; this is different from the \textit{sourced} equation of extremal deviation~\eqref{eq:sourcedgeneralJacobi}, which controls the deviation vector between an extremal surface in one geometry~$g_{ab}$ and an extremal surface in a perturbed geometry~$g_{ab} + \delta g_{ab}$.}.  To do so, first note that for a deviation vector which is proportional to~$k^a$ at~$\partial \Sigma$, we must have~$\alpha = \bar{\alpha} + \delta \alpha + \Ocal(\delta^2)$ and~$\beta = \delta \beta + \Ocal(\delta^2)$, with~$\delta \beta$ finite at~$\partial \Sigma$.  Since we also have that the extrinsic curvature~${\overline{K}^a}_{bc}$ of~$\Hcal$ vanishes, Simons' tensor for the perturbed surface~$\Sigma$ must be~$S_{ab} = K_{acd} {K_b}^{cd} = \Ocal(\delta^2)$, and hence~$Q_{kk} = \delta R_{kk} + \Ocal(\delta^2)$.  Thus from~\eqref{eqs:scalarcodimtwo}, we have to linear order in~$\delta g_{ab}$ that the perturbation~$\delta \beta$ obeys
\be
\label{eq:perturbedbeta}
-D^2_\Hcal \delta \beta + \frac{d-2}{l^2} \, \delta \beta = \bar{\alpha} \, \delta R_{kk},
\ee
where we emphasize that the Laplacian~$D^2_\Hcal$ is still the Laplacian on the \textit{unperturbed} Rindler horizon~$\Hcal$, and~$\bar{\alpha}$ is any solution to the unperturbed deviation equation~\eqref{eq:alphabetaAdS}.  (Also note that it doesn't matter whether we evaluate linearized equations like~\eqref{eq:perturbedbeta} on~$\Hcal$ or~$\Sigma$, since the difference will introduce subleading~$\Ocal(\delta^2)$ corrections.)

Next, divide~\eqref{eq:perturbedbeta} through by~$\bar{\alpha}$ to obtain
\be
\label{eq:deltabeta}
2\widetilde{\delta\beta} D^2_\Hcal \ln \bar{\alpha} - \overline{D}_a \left[\overline{D}^a \widetilde{\delta \beta} + 2 \widetilde{\delta \beta} \, \overline{D}^a \ln \bar{\alpha} \right] = \delta R_{kk},
\ee
where~$\widetilde{\delta \beta} \equiv \delta \beta/\bar{\alpha}$ and the second term on the left-hand side is a divergence taken on the unperturbed surface~$\Hcal$.  Integrating this equation over~$\Hcal$, we obtain
\be
\label{eq:intRkk}
\int_\Hcal \delta R_{kk} \, \bm{\eps} = 2\int_\Hcal \widetilde{\delta\beta} D^2_\Hcal \ln \bar{\alpha} \, \bm{\eps} - \int_{\partial \Hcal} \overline{N}^a \left[D_a \widetilde{\delta \beta} + 2 \widetilde{\delta \beta} \, D_a \ln \bar{\alpha} \right] \, ^\partial \! \bm{\eps},
\ee
where~$\overline{N}^a$ is the outward-pointing unit normal to~$\partial \Hcal$ in~$\Hcal$ and~$\, ^\partial \! \bm{\eps}$ is the natural volume element on~$\partial \Hcal$ (and the second integral should be interpreted in an appropriate limiting sense, since~$\partial \Hcal$ is an asymptotic boundary).  Now, since at~$\partial \Sigma$~$\eta^a$ is nowhere chronal and points everywhere into the same boundary causal diamond, extremal wedge nesting requires that this be true everywhere on~$\Sigma$ as well.  In particular, this requires that both~$\alpha$ and~$\beta$ be everywhere non-negative, enforcing that~$\widetilde{\delta \beta} \geq 0$.  Moreover, it is also easy to check that in the coordinates~\eqref{eq:AdS}, the function~$\bar{\alpha} = \cosh\chi$ solves~\eqref{eq:alphabetaAdS} and also satisfies~$D^2_\Hcal \ln \bar{\alpha} \geq 0$ (for~$d \geq 3$).  For this choice of~$\bar{\alpha}$, then, extremal wedge nesting implies that the first term on the right-hand side of~\eqref{eq:intRkk} is non-negative.

To get more control over the second integral in~\eqref{eq:intRkk}, note that since~$\bar{\alpha}$ diverges at~$\partial \Hcal$ and~$\delta \beta$ does not,~$\widetilde{\delta \beta}$ must vanish at~$\partial \Hcal$.  Since~$1/\bar{\alpha}$ vanishes at~$\partial \Hcal$, we must be able to write 
\be
\widetilde{\delta \beta} = b/\bar{\alpha}^p
\ee
for some~$p > 0$ and where we impose that at~$\partial \Hcal$,~$b$ is not identically zero but its normal derivative~$\widetilde{N}^a D_a b$ is, with~$\widetilde{N}^a \equiv \bar{\alpha} \overline{N}^a$ the unit normal vector to~$\partial \Hcal$ in~$\Hcal$ with respect to the compactified metric~$\tilde{g}_{ab} \equiv \bar{g}_{ab}/\bar{\alpha}^2$ (and for simplicity we assume~$p$ is just a number, though it is straightforward to generalize to the case where~$p$ varies along~$\partial \Hcal$).  Also note that since~$\widetilde{\delta \beta} \geq 0$, we have~$b \geq 0$.  Using this decomposition, the last term in~\eqref{eq:intRkk} becomes
\be
\label{eq:int1}
\lim_{\vareps \to 0} \int_{\partial \Hcal_\vareps} \frac{\widetilde{N}^a}{\bar{\alpha}^p} \left((2-p) b \, D_a\left(\frac{1}{\bar{\alpha}}\right) - \frac{1}{\bar{\alpha}} \, D_a b\right)\, ^\partial \! \bm{\eps},
\ee
where~$\partial \Hcal_\vareps$ is a cutoff surface characterized by some parameter~$\vareps$ such that as~$\vareps \to 0$,~$\partial \Hcal_\vareps \to \partial \Hcal$.  Now, because~$1/\bar{\alpha}$ vanishes at~$\partial \Hcal$ and~$\widetilde{N}^a$ is outward-pointing, we have~$\widetilde{N}^a D_a (1/\bar{\alpha}) < 0$ at~$\partial \Hcal$, and thus the first term above will be non-negative as long as~$p > 2$.  Moreover, since~$\widetilde{N}^a D_a b$ vanishes at~$\partial \Hcal$, the second term above is subleading as the cutoff is removed, so we can neglect it (as long as~$p \neq 2$).  Finally, by conformally compactifying the implied volume element in~\eqref{eq:int1} by rescaling it by a factor of~$\bar{\alpha}^{d-3}$, it straightforward to show that the entire integral will vanish in the limit~$\vareps \to 0$ if~$p > d-3$.  Thus we conclude that the entire integral will be non-negative as long as~$p > \min(2, d-3)$.

Thus for~$p > \min(2, d-3)$, we find that extremal wedge nesting implies that both integrals on the right-hand side of~\eqref{eq:intRkk} are non-negative, and hence the bound~\eqref{eq:Rkkbound} holds (where as mentioned above, we switch from integrating over~$\Hcal$ in~\eqref{eq:intRkk} to~$\Sigma$ in~\eqref{eq:Rkkbound} since the difference introduces subleading corrections).  In fact, it is straightforward to check that the first integral on the right-hand side of~\eqref{eq:intRkk} is \textit{finite} as long as~$p > d-3$, which in turn implies that the second integral vanishes.  Thus the integral in~\eqref{eq:Rkkbound} is finite if and only if~$p > d-3$; since this value lies in the range~$p > \min(2, d-3)$ for which~\eqref{eq:Rkkbound} holds, we have shown that the bound holds whenever the integral is finite.
\end{proof}

It's worth understanding in somewhat more physical terms what conditions are required to ensure the sufficient falloff of~$\delta R_{kk}$ to render the integral finite.  To that end, consider some Fefferman-Graham expansion of the near-boundary metric:
\be
ds^2 = \frac{l^2}{z^2}\left[\sum_{\mu,\nu = 1}^{d-1} \left(g^{(0)}_{\mu\nu} + z^q g^{(q)}_{\mu\nu} + \Ocal(z^{q+1})\right) dx^\mu \, dx^\nu + dz^2\right],
\ee
where~$q > 0$.  One can then show that if the above metric is asymptotically AdS (so that~$g^{(0)}_{\mu\nu}$ is conformally flat), the asymptotic falloff of~$R_{kk}$ is~$\Ocal(z^q)$, and therefore the integral in~\eqref{eq:Rkkbound} is finite as long as~$q > d-3$.  Now, for pure metric perturbations, the Einstein equation implies that~$q = d-1$, which is compatible with the constraint~$q > d-3$, and thus the integral in~\eqref{eq:Rkkbound} is finite for any purely gravitational perturbation.  In the presence of matter, however, finiteness of~\eqref{eq:Rkkbound} imposes nontrivial constraints: for instance, in the case of a massive bulk scalar field with leading behavior~$\phi = \Ocal(z^\Delta)$, the relevant part of the stress tensor goes like~$T_{kk} = \Ocal(z^{2\Delta})$, and therefore sufficient falloff requires (see e.g.~\cite{FisKel12})
\be
\Delta > \frac{d-3}{2}.
\ee
Now, the possible values of~$\Delta$ for a scalar field of mass~$m$ are~$\Delta_\pm = (d-1)/2 \pm \nu$, where~$\nu = \sqrt{(d-1)^2/4 + (ml)^2}$ is required to be non-negative by the Breitenlohner-Freedman bound; note therefore that~$\Delta_+ > (d-3)/2$ always, while~$\Delta_- > (d-3)/2$ requires~$\nu < 1$.  Thus we find that finiteness of the bound~\eqref{eq:Rkkbound} requires any operators with~$\nu \geq 1$ (when only standard quantization is permitted, in which the coefficient of the~$z^{\Delta_-}$ term is the source and the coefficient of the~$z^{\Delta_+}$ term is the response) must have no source turned on, while sources may be turned on for operators with~$0 \leq \nu < 1$ (where alternate quantization is allowed).

The generalization of Propostion~\ref{prop:classicalAdS} to include quantum effects is then quite straightforward:
\begin{prop}
\label{prop:quantumAdS}
Consider a spacetime~$(M,g_{ab})$ whose metric is a linear perturbation~$\delta g_{ab}$ of the metric~$\bar{g}_{ab}$ of pure AdS, with~$\delta g_{ab}$ of~$\Ocal(\hbar)$.  As above, let~$\Hcal$ be the bifurcation surface of any Rindler horizon of the pure AdS background and let~$\bar{\alpha}$ be any solution to~\eqref{eq:alphabetaAdS} with~$D^2_\Hcal \ln \bar{\alpha} \geq 0$.  Also let~$\Sigma$ be the surface anchored to~$\partial \Hcal$ which is quantum extremal with respect to the generalized entropy~$S_\mathrm{gen}$ in the perturbed spacetime~$(M,g_{ab})$, and again let~$k^a = \bar{k}^a + \delta k^a$ be its null normal, with~$\bar{k}^a$ as above.  Then if quantum extremal wedge nesting holds,
\be
\label{eq:smearedQFC}
\int_\Sigma \int_\Sigma \frac{\bar{\alpha}(p)}{\bar{\alpha}(p')} \frac{\Dcal^2 S_\mathrm{gen}}{\Dcal \Sigma^a(p) \Dcal \Sigma^b(p')} k^a(p) k^b(p') \, \epsb(p) \epsb(p') \leq 0 + \Ocal(\hbar).
\ee
\end{prop}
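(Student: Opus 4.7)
The plan is to generalize the proof of Proposition~\ref{prop:classicalAdS} by substituting the unsourced equation of quantum extremal deviation~\eqref{eq:unsourcedquantumextremaldeviation} for the classical Jacobi equation and then tracking the additional nonlocal $S_\mathrm{out}$ piece through every step. First, I would consider a one-parameter family of \emph{quantum} extremal surfaces in the perturbed spacetime with deviation vector chosen so that $\eta^a|_{\partial\Sigma}\propto k^a$, and decompose $\eta_\perp^a = \alpha k^a + \beta \ell^a$. At zeroth order ($\hbar\to 0$) this reduces to the pure-AdS Rindler problem, giving $\bar{\beta}=0$ and $\bar{\alpha}$ solving~\eqref{eq:alphabetaAdS}. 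Since $\Hcal$ is totally geodesic in pure AdS, $K^a_{bc}=\Ocal(\hbar)$ on $\Sigma$, so $S_{kk}=\Ocal(\hbar^2)$, $\chi^a=\Ocal(\hbar)$, $Q_{k\ell}=-(d-2)/l^2+\Ocal(\hbar)$, and $Q_{kk}=\delta R_{kk}+\Ocal(\hbar^2)$.

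Linearizing the $k$-component of~\eqref{eq:unsourcedquantumextremaldeviation} in the perturbation then yields
\begin{equation*}
-D^2_\Hcal\,\delta\beta + \tfrac{d-2}{l^2}\,\delta\beta \;=\; \bar{\alpha}(p)\,\delta R_{kk}(p) \;-\; 4G_N\hbar\!\int_\Hcal \bar{k}_b(p)\,\bar{k}^c(p')\,\bar{\alpha}(p')\,\frac{\Dcal^2 S_\mathrm{out}}{\Dcal\Sigma^c(p')\,\Dcal\Sigma^b(p)}\,\bm{\eps}(p').
\end{equation*}
I would then adapt the classical manipulation: divide by $\bar{\alpha}(p)$, use the background Jacobi equation to rewrite the left-hand side as $2\widetilde{\delta\beta}\,D^2_\Hcal\ln\bar{\alpha}-\overline{D}_a[\overline{D}^a\widetilde{\delta\beta}+2\widetilde{\delta\beta}\,\overline{D}^a\ln\bar{\alpha}]$ with $\widetilde{\delta\beta}\equiv\delta\beta/\bar{\alpha}$, and integrate over $\Hcal$. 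Quantum extremal wedge nesting forces $\delta\beta\geq 0$ and hence $\widetilde{\delta\beta}\geq 0$, and the hypothesis $D^2_\Hcal\ln\bar{\alpha}\geq 0$ then makes the bulk contribution non-negative. Dropping the $\partial\Hcal$ integral via the same falloff analysis as in the classical proof (writing $\widetilde{\delta\beta}=b/\bar{\alpha}^p$ with $p>\min(2,d-3)$) gives
\begin{equation*}
\int_\Hcal \delta R_{kk}\,\bm{\eps} \;\geq\; 4G_N\hbar\!\int_\Hcal\!\!\int_\Hcal \frac{\bar{\alpha}(p')}{\bar{\alpha}(p)}\,\bar{k}^a(p)\,\bar{k}^b(p')\,\frac{\Dcal^2 S_\mathrm{out}}{\Dcal\Sigma^a(p)\,\Dcal\Sigma^b(p')}\,\bm{\eps}(p)\,\bm{\eps}(p').
\end{equation*}

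It then remains to recognize the left-hand side as a smeared second area variation. Using $\Dcal A/\Dcal\Sigma^a = K_a$ together with the identity $k^a J(fk)_a = -fQ_{kk}$ from~\eqref{eq:Raychaudhuri}, and smearing with $\bar{\alpha}(p)$ and $1/\bar{\alpha}(p')$, I obtain
\begin{equation*}
\int_\Hcal\!\!\int_\Hcal \frac{\bar{\alpha}(p)}{\bar{\alpha}(p')}\,k^a k^b\,\frac{\Dcal^2 A}{\Dcal\Sigma^a(p)\,\Dcal\Sigma^b(p')}\,\bm{\eps}\,\bm{\eps} \;=\; -\!\int Q_{kk}\,\bm{\eps} \;=\; -\!\int\delta R_{kk}\,\bm{\eps}+\Ocal(\hbar^2),
\end{equation*}
where symmetry of the second functional derivative under $(a,p)\leftrightarrow(b,p')$ lets me freely interchange the asymmetric smearing so that both integrals appear with identical weighting. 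Substituting this identity into the preceding inequality, reassembling the area and $S_\mathrm{out}$ pieces into $\Dcal^2 S_\mathrm{gen}/(\Dcal\Sigma^a\,\Dcal\Sigma^b)$ via~\eqref{eq:Sgen}, and dividing by $4G_N\hbar$ reproduces~\eqref{eq:smearedQFC}, with the advertised $\Ocal(\hbar)$ error arising from $\Ocal(\hbar^2)/\hbar$.

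The main technical obstacle is the boundary analysis at $\partial\Hcal$: in the classical case the asymptotic behavior of $\widetilde{\delta\beta}$ was fixed by the local structure of $\bar{\alpha}$ together with the asymptotic falloff of the local source $\bar{\alpha}\,\delta R_{kk}$, but now the source also contains the tail of the nonlocal kernel $\bar{k}_b\bar{k}^c\,\Dcal^2 S_\mathrm{out}/(\Dcal\Sigma^c\Dcal\Sigma^b)$ smeared against $\bar{\alpha}$ as either argument approaches the boundary. I expect the correct falloff requirement to coincide with the condition that renders the double integral in~\eqref{eq:smearedQFC} finite, so that the bound is automatically non-trivial precisely where it is well-posed, in close parallel with the role of finiteness of $\int\delta R_{kk}\,\bm{\eps}$ in the classical case.
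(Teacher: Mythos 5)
Your argument follows the paper's proof essentially step for step: linearize the $k$-contraction of the unsourced quantum extremal deviation equation around the Rindler background, use quantum EWN to conclude $\delta\beta \geq 0$, divide by $\bar\alpha$ and integrate exactly as in the classical case, and then convert $-\int\delta R_{kk}$ into a smeared $\mathcal{D}^2 A/\mathcal{D}\Sigma\,\mathcal{D}\Sigma$ via the Raychaudhuri relation $k^a(p)k^b(p')\,\mathcal{D}^2A/\mathcal{D}\Sigma^a(p)\,\mathcal{D}\Sigma^b(p') = -\delta(p,p')\,\delta R_{kk} + \mathcal{O}(\hbar^2)$. The only cosmetic difference is that you explicitly invoke symmetry of the second functional derivative to swap the $\bar\alpha(p)/\bar\alpha(p')$ and $\bar\alpha(p')/\bar\alpha(p)$ weightings, where the paper reaches the same conclusion by a silent relabeling of $(p,a)\leftrightarrow(p',b)$ — these are equivalent. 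Your flagged concern about the boundary falloff analysis in the presence of the nonlocal $\mathcal{D}^2 S_\mathrm{out}$ source is legitimate; the paper does not re-examine it and simply asserts ``the same arguments'' carry over, with finiteness left as a hypothesis, which matches the posture you adopt.
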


\begin{proof}
As for the proof of Proposition~\ref{prop:classicalAdS}, consider a deviation vector between quantum extremal surfaces which corresponds to a deformation of~$\partial \Sigma$ along~$k^a$.  In the pure AdS background and with~$\Ocal(\hbar)$ corrections turned off, we again have~$\Sigma = \Hcal$, with~$\bar{\beta} = 0$ and~$\bar{\alpha}$ a solution to~\eqref{eq:alphabetaAdS}.  Thus including corrections linear in~$\hbar$ we have~$\alpha = \bar{\alpha} + \delta \alpha + \Ocal(\hbar^2)$,~$\beta = \delta \beta + \Ocal(\hbar^2)$.

Now, decomposing~$\eta^a = \alpha k^a + \beta \ell^a$ and contracting the unsourced equation of quantum extremal deviation~\eqref{eq:unsourcedquantumextremaldeviation} with~$k^a$, we obtain
\begin{multline}
0 = -D^2_\Hcal \delta \beta(p) + \frac{d-2}{l^2} \, \delta \beta(p) -\bar{\alpha}(p) \, \delta R_{kk}(p) + \\ 4G_N \hbar \int_\Hcal \bar{\alpha}(p') \bar{k}^a(p) \bar{k}^b(p') \frac{\Dcal^2 S_\mathrm{out}}{\Dcal \Sigma^b(p') \Dcal \Sigma^a(p)} \, \epsb(p') + \Ocal(\hbar^2),
\end{multline}
which is just a quantum-corrected version of~\eqref{eq:perturbedbeta}.  Quantum extremal wedge nesting still requires~$\delta \beta \geq 0$; then dividing through by~$\bar{\alpha}(p)$ and integrating over~$\Hcal$, from the same arguments as in the proof of Proposition~\ref{prop:classicalAdS} we conclude that the first two terms are non-negative, and we obtain
\be
\label{eq:int2}
\int_\Hcal \left[-\delta R_{kk}(p) + 4G_N \hbar \int_\Hcal \frac{\bar{\alpha}(p')}{\bar{\alpha}(p)} k^a(p) k^b(p') \frac{\Dcal^2 S_\mathrm{out}}{\Dcal \Sigma^b(p') \Dcal \Sigma^a(p)} \, \epsb(p')\right] \epsb(p) \leq 0 + \Ocal(\hbar^2).
\ee
But~$\delta R_{kk}$ is related to the second functional derivative of the area: from the Raychaudhuri equation and properties~\ref{cond:scalar} and~\ref{cond:compatible} of the definition of the covariant functional derivative, we have that the functional derivative of the \textit{classical} expansion~$\theta^{(k)} = K_a k^a$ is\footnote{The letter~$\delta$ is playing double duty as a Dirac delta function~$\delta(p,p')$ and as a variation like~$\delta R_{kk}$; we assume context is sufficient to distinguish these two roles.}
\be
k^a(p) \frac{\Dcal \theta^{(k)}(p')}{\Dcal \Sigma^a(p)} = \delta(p,p') k^a \grad_a \theta^{(k)} = - \delta(p,p') \delta R_{kk} + \Ocal(\hbar^2),
\ee
where we noted that the square of the shear and expansions contribute at~$\Ocal(\hbar^2)$ in this perturbative setup.  But since~$K_a = \Dcal A/\Dcal \Sigma^a$, then taking~$k^a$ to be affinely-parametrized (so that~$k^a(p) \Dcal k^b(p')/\Dcal \Sigma^b(p) = \delta(p,p') k^a \grad_a k^b = 0$) we find
\be
k^a(p) k^b(p') \frac{\Dcal^2 A}{\Dcal \Sigma^a(p) \Dcal \Sigma^b(p')} = - \delta(p,p') \delta R_{kk} + \Ocal(\hbar^2),
\ee
which allows us to write
\be
-\int_\Hcal \delta R_{kk}(p) = \int_\Hcal \int_\Hcal \frac{\bar{\alpha}(p')}{\bar{\alpha}(p)} k^a(p) k^b(p') \frac{\Dcal^2 A}{\Dcal \Sigma^b(p') \Dcal \Sigma^a(p)} \, \epsb(p')\, \epsb(p).
\ee
Inserting this expression into~\eqref{eq:int2} and dividing through by an overall factor of~$4G_N \hbar$, we obtain the bound~\eqref{eq:smearedQFC}.
\end{proof}

Some comments are in order.  First,~\eqref{eq:smearedQFC} is uninteresting unless it is finite; the necessary conditions on the geometry were already examined in the classical context of Proposition~\ref{prop:classicalAdS}, but we must in addition ensure that the matter entropy~$S_\mathrm{out}$ is sufficiently well-behaved asymptotically.  Second, since~$k^a \Dcal S_\mathrm{gen}/\Dcal \Sigma^a$ is just the quantum expansion~$\Theta^{(k)}$ associated to~$k^a$, the bound~\eqref{eq:smearedQFC} is a smeared version of the quantum focusing conjecture (QFC), which states that~$k^a(p) \Dcal \Theta^{(k)}(p')/\Dcal \Sigma^a(p) \leq 0$ for all~$p$,~$p'$.  Indeed, as pointed out by~\cite{Lei17}, it is natural to define the QFC in terms of some smearing over~$\Sigma$; this picture is also natural from the perspective that functional derivatives are distributional and thus should always be interpreted as being smeared against test functions.  Note that under assumption of the usual semiclassical Einstein equation~$G_{ab} = 8\pi G_N \langle T_{ab}\rangle$ (recently derived in~\cite{HaeMin19} for certain classes of states from consistency of holographic entanglement entropy), the bound~\eqref{eq:smearedQFC} can, in fact, be derived from the quantum null energy condition~\cite{BouFis15}, proven to hold for free fields on Killing horizons by~\cite{BouFis15b}\footnote{We thank Don Marolf for bringing this point to our attention.}.  An aspect of the novelty of Proposition~\ref{prop:quantumAdS}, however, is in showing that the QFC is intimately tied to the consistency of subregion/subregion duality.  Moreover, since our derivation made no assumptions about the explicit form of the dynamics, it may be viewed as evidence in favor of this form of the semiclassical Einstein equation for a general class of states.

\section*{Acknowledgments}

It is a pleasure to thank R. Bousso, J. Camps, V. Chandrasekaran, M. Dafermos, X. Dong, T. Faulkner, Z. Fisher, G. Horowitz, T. Jacobson, N. Kamran, C. Keeler, N. Lashkari, A. Levine, A. Maloney, D. Marolf, H. Maxfield, R. Myers, F. Pretorius, I. Rodnianski, A. Speranza, M. van Raamsdonk, and H. Verlinde for helpful discussions.   NE is supported by the Princeton University Gravity Initiative and by NSF grant No. PHY-1620059. SF acknowledges the support of the Natural Sciences and Engineering Research Council of Canada (NSERC), funding reference number SAPIN/00032-2015, and of a grant from the Simons Foundation (385602, AM).

\appendix

\section{Variation Formulas}
\label{app:variations}

\subsection{Area}
\label{subsec:areavar}

Here we derive the expressions for first variation of the area of an arbitrary surface and the second variation of the area of an extremal surface.  A more general treatment, which applies to general geometric functionals, can be found in Appendix~C of~\cite{FisWis16} (which is a more formal version of the so-called calculus of moving surfaces~\cite{Gri13}).

Consider a one-parameter family of surfaces~$\Sigma(\lambda)$ in an ambient geometry with metric~$g_{ab}$ (which for now we take to be~$\lambda$-independent).  The area of these surfaces is given by the functional
\be
A(\lambda) \equiv A[\Sigma(\lambda)] = \int_{\Sigma(\lambda)} \bm{\eps},
\ee
where~$\bm{\eps}$ is the natural volume form on~$\Sigma(\lambda)$.  As in Section~\ref{sec:classical}, we may extend~$\bm{\eps}$ to a field on the surface swept out by the~$\Sigma(\lambda)$ as~$\lambda$ is varied; then converting to the passive picture, we may equivalently express~$A(\lambda)$ as
\be
A(\lambda) = \int_\Sigma \phi_{-\lambda}^* \bm{\eps},
\ee
where~$\phi_\lambda$ is a one-parameter group of diffeomorphisms that map~$\Sigma$ to~$\Sigma(\lambda)$ and $\phi^*_{-\lambda}$ is the pullback to~$\Sigma$.  The derivative of~$A(\lambda)$ can then be evaluated by a Lie derivative:
\be
\left. \frac{dA}{d\lambda} \right|_{\lambda = 0} = \int_\Sigma \pounds_\eta \bm{\eps} = \int_\Sigma \iota_\eta d \bm{\eps} + \int_{\partial \Sigma} \iota_\eta \bm{\eps},
\ee
where we used Cartan's formula~$\pounds_\eta \bm{f} = \iota_\eta d \bm{f} + d \iota_\eta \bm{f}$ for any form~$\bm{f}$ (with~$d$ the exterior derivative and~$\iota_\eta$ the interior derivative, i.e.~the contraction of~$\eta^a$ with the first index of~$f_{a_1 \cdots a_s}$) followed by an application of Stokes' theorem.  This expression can be simplified by noting that since any~$n$-form tangent to~$\Sigma$ must be proportional to~$\bm{\eps}$, we may write~$\iota_\eta d\bm{\eps} = \alpha \bm{\eps} + \cdots$ for some~$\alpha$, where the ellipsis denotes terms whose projection tangent to~$\Sigma$ vanishes.  Contracting this expression with~$\eps^{a_1 \cdots a_s}$, we obtain~$\alpha = \eta^a K_a$.  Likewise, if~$\partial \Sigma$ is nongenerate (which we shall assume), there is also a unique volume form~$^\partial \! \bm{\eps}$ on~$\partial \Sigma$.  $\iota_\eta \bm{\eps}$ must be proportional to this volume form, and indeed it is straightforward to show that~$\iota_\eta \bm{\eps} = \eta^a N_a \, ^\partial \! \bm{\eps}$, where~$N^a$ is the unit normal to~$\partial \Sigma$ in~$\Sigma$, taken to be outward- (inward-)pointing if~$N^a$ is spacelike (timelike).  Put together, these results yield the first area variation formula
\be
\label{eq:firstareavar}
\left. \frac{dA}{d\lambda} \right|_{\lambda = 0} = \int_\Sigma \eta^a K_a \, \bm{\eps} + \int_{\partial \Sigma} \eta^a N_a \, ^\partial \! \bm{\eps}.
\ee
Here we can immediately recover the fact that surfaces which are stationary points of the area functional have vanishing mean curvature:~$dA/d\lambda = 0$ for any perturbation~$\eta^a$ (obeying appropriate boundary conditions, if~$\Sigma$ has a boundary) if and only if~$K_a = 0$ everywhere.

To obtain the second area variation formula for extremal surfaces, consider an arbitrary two-parameter family of surfaces~$\Sigma(\lambda_1, \lambda_2)$ in a fixed background~$g_{ab}$ with the property that~$\Sigma \equiv \Sigma(0,0)$ is extremal with respect to~$g_{ab}$ (it is quite simple to generalize this result to the case where~$g_{ab}$ varies as well).  Thus first fixing~$\lambda_1$ and varying~$\lambda_2$, we have from~\eqref{eq:firstareavar} that
\be
\frac{\partial A(\lambda_1, 0)}{\partial \lambda_2} = \int_\Sigma \eta_2^a K_a \, \bm{\eps} + \int_{\partial \Sigma} \eta_2^a N_a \, ^\partial \! \epsb.
\ee
Now we take another derivative in~$\lambda_1$: using the fact that~$\Sigma$ is extremal, we have simply
\be
\frac{\partial^2 A(0,0)}{\partial \lambda_1 \partial \lambda_2} = \int_\Sigma \eta_2^a \left. \frac{\partial K_a}{\partial \lambda_1} \right|_{\lambda_1 = 0} \bm{\eps} + \mathrm{b.t.},
\ee
where we will discuss the boundary term b.t.~momentarily.  Thus using~\eqref{eq:Kdotgeneral} for the derivative~$\partial K_a/\partial \lambda_1$ (and again the fact that~$K_a = 0$), we obtain
\be
\frac{\partial^2 A(0,0)}{\partial \lambda_1 \partial \lambda_2} = \int_\Sigma \eta_{2,\perp}^a J (\eta_{1,\perp})_a \, \bm{\eps} + \mathrm{b.t.}
\ee
The boundary term is not needed anywhere in this paper, and since its computation is rather cumbersome we will not show it here.  For completeness, we simply state the result, which can be found in~\cite{BaoCao19}: taking both~$\eta_1^a$ and~$\eta_2^a$ to be normal to~$\partial \Sigma$ (though not necessarily to~$\Sigma$), the general second variation formula for extremal surfaces is
\begin{multline}
\label{eq:boundarysecondareavar}
\frac{\partial^2 A(0,0)}{\partial \lambda_1 \partial \lambda_2} = \int_\Sigma \eta_{2,\perp}^a J (\eta_{1,\perp})_a \, \bm{\eps} + \int_{\partial \Sigma} \left[\eta_{2,\perp}^a N^b D_b (\eta_{1,\perp})_a + N_a \eta_1^b \grad_b \eta_2^a \right. \\ \left. + \, ^\partial \! K_a N_b \left(\eta_1^a \, \eta_2^b + \eta_2^a \, \eta_1^b - g^{ab} p_{cd} \, \eta_1^c \, \eta_2^d\right)\right] \, ^\partial \! \bm{\eps},
\end{multline}
where~$^\partial \! K_a$ is the mean curvature of~$\partial \Sigma$ and~$p_{ab} = N^2 N_a N_b$ is the normal projector to~$\partial \Sigma$ in~$\Sigma$ (here~$N^2 = \pm 1$ is just a sign).

Ignoring the boundary term, it is natural to interpret the second variation formula~\eqref{eq:extremalsecondareavar} as an inner product: for any tensors~$u^{a_1 \cdots a_k}$,~$v^{a_1 \cdots a_k}$ in~$T_\Sigma M$, we define
\be
\label{eq:inprod}
\inprod{u}{v} \equiv \int_\Sigma u_{a_1 \cdots a_k} v^{a_1 \cdots a_k} \, \bm{\eps},
\ee
so we have that for perturbations of extremal surfaces,
\be
\label{eq:extremalsecondareavar}
\frac{\partial^2 A(0, 0)}{\partial \lambda_1 \partial \lambda_2} = \inprod{\eta_{2,\perp}}{J \eta_{1,\perp}}.
\ee
Note that the commutativity of partial derivatives~$\partial_{\lambda_1} \partial_{\lambda_2} A = \partial_{\lambda_2} \partial_{\lambda_1} A$ implies that~$J$ must be formally self-adjoint (that is, self-adjoint up to boundary terms) under this inner product.  This property can be seen directly from the definition~\eqref{subeq:Ldef}: the tensors~$S_{ab}$ and~$h^{cd} {P_a}^e {P_b}^f R_{cedf}$ are clearly symmetric, while for any~$u, v \in T^\perp_\Sigma M$ the Laplacian on the normal bundle obeys
\be
\inprod{u}{D^2 v} = \inprod{D^2 u}{v} + \int_\Sigma D_a(u_b D^a v^b - v_b D^a u^b) \, \bm{\eps} = \inprod{D^2 u}{v},
\ee
since~$u_b D^a v^b - v_b D^a u^b$ is tangent to~$\Sigma$ and thus the integrand is a divergence on~$\Sigma$.  Indeed, using this expression -- including the boundary terms -- and the fact that~$\eta_1^a$ and~$\eta_2^a$ commute since they are coordinate basis vectors, it is easy to see that the right-hand side of~\eqref{eq:boundarysecondareavar} is symmetric under the exchange~$\eta_1^a \leftrightarrow \eta_2^a$, as is required by the commutativity of the partial derivatives~$\partial/\partial \lambda_1$ and~$\partial/\partial \lambda_2$.

\subsection{Generalized Entropy}

The variation of the quantum analog of the second area variation formula~\eqref{eq:extremalsecondareavar} is very straightforward to obain: again, consider a two-parameter family of surfaces~$\Sigma(\lambda_1, \lambda_2)$ (in a fixed background geometry and state) such that~$\Sigma \equiv \Sigma(0,0)$ is quantum extremal.  Because~$S_\mathrm{gen}$ is only defined for Cauchy-splitting surfaces, which have no (finite) boundary, we will not bother keeping track of boundary terms.  Taking a first derivative of the generalized entropy of course yields
\be
\frac{\partial S_\mathrm{gen}(\lambda_1, 0)}{\partial \lambda_2} = \int_\Sigma \left(K_a + 4G_N \hbar \, \frac{\Dcal S_\mathrm{out}^{(\lambda_1)}}{\Dcal \Sigma^a} \right) \eta_2^a \, \bm{\eps};
\ee
another derivative (and using the fact that~$\Sigma$ is quantum extremal) gives
\be
\frac{\partial^2 S_\mathrm{gen}(0, 0)}{\partial \lambda_1 \partial \lambda_2} = \int_\Sigma \frac{\partial}{\partial \lambda_1} \left.\left(K_a + 4G_N \hbar \, \frac{\Dcal S_\mathrm{out}^{(\lambda_1)}}{\Dcal \Sigma^a} \right)\right|_{\lambda_1 = 0} \eta_2^a \, \bm{\eps}.
\ee
The derivative can be evaluated using~\eqref{eq:Kdotgeneral} and~\eqref{eq:DSderiv} (with~\eqref{eq:LieVdown}) with no sources turned on; again using the fact that~$\Sigma$ is quantum extremal, we obtain the formula for the second variation of the generalized entropy:
\be
\frac{\partial^2 S_\mathrm{gen}(0,0)}{\partial \lambda_1 \partial \lambda_2} = \int_\Sigma \eta_{2,\perp}^a \left[ J(\eta_{1,\perp})_a + 4 G_N \hbar \left(\int_\Sigma \frac{\Dcal^2 S_\mathrm{out}}{\Dcal \Sigma^b(p') \Dcal \Sigma^a} \, \eta_{1,\perp}^b(p') \bm{\eps}(p')\right) \right] \bm{\eps}.
\ee

\section{Functional Derivatives}
\label{app:functionalderiv}

\subsection{Functional Covariant Derivative}
\label{subapp:functionalcovariant}

Here we show that functional derivative operators obeying the conditions~\ref{cond:linearity}-\ref{cond:scalar} given in Section~\ref{subsec:functionalderiv} exist, and that the compatibility condition~\ref{cond:compatible} picks out a unique such derivative operator.  Just as one can show the existence of ordinary covariant derivative operators~$\grad_a$ by working with coordinate derivatives, here we show the existence of the covariant functional derivative by working with coordinate \textit{functional} derivatives.  To that end, consider an arbitrary coordinate system~$\{y^\alpha\}$,~$\alpha = 1, \ldots, n$ on~$\Sigma$ and an arbitrary coordinate system~$\{x^\mu\}$,~$\mu = 1, \ldots, d$ on (at least a portion of)~$M$; then the map~$\psi: \Sigma \to M$ which embeds~$\Sigma$ in~$M$ is described by the~$d$ embedding functions~$X^\mu(y)$\footnote{For simplicity here we assume that all of~$\Sigma$ can be covered with the single coordinate chart~$\{y^\alpha\}$; if this is not the case, the discussion generalizes straightforwardly by instead considering an atlas over~$\Sigma$.}.  Any tensor field~${V_{a_1 \cdots a_k}}^{b_1 \cdots b_l}$ which is a functional of~$\Sigma$ can therefore be expressed in this coordinate system as a functional of the~$X^\mu(y)$:
\be
{V_{a_1 \cdots a_k}}^{b_1 \cdots b_l}[\Sigma] \equiv {V_{a_1 \cdots a_k}}^{b_1 \cdots b_l}[X^\mu(y)].
\ee
We now define the covariant functional derivative of~${V_{a_1 \cdots a_k}}^{b_1 \cdots b_l}$ associated to this coordinate system, denoted by~$\Dcal {V_{a_1 \cdots a_k}}^{b_1 \cdots b_l}/\Dcal X^b$, as the tensor on~$\Sigma$ whose components in this coordinate system are given by
\be
\label{eq:coordfuncderiv}
\frac{\Dcal {V_{\mu_1 \cdots \mu_k}}^{\nu_1 \cdots \nu_l}}{\Dcal X^\sigma(y)} = \frac{1}{\sqrt{h(y)}} \sum_{\rho = 1}^d {P_\sigma}^\rho \, \frac{\delta {V_{\mu_1 \cdots \mu_k}}^{\nu_1 \cdots \nu_l}[X^\lambda(y)]}{\delta X^\rho(y)},
\ee
where~$h$ is the determinant of the components~$h_{\alpha\beta}$ of the induced metric in the coordinate system~$\{y^\alpha\}$ and~${P_\sigma}^\rho$ are the components of the normal projector~${P_a}^b$.  Note that the object on the right-hand side is now a sum of ordinary functional derivatives of the (scalar) components of~${V_{a_1 \cdots a_k}}^{b_1 \cdots b_l}$.  (For an ordinary functional~$F[\Sigma]$, this definition of~$\Dcal F/\Dcal X^a$ is essentially a vector version of the vector \textit{density}~$\delta F/\delta X^a$ defined in~\cite{EngWal14}.)

The object~\eqref{eq:coordfuncderiv} is manifestly normal to~$\Sigma$ in the index~$\sigma$, and therefore property~\ref{cond:normal} is immediately satisfied.  Since ordinary functional derivatives are linear and obey the Leibnitz rule, properties~\ref{cond:linearity} and~\ref{cond:leibnitz} are also satisfied by this functional derivative.  It is also easy to see that this definition satisfies property~\ref{cond:contraction}, commutativity with contraction.  To check property~\ref{cond:scalar}, i.e.~the functional variation of scalars and ordinary functionals, note that a one-parameter family of surfaces~$\Sigma(\lambda)$ is encoded in this coordinate system as a one-parameter family of embedding functions~$X^\mu(\lambda; y)$; the components of the deviation vector along this family are given by~$\eta^\mu = dX^\mu/d\lambda|_{\lambda = 0}$.  Moreover, the points~$p \in \Sigma$ and~$\phi_\lambda(p) \in \Sigma(\lambda)$ have the same~$y$ coordinate values; thus by the chain rule for ordinary functional derivatives, we have that for any scalar~$V[\Sigma](p_i) = V[X^\mu(y)](y_i)$,
\bea
\left. \frac{dV[X^\mu(\lambda;y)](y_i)}{d\lambda} \right|_{\lambda = 0} &= \int \sum_{\nu = 1}^d \frac{\delta V[X^\mu(y)](y_i)}{\delta X^\nu(y')}  \left. \frac{dX^\nu(\lambda; y')}{d\lambda}\right|_{\lambda = 0} \, d^n y', \\
		&= \int \sum_{\nu = 1}^d \frac{\Dcal V(y_i)}{\Dcal X^\nu(y')} \, \eta^\nu(y') \sqrt{h(y')} \, d^n y',
\eea
with the last equality holding because~$\eta^a$ is taken normal to~$\Sigma$.  Hence using the fact that the natural volume element on~$\Sigma$ can be written as~$\epsb = \sqrt{h} \, dy^1 \wedge \cdots \wedge dy^n$, we verify that property~\ref{cond:scalar} holds; thus we have shown the existence of covariant functional derivatives satisfying the four properties~\ref{cond:linearity}-\ref{cond:scalar}.

To show that the compatibility condition~\eqref{eq:compatible} is sufficient to uniquely fix a preferred functional covariant derivative, let us first note that for any (local) scalar~$F(p)$ and any dual vector~$V_a(p)$, and for any two covariant functional derivatives~$\Dcal/\Dcal \Sigma^a$ and~$\widetilde{\Dcal}/\widetilde{\Dcal} \Sigma^a$, we have
\be
\label{eq:DDtilde}
\left(\frac{\Dcal}{\Dcal \Sigma^a(p')} - \frac{\widetilde{\Dcal}}{\widetilde{\Dcal} \Sigma^a(p')}\right)(F(p) V_b(p)) = F(p) \left(\frac{\Dcal}{\Dcal \Sigma^a(p')} - \frac{\widetilde{\Dcal}}{\widetilde{\Dcal} \Sigma^a(p')}\right)V_b(p),
\ee
which follows from the Leibitz rule and the fact that by~\eqref{eq:scalarvar},~$\Dcal F/\Dcal \Sigma^a = \widetilde{\Dcal} F/\widetilde{\Dcal} \Sigma^a$.  This property shows that the difference between two covariant functional derivatives acting on a dual vector depends only locally on the value of that dual vector on~$\Sigma$.  There must therefore exist a tensor~${\Ccal^a}_{bc}(p,p',p'')$ such that
\be
\label{eq:Cdef}
\frac{\Dcal V_b(p)}{\Dcal \Sigma^a(p')} = \frac{\widetilde{\Dcal} V_b(p)}{\widetilde{\Dcal} \Sigma^a(p')} - \int_\Sigma {\Ccal^c}_{ba}(p'',p,p') V_c(p'') \, \epsb(p''),
\ee
where~\eqref{eq:DDtilde} requires that~${\Ccal^a}_{bc}(p,p',p'') = 0$ when~$p \neq p'$.  Note that the relationship between~$\Dcal/\Dcal \Sigma^a$ and~$\widetilde{\Dcal}/\widetilde{\Dcal} \Sigma^a$ when acting on higher-rank multilocal tensors can be inferred from~\eqref{eq:Cdef} by using the Leibnitz rule and the fact that~$\Dcal/\Dcal \Sigma^a$ and~$\widetilde{\Dcal}/\widetilde{\Dcal} \Sigma^a$ act the same on scalars; the result is analogous to the relationship between two different ordinary covariant derivatives~$\grad_a$ and~$\widetilde{\grad}_a$.  Explicitly, for a given multlilocal tensor~${V_{a_1 \cdots a_k}}^{b_1 \cdots b_l}$ we have
\begin{multline}
\frac{\Dcal {V_{a_1 \cdots a_k}}^{b_1 \cdots b_l}(p_i,q_i)}{\Dcal \Sigma^c(p')} = \frac{\Dcal {V_{a_1 \cdots a_k}}^{b_1 \cdots b_l}(p_i,q_i)}{\Dcal X^c(p')} \\ - \sum_{i = 1}^k \int_\Sigma {V_{a_1 \cdots d \cdots a_k}}^{b_1 \cdots b_l}(p_1, \ldots, p'', \ldots, p_k, q_i) \, {\Ccal^d}_{a_i c}(p'', p_i, p') \, \epsb(p'') \\ + \sum_{i = 1}^l \int_\Sigma {V_{a_1 \cdots a_k}}^{b_1 \cdots d \cdots b_l}(p_i, q_1, \ldots, q'', \ldots, q_l) \, {\Ccal^{b_i}}_{d c}(q_i, q'', p') \, \epsb(q''),
\end{multline}
where the~$p_i$ and~$q_i$ schematically label the points on whose tangent spaces the lower and upper indices act, respectively.

Now take~$\widetilde{\Dcal}/\widetilde{\Dcal} \Sigma^a$ to be the coordinate functional derivative~$\Dcal/\Dcal X^a$ associated to some coordinate system and consider a dual vector field~$v_a$ on~$M$; its restriction to any surface~$\Sigma$ is obtained by just evaluating~$v_a$ on~$\Sigma$.  It then follows that
\be
\frac{\Dcal v_\mu(y)}{\Dcal X^\nu(y')} = \frac{1}{\sqrt{h'}} \sum_{\sigma = 1}^d {P_\nu}^\sigma \frac{\delta v_\mu(X^\lambda(y))}{\delta X^\sigma(y')} = \delta(y,y') \sum_{\sigma = 1}^d {P_\nu}^\sigma \partial_\sigma v_\mu.
\ee
But~\eqref{eq:compatible} and~\eqref{eq:Cdef} imply
\begin{subequations}
\begin{multline}
\delta(y,y') \sum_{\sigma = 1}^d {P_\nu}^\sigma \grad_\sigma v_\mu = \frac{\Dcal v_\mu(y)}{\Dcal \Sigma^\nu(y')} \\ = \frac{\Dcal v_\mu(y)}{\Dcal X^\nu(y')} - \int_\Sigma {\Ccal^\sigma}_{\mu\nu}(y'', y,y') v_\sigma(y'') \sqrt{h(y'')} \, d^n y'',
\end{multline}
and thus
\begin{multline}
\delta(y,y') \sum_{\sigma = 1}^d {P_\nu}^\sigma \left( \partial_\sigma v_\mu - \sum_{\lambda = 1}^d {\Gamma^\lambda}_{\mu\sigma} v_\lambda\right) \\= \delta(y,y') \sum_{\sigma = 1}^d {P_\nu}^\sigma \partial_\sigma v_\mu - \int_\Sigma {\Ccal^\sigma}_{\mu\nu}(y'', y,y')  v_\sigma(y'') \sqrt{h(y'')} \, d^n y'',
\end{multline}
\end{subequations}
where we expressed the covariant derivative~$\grad_a$ in terms of the ordinary coordinate derivative and the Christoffel symbols of this coordinate system.  Requiring this expression to hold for all~$v_a$, we conclude that the connection~${\Ccal^a}_{bc}$ associated to some particular coordinate system is given by
\be
\label{eq:Cfunc}
{\Ccal^a}_{bc}(p,p',p'') = \delta(p,p') \delta(p',p'') {P_c}^d {\Gamma^a}_{bd}.
\ee
As promised, this fixes~${\Ccal^a}_{bc}$ -- and therefore~$\Dcal/\Dcal \Sigma^a$ -- uniquely.

\subsection{Functional Lie Derivative}
\label{subapp:functionalLie}

Now let us obtain a covariant expression for functional Lie derivatives in terms of the functional covariant derivative.  Such an expression was given in~\eqref{eq:Liescalar} for the Lie derivative of scalars; to determine the action of~$\pounds_\eta$ on general rank tensors, it is convenient to introduce a coordinate system adapted to the group of diffeomorphisms~$\phi_\lambda$ that defines~$\pounds_\eta$.  Thus assume temporarily that the family of surfaces~$\Sigma(\lambda)$ generated by~$\phi_\lambda$ do not intersect (for small~$\lambda$).  Then~$\eta^a$ must be nowhere-vanishing on~$\Sigma$ and we can introduce a coordinate system~$\{x^\mu\}$ on~$M$ in which~$\eta^a = (\partial_1)^a$.  The action of~$\phi_{-\lambda}$ thus corresponds to the coordinate transformation which sends~$x^1 \to x^1 + \lambda$ and leaves the other coordinates fixed, and thus the matrix of components~${(\phi^*_{-\lambda})^\mu}_\nu$ is just the identity.  If we introduce a coordinate system~$\{y^\alpha\}$ on~$\Sigma$, then the points~$p \in \Sigma$ and~$\phi_\lambda(p) \in \Sigma(\lambda)$ have the same~$y$ coordinate values, while the embedding functions~$X^\mu(\lambda; y)$ that define the family~$\Sigma(\lambda)$ are given by~$X^1(\lambda; y) = X^1(y) + \lambda$,~$X^{\mu \neq 1}(\lambda; y) = X^{\mu \neq 1}(y)$.  Thus the components of the pullback of any multilocal functional~${V_{a_1 \cdots a_k}}^{b_1 \cdots b_l}[\Sigma](p) \equiv {V_{a_1 \cdots a_k}}^{b_1 \cdots b_l}[X^\mu(y')](y)$ to~$\Sigma$ are
\be
\phi^*_{-\lambda}{V_{\mu_1 \cdots \mu_k}}^{\nu_1 \cdots \nu_l}[X^\sigma(y')](y_i) = {V_{\mu_1 \cdots \mu_k}}^{\nu_1 \cdots \nu_l}[X^1(y') + \lambda, X^{\sigma \neq 1}(y')](y_i).
\ee
Using the definition~\eqref{eq:functionalLie}, we thus find that in this coordinate system, the components of the Lie derivative of~${V_{a_1 \cdots a_k}}^{b_1 \cdots b_l}$ are
\be
\label{eq:Lietensorcoord}
\pounds_\eta {V_{\mu_1 \cdots \mu_k}}^{\nu_1 \cdots \nu_l} = \int \frac{\delta {V_{\mu_1 \cdots \mu_k}}^{\nu_1 \cdots \nu_l}}{\delta X^1(y')} \, d^n y'.
\ee
Since ordinary functional derivatives obey the Leibnitz rule, this guarantees that the functional Lie derivative does as well.

In the special case of a vector functional~$V^a[\Sigma](p) \equiv V^a[X^\mu(y)](y')$, we find
\be
\label{eq:LieVmu}
\pounds_\eta V^\mu = \int \frac{\delta V^\mu}{\delta X^1(y')} \, d^n y'.
\ee
On the other hand, consider the object
\be
\int_\Sigma \frac{\Dcal V^a}{\Dcal \Sigma^b(p')} \eta^b(p') \, \epsb(p') - V^b \grad_b \eta^a;
\ee
it is easy to see (using the connection~\eqref{eq:Cfunc} between the covariant functional derivative~$\Dcal/\Dcal \Sigma^a$ and the coordinate functional derivative~$\Dcal/\Dcal X^a$) that when~$\eta^a$ is normal to~$\Sigma$, the components of this object in this coordinate system are just equal to~\eqref{eq:LieVmu}.  Since both expressions are obtained from covariant definitions, we conclude that whenever~$\eta^a$ is normal to~$\Sigma$,
\be
\label{eq:LieVuppernormal}
\pounds_\eta V^a = \int_\Sigma \frac{\Dcal V^a}{\Dcal \Sigma^b(p')} \eta^b(p') \, \epsb(p') - V^b \grad_b \eta^a.
\ee
Let us now note that although this expression was obtained under the assumption that~$\eta^a$ be nonvanishing on~$\Sigma$, the case where~$\eta^a$ vanishes somewhere can be treated by introducing an appropriate atlas of coordinate systems corresponding to different regions of nonvanishing~$\eta^a$; then the coordinate expressions~\eqref{eq:Lietensorcoord} and~\eqref{eq:LieVmu} will consist of a sum of integrals over each chart, but the final covariant expression~\eqref{eq:LieVuppernormal} will remain unchanged.

If~$\eta^a$ is not normal to~$\Sigma$, we may decompose it into its normal and tangent pieces~$\eta_\perp^a$ and~$\eta_\parallel^a$ which we may interpret as generators of two different diffeomorphisms; then~$\pounds_\eta V^a = \pounds_{\eta_\perp} V^a + \pounds_{\eta_\parallel} V^a$, with~$\pounds_{\eta_\perp} V^a$ given by~\eqref{eq:LieVuppernormal}.  Since the diffeomorphism generated by~$\eta_\parallel^a$ doesn't change the image of~$\Sigma$ in~$M$, it must just act as a normal diffeomorphism of the vector field~$V^a$ on~$\Sigma$; we would therefore conclude that
\be
\label{eq:LieVupper}
\pounds_\eta V^a = \int_\Sigma \frac{\Dcal V^a}{\Dcal \Sigma^b(p')} \eta^b(p') \, \epsb(p') + \eta^b_\parallel \grad_b V^a - V^b \grad_b \eta^a.
\ee
Lie derivatives of higher-rank multilocal tensors are then fixed by~\eqref{eq:LieVupper},~\eqref{eq:Liescalar}, and the Leibnitz rule.  For instance, we must have for any~$V^a$ and~$U_a$ (and~$\eta^a$ not necessarily normal to~$\Sigma$)
\begin{subequations}
\be
\pounds_\eta(V^a U_a) = \int_\Sigma \frac{\Dcal (V_a U^a)}{\Dcal \Sigma^b(p')} \eta^b(p') \, \epsb(p') + \eta^b_\parallel \grad_b (V^a U_a),
\ee
\begin{multline}
U_a \pounds_\eta V^a + V^a \pounds_\eta U_a = \int_\Sigma \left(U_a \frac{\Dcal V^a}{\Dcal \Sigma^b(p')}+ V^a \frac{\Dcal U_a}{\Dcal \Sigma^b(p')} \right) \eta^b(p') \, \epsb(p') \\ + V^a \eta^b_\parallel \grad_b U_a + U_a \eta^b_\parallel \grad_b V^a,
\end{multline}
\end{subequations}
and thus using~\eqref{eq:LieVupper} we find
\be
\pounds_\eta U_a = \int_\Sigma \frac{\Dcal U^a}{\Dcal \Sigma^b(p')} \eta^b(p') \, \epsb(p') + \eta^b_\parallel \grad_b U_a + U_b \grad_a \eta^b.
\ee
The generalization to a multilocal tensor functional is straightforward: slightly schematically, if~${V_{a_1 \cdots a_k}}^{b_1 \cdots b_l}(p_1, \ldots, p_r)$ depends on~$r$ points on~$\Sigma$,
\begin{multline}
\pounds_\eta {V_{a_1 \cdots a_k}}^{b_1 \cdots b_l} = \int_\Sigma \frac{\Dcal {V_{a_1 \cdots a_k}}^{b_1 \cdots b_l}}{\Dcal \Sigma^c(p')} \eta^c(p') \, \epsb(p') + \sum_{i = 1}^r \eta^c_\parallel \grad_c^{(p_i)} {V_{a_1 \cdots a_k}}^{b_1 \cdots b_l} \\ + \sum_{i = 1}^k {V_{a_1 \cdots c \cdots a_k}}^{b_1 \cdots b_l} \grad_{a_i} \eta^c - \sum_{i = 1}^l {V_{a_1 \cdots a_k}}^{b_1 \cdots c \cdots b_l} \grad_c \eta^{b_i},
\end{multline}
where~$\eta^b_\parallel \grad_b^{(p_i)}$ denotes taking the directional derivative along~$\eta_\parallel^a$ of~${V_{a_1 \cdots a_k}}^{b_1 \cdots b_l}(p_1, \ldots, p_r)$ at the point~$p_i$ (so~$\eta^b_\parallel \grad_b^{(p_i)}$ ignores any indices that do not act on the tangent space~$T_{p_i} M$) and each of the objects~$\grad_a \eta^b$ is evaluated at the point corresponding to the index of~${V_{a_1 \cdots a_k}}^{b_1 \cdots c \cdots b_l}$ into which it is to be contracted.

\section{Green's Functions}
\label{app:Greens}

To compute the Green's function~$G(p,p')$ defined by~\eqref{eq:Greens} explicitly, first note that~$\Hcal$ is the hyperbolic ball, which is a maximally symmetric space; thus~$G(p,p')$ can depend only on the geodesic distance between its arguments.  In the coordinates of~\eqref{eq:hyperbolicball}, we note that the geodesic distance between the point~$\chi = 0$ and any other point~$(\chi, \Omega^i)$ is just~$l \chi$.  Thus~$G(0,\{\chi, \Omega^i\}) = G(\chi)$, which for~$\chi \neq 0$ solves
\be
D^2 G(\chi) - \frac{d-2}{l^2} G(\chi) = \frac{1}{l^2 \sinh^{d-3} \chi} \partial_\chi \left(\sinh^{d-3} \chi \, \partial_\chi G(\chi)\right) - \frac{d-2}{l^2} G(\chi) = 0.
\ee
The (unique) solution to this equation which vanishes as~$\chi \to \infty$ and is normalized to obey~\eqref{eq:Greens} is
\begin{multline}
G(\chi) = \frac{\Gamma(\frac{d-2}{2})}{2l^{d-4}(d-4)\pi^{(d-2)/2}} \cosh\chi \left[\frac{2\sqrt{\pi} \, \Gamma(\frac{6-d}{2}) \sec(\frac{d\pi}{2})}{\Gamma(\frac{3-d}{2})} \right. \\ \left. + \, _2 F_1\left(\frac{3-d}{2}, \frac{4-d}{2}, \frac{6-d}{2}; \tanh^2\chi\right)\tanh^{4-d} \chi \right].
\end{multline}
More explicitly, the Green's functions for the first few dimensions are
\bea
d = 3:& \quad G(\chi) = \frac{l}{2} \, e^{-|\chi|}, \\
d = 4:& \quad G(\chi) = -\frac{1}{2\pi} \left[1 + \cosh\chi \, \ln \left(\frac{\sinh\chi}{1+\cosh\chi} \right) \right], \\
d = 5:& \quad G(\chi) = \frac{1}{4\pi l} \, e^{-2\chi} \csch\chi, \\
d = 6:& \quad G(\chi) = \frac{1}{4\pi^2 l^2} \left[2 + \coth^2\chi + 3\cosh\chi \, \ln \left(\frac{\sinh\chi}{1+\cosh\chi} \right) \right], \\
d = 7:& \quad G(\chi) = \frac{1}{8\pi^2 l^3} \, e^{-3\chi} (3+\coth\chi)\csch^2\chi.
\eea

\bibliographystyle{jhep}
\bibliography{all}

\end{document}